\documentclass[aps,prx,amsmath,amssymb,groupedaddress,twocolumn,superscriptaddress,notitlepage,nofootinbib,longbibliography]{revtex4-2}
\pdfoutput=1

\usepackage[final]{graphicx}
\usepackage{times,bbm,amsmath,amssymb,amsthm}
\usepackage{epsfig,color}
\usepackage[dvipsnames]{xcolor}
\usepackage[colorlinks=true,citecolor=blue,linkcolor=blue]{hyperref}
\hypersetup{
    allcolors  = {blue},
}

\usepackage{cleveref}

\usepackage{float}
\usepackage[caption = false]{subfig}
\usepackage{verbatim}
\usepackage[greek,english]{babel}
\usepackage{thumbpdf,enumerate}
\usepackage{booktabs}
\usepackage{sidecap}
\usepackage[scaled=.8]{couriers}    
\usepackage{pstricks}
\usepackage{multirow}
\usepackage{placeins}
\usepackage{relsize}  
\usepackage{pst-grad,bm}
\usepackage{epigraph}
\usepackage{gensymb}
\usepackage{longtable}
\usepackage{booktabs}
\usepackage{gensymb}
\usepackage{enumitem}
\usepackage{amssymb}
\usepackage{soul}
\usepackage{ulem} 
\usepackage{acronym}
\usepackage{physics}
\usepackage{tikz}

\usepackage{todonotes}

\graphicspath{{Images/}}

\begin{document}

\author{Tommaso Francalanci}
\affiliation{Dipartimento di Fisica, Sapienza Universit\`{a} di Roma, Piazzale Aldo Moro 5, I-00185 Roma, Italy}

\author{Nicol\`o Spagnolo}
\affiliation{Dipartimento di Fisica, Sapienza Universit\`{a} di Roma, Piazzale Aldo Moro 5, I-00185 Roma, Italy}

\author{Mario Sigalotti}
\affiliation{Laboratoire Jacques-Louis Lions (LJLL), Inria, Sorbonne Universit\'e, Universit\'e de Paris, CNRS, Paris, France,}

\author{Eliott Z. Mamon}
\affiliation{Laboratoire d’Informatique de Paris 6, CNRS, Sorbonne Universit\'e, 4 Place Jussieu, 75005 Paris, France}

\author{Ulysse Chabaud}
\affiliation{DIENS, \'Ecole normale sup\'erieure, PSL University, CNRS, INRIA, 45 rue d’Ulm, Paris, 75005, France}

\author{Fabio Sciarrino}
\email{fabio.sciarrino@uniroma1.it}
\affiliation{Dipartimento di Fisica, Sapienza Universit\`{a} di Roma, Piazzale Aldo Moro 5, I-00185 Roma, Italy}

\title{Local controllability of heralded quantum linear optics}

\begin{abstract}
Photonic linear optical networks provide a versatile platform for quantum information processing and quantum state engineering. However, the set of states that can be generated using passive linear optics alone is fundamentally constrained by bosonic symmetries. Heralding, based on conditional measurements on auxiliary modes, is a widely used technique to overcome these limitations and effectively enlarge the set of accessible states. Despite the widespread use of heralding, it is often unclear how specific ancillary resources impact the overall reachability of the target space. In this work, we investigate the local controllability of photonic states in linear optical networks by analyzing the rank of the Jacobian of the output state with respect to the underlying unitary circuit, which provides a quantitative measure of the dimension of the accessible tangent space at a given configuration. Our analysis ranges from passive linear optics to heralded linear optics, where auxiliary resources and conditional measurements are included. Within this framework, we quantify how different resources enlarge the locally accessible state space beyond that of passive linear optics and determine the resources required for the Jacobian rank to reach its maximal value, thereby achieving full local controllability. As maximal local rank is a necessary condition for global reachability, our framework offers a systematic tool to assess and compare the accessible state space of measurement-based photonic architectures, and to establish practical criteria for the resources needed in high-dimensional quantum state engineering.
\end{abstract}

\maketitle
\section*{Introduction}

In recent years, photonic schemes for quantum computing have attracted increasing attention \cite{Kok2007}. A key milestone in this field is the seminal scheme of Knill, Laflamme, and Milburn (KLM) \cite{Knill2001}, which established that linear optics, single photons, photon-number measurements, and feed-forward are sufficient for universal quantum computation. Shortly after, several works clarified the underlying mechanism: conditional measurements on auxiliary modes generate effective photon–photon nonlinearities required for universal quantum computation, even though the underlying evolution remains strictly linear \cite{Lapaire2003, Scheel2003}. In parallel, linear optics has also played a central role in the study of quantum computational complexity. In particular, the Boson Sampling model \cite{Aaronson2011, Brod2019}, where one samples from the output distribution of indistinguishable photons evolving through a linear interferometer, was introduced as a candidate framework to demonstrate quantum computational speedup with linear optics. While standard Boson Sampling is based on passive linear optics (PLO) and photon counting, more recent works have explored the role of measurement-induced nonlinearities in this context, incorporating auxiliary photons and conditional detections to implement effective nonlinear transformations \cite{Chabaud2021, Spagnolo2023}. New architectures based on intermediate measurements and feed-forward have already been proposed, most notably Adaptive Linear Optics (ALO) \cite{Chabaud2021, Hoch2025} and State Injection (SI) \cite{Monbroussou2025, Monbroussou2025B}. These approaches aim to bridge the gap between Boson Sampling and universal quantum computing by progressively enhancing the computational power of linear optical systems through measurement-induced nonlinearities. This motivates the introduction of heralded linear optics (HLO), a general framework in which a global linear-optical evolution is followed by conditional measurements on a subset of modes, resulting in a heralded pure state on the remaining modes. In fact, at the level of fixed-outcome pure-state generation, the above architectures can be described within the HLO framework (see Appendix~\ref{app:other schemes}).

Beyond gate-based approaches \cite{Knill2002}, several works have directly addressed the problem of quantum state generation with heralding. A general framework for heralded state preparation, with applications to path-entangled states, was introduced in \cite{VanMeter2007}, where the problem is formulated as a system of polynomial equations in the entries of an effective linear transformation. This algebraic perspective has recently been extended using tools from computational algebraic geometry, such as the \textit{NulLA} algorithm, to decide feasibility without explicitly constructing solutions \cite{Singh2026}. For the simple case of two photons in two modes, Aniello \textit{et al.} \cite{Aniello2006} showed that adding a single ancilla mode is sufficient to reach any target state. Building on this result, de Gliniasty \textit{et al.} \cite{deGliniasty2024} systematically characterized the accessible state space for two-photon systems, deriving explicit criteria for reachability as a function of the available resources. More generally, in multimode and multi-photon regimes, it has been shown that target state preparation can be addressed by protocols that supplement multiport interferometers with measurement-based nonlinearities \cite{Kopylov2025, Aralov2026}. Within this framework, exact state preparation of arbitrary states with fixed photon number can be achieved through the injection of auxiliary photons and projective measurements \cite{Aralov2026}. While these methods establish valuable sufficient conditions for exact state preparation, the resulting resource requirements are tied to specific constructive procedures and may not be optimal. This leaves open the fundamental question of identifying the minimal physical resources strictly necessary for state engineering. On the foundational side, complementary theoretical efforts have investigated the geometry of quantum states under linear optical transformations, identifying invariants, characterizing the orbits generated by passive unitaries, and describing the reachable state space \cite{Migdal2014, Escartin2019, Parellada2024, Mamon2025, Draux2025, Kopylov2025, Rodari2025}. These studies make explicit the limitations of passive linear optics, where due to bosonic symmetries, only a restricted subset of Fock space can be explored. 

A major application of architectures based on heralding is the generation of multipartite entanglement \cite{Forbes2025}. Within this framework, several protocols have been developed to optimize the preparation of Bell-like states \cite{Stanisic2017, Gubarev2020, Fldzhyan2021, Fldzhyan2023}, scalable high-dimensional GHZ states \cite{Paesani2021, Chin2024}, and symmetric qudit Dicke states \cite{Kang2026}. These results highlight the trade-off between success probability and resource requirements, and emphasize the crucial role of auxiliary photons and modes in enabling the preparation of otherwise inaccessible states. It is worth noting that, historically, a large fraction of photonic experiments for entanglement generation have bypassed the strict requirements of heralding by relying instead on post-selection schemes, where all photons are destructively measured and the target state is identified only within the collected measurement data \cite{Bouwmeester1999, Pan2001, Lu2007, Pont2024}. While post-selection offers experimental simplicity, the resulting states are consumed upon measurement and cannot be preserved for subsequent quantum processing \cite{Forbes2025}. This motivates our focus on heralded linear optics, where the target state remains available in the unmeasured data modes for subsequent processing. Moreover, beyond entanglement generation, heralding finds broad applications across quantum information processing, including photonic quantum computing \cite{Paesani2021}, controllability in photonic quantum machine learning \cite{Carolan2020, Gan2022, Maring2024, Wang2025}, and bosonic simulators \cite{Yalouz2021}.

While it is well understood that heralding enlarges the set of reachable states, a systematic theory relating the available auxiliary resources (modes and photons) to the dimension of the accessible manifold is still missing. Here, we address this gap by investigating the structure and local dimension of the set of accessible states with HLO. This approach was introduced in \cite{Mamon2025} to study the local controllability of output states in passive linear optics, and we extend the analysis to scenarios with auxiliary resources and heralding. The approach is based on studying the rank of the Jacobian of the output state, which quantifies how the state varies as a function of the underlying linear optical circuit. Our main contributions are as follows. First, we derive analytical expressions for HLO output states and their infinitesimal variations under perturbations of the underlying circuit. Second, we provide general upper bounds on the dimension of the locally accessible state space within HLO, together with analytic lower bounds in specific regimes. Then, we formulate a conjecture for the generic value of the Jacobian rank, supported by extensive numerical evidence. Finally, building on \cite{deGliniasty2024} and \cite{Aralov2026}, we analyze the special two-photon regime, where our general conjecture does not apply due to the presence of additional symmetries. 

The remainder of the paper is organized as follows. Section~\ref{sec: background} introduces the Jacobian formalism for passive linear optics and HLO. The main results on the HLO Jacobian rank and the two-photon analysis are presented in Section~\ref{sec: results}, while Section~\ref{sec:global} discusses the implications for global reachability, i.e., the preparation of arbitrary heralded states.
 
\section{Background}\label{sec: background}

\subsection{Passive linear optics}
 
We consider here the passive linear optical (PLO) evolution of $n$ photons propagating through $m$ optical modes. The corresponding Hilbert space is the $n$-photon Fock space $\mathcal{F}_n^m = \text{span}(\{\ket{\bm{n}}=\ket{n_1, \dots, n_m} : n_j \geq 0,\sum_j n_j = n\}),$
whose dimension is
\begin{equation}
D = \dim \mathcal{F}_n^m = \binom{m+n-1}{n}.
\end{equation}
To describe the dynamics within this space, we introduce the creation and annihilation operators $a_j^\dagger$ and $a_j$, which are defined via $a_j^\dagger \ket{\bm{n}}=\sqrt{n_j+1} \ket{n_1,\dots,n_j+1,\dots,n_m}$ and
$a_j \ket{\bm{n}}=\sqrt{n_j} \ket{n_1,\dots,n_j-1,\dots,n_m}$.

As shown in Fig.~\ref{fig: Schemes1} (a), a PLO circuit acting on $m$ modes is described 
by an interferometer $U \in \mathcal{U}(m)$, where $\mathcal{U}(m)$ denotes the unitary group on $\mathbb{C}^m$.
In this work, we follow the convention where the element $U_{ij}$ represents the transition amplitude from input mode $i$ to output mode $j$ \cite{Brod2019}.
The corresponding $n$-photon PLO evolution is then governed by the unitary $\Gamma_n(U)$ acting on $\mathcal{F}_n^m$, where $\Gamma_n$ is the $n$-photon PLO representation \cite{Aniello-ExploringRepresentation-2006,Aaronson2011,Garcia-Escartin-MultiplePhoton-2019}. 
Under this representation, the creation operators transform via the so-called \textit{Bogoliubov transformations} \cite{Kok2007}:
\begin{equation}\label{eq:Bogoliubov}
\Gamma(U)\, a_i^\dagger\, \Gamma(U)^\dagger = \sum_{j=1}^m U_{ij}\, a_j^\dagger,
\end{equation}
where $\Gamma(U) = \bigoplus_{n=0}^\infty \Gamma_n(U)$ is the unitary representation of the interferometer on the full Fock space.

Considering an input Fock state $\ket{\bm{s}} = \ket{n_1, \dots, n_m}$, with $\sum_j n_j = n$, the corresponding output state resulting from PLO evolution is:
\begin{equation}
\ket{\psi(U,\bm{s})} = \Gamma_n(U)\ket{\bm{s}} \in \mathcal{F}_n^m.
\end{equation}
To characterize how the output state changes under small variations of the interferometer $U$, we study the \textit{Jacobian} of the map $U \mapsto \ket{\psi(U,\bm{s})}$ at a given point $U$, consists of the collection of directional derivatives $\delta_G \ket{\psi(U,\bm{s})}$ along a basis of movement directions $G$ around $U$.
As the domain $\mathcal{U}(m)$ is a \textit{Lie group}, these directions formally belong to the associated \textit{Lie algebra} $\mathfrak{u}(m)$ consisting of anti-Hermitian matrices \cite{Hall2015}:
\begin{equation}
\mathfrak{u}(m) = \{ G \in \,\mathbb{C}^{m \times m} : G^\dagger = -G \}.
\end{equation}
The unitaries $e^{tG} U$ for $G \in \mathfrak{u}(m)$ and small $t$ thus encode interferometers close to $U$.
A basis of the Lie algebra $\mathfrak{u}(m)$ can be taken as the generators $G^x_{jk}$, $G^y_{jk}$ ($1 \leq j < k \leq m$) and $G^z_{j}$ ($1 \leq j \leq m$), given by
\begin{align}
G^x_{jk} &= \frac{-i}{\sqrt{2}}(\ket{j}\bra{k} + \ket{k}\bra{j}),\\
G^y_{jk} &= \frac{1}{\sqrt{2}}(\ket{j}\bra{k} - \ket{k}\bra{j}),\\
G^z_{j} &= \ket{j}\bra{j}.
\end{align}
For each generator $G \in \mathfrak{u}(m)$, the infinitesimal variation of the output state is
\begin{equation}
\delta_G \ket{\psi(U,\bm{s})} = 
\left.\frac{d}{dt} \Gamma_n(e^{t G} U)\ket{\bm{s}}\right|_{t=0}.
\end{equation}
By stacking the real and imaginary parts of these tangent vectors as independent columns, we obtain the real Jacobian matrix $J(U,\bm{s})$, whose rank quantifies the number of locally independent directions accessible in the real space $\mathbb{R}^{2D}$.

The rank of the Jacobian matrix, $\mathrm{rank} \ J(U,\bm{s})$, corresponds to the local dimension of the orbit generated by the action of the linear optical group $\mathcal{U}(m)$ on the initial state $\ket{\bm{s}}$. Equivalently, this quantity measures the number of independent directions in state space that can be explored by infinitesimal variations of the interferometer parameters, and thus provides a notion of controllability of the output state, quantifying the number of independent directions in state space which can be accessed by small perturbations of the unitary interferometer.

The set of states reachable by PLO
\begin{equation}
\mathcal{O}_{\bm{s}} = \{ \Gamma_n(U)\ket{\bm{s}} : U \in \mathcal{U}(m) \}
\end{equation}
forms a smooth manifold \cite{Mamon2025}, which implies that its local dimension is the same at all its points. Therefore, the rank of the Jacobian does not depend on the choice of $U$ and can be equivalently computed at the identity:
\begin{equation}
\mathrm{rank} \ J(U,\bm{s}) = \mathrm{rank} \ J(\mathbb{I},\bm{s}) \equiv R_{\rm PLO}(\bm{s}).
\end{equation}

Evaluating the tangent space at the identity reveals that off-diagonal generators contribute non-trivially only when at least one of the two corresponding modes is populated, since otherwise their action annihilates the Fock state. Denoting by $m_e$ the number of empty modes in $\ket{\bm{s}}$, the number of active pairs is
\begin{equation}
N = \binom{m}{2} - \binom{m_e}{2}.
\end{equation}
Each active pair contributes two real directions, corresponding to $G^x_{jk}$ and $G^y_{jk}$, while the diagonal generators $\{G^z_j\}$ contribute only one additional direction associated with the global phase (which, in this work, is treated as an explicit direction of variation and is not factored out).
Hence, the Jacobian rank for passive linear optics is
\begin{equation}\label{eq: Lo_rank}
R_{\rm PLO}(\bm{s}) = 2N + 1 = m(m-1) - m_e(m_e-1) + 1.
\end{equation}
This expression coincides with the orbit-dimension result derived in \cite{Mamon2025}, where a general Lie-algebraic proof is provided.

The maximal attainable value, obtained when at most one mode is empty, is
\begin{equation}
R_{\rm PLO}^{\max}(m) = m(m-1) + 1.
\end{equation}
Since the real dimension of the state space (the unit sphere in $\mathcal{F}_n^m$) is $2D - 1$, with $D = \binom{m+n-1}{n}$, it follows that $R_{\rm PLO}^{\max}(m) \ll 2D - 1$ for large values of $n$ and $m$. 
This gap highlights a fundamental limitation \cite{Moyano-Fernandez-LinearOptics-2017,Monbroussou2025,Mamon2025} of passive linear optics: 
the accessible manifold is confined to a low-dimensional subset of the full Fock space.

\subsection{Heralded linear optics}
\begin{figure}[t]
\centering
\includegraphics[width=0.95\columnwidth]{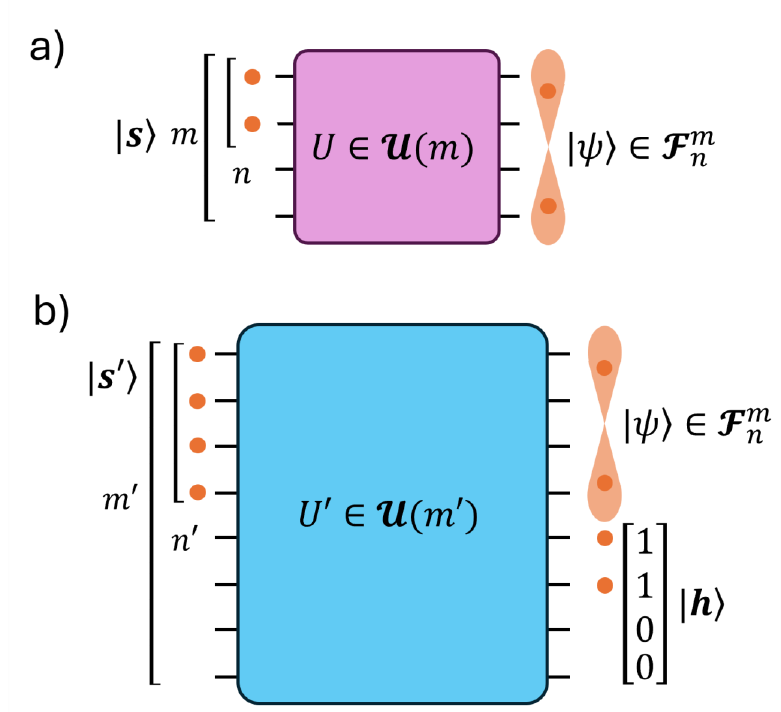}
\caption{
  Schematics of different settings:  
    (a) Passive linear optics, with $n$ photons in $m$ modes evolving through a circuit described by $U \in \mathcal{U}(m)$   
    (b) Heralded linear optics, where an enlarged input state comprising $n'$ photons in $m'$ modes is considered, evolving through a circuit described by $U' \in \mathcal{U}(m')$, followed by projection onto a fixed heralding state $\ket{\bm{h}}$ for heralding.
}
\label{fig: Schemes1}
\end{figure}
In the heralded linear optics framework, illustrated in Fig.~\ref{fig: Schemes1} (b), the standard PLO setting is extended by introducing an enlarged input state evolving through an expanded interferometer, followed by a conditional measurement on a subset of the modes.

The input state is taken to be a Fock state with $n'$ photons distributed over $m'$ modes,
\begin{equation}\label{eq:input}
\ket{\bm{s}'} = \ket{s'_1,\dots,s'_{m'}}, 
\qquad 
\sum_{i=1}^{m'} s'_i = n'.
\end{equation}
The state evolves under a global linear optical transformation described by a unitary $U' \in \mathcal{U}(m')$ acting on all modes.

After the evolution, a projective measurement is performed on a subset of modes, described by the projector
\begin{equation}
\Pi_{\bm{h}} = 
\mathbb{I}_{\mathcal{D}} \otimes \ket{\bm{h}}\!\bra{\bm{h}}_{\mathcal{H}},
\end{equation}
where $\ket{\bm{h}}$ is a Fock state with $r$ photons in $k$ modes specifying the heralding outcome on the set of $k$ measured modes $\mathcal{H}$. The remaining $m'-k$ data modes $\mathcal{D}$ define the output subsystem. In the general HLO setting, we fix the global parameters $(m', n')$ and the heralding configuration $(k, r)$, so that the output state lies in the Fock space of $n = n'-r$ photons in $m = m'-k$ modes. When comparing with PLO, however, it is often convenient to fix the output data parameters $(m,n)$ and regard $(k,r)$ (and thus $(m',n')$) as additional resources.  Different heralding outcomes $\bm{h}$ generally lead to distinct sets of reachable output states. If two outcomes $\bm{h}$ and $\bm{h}'$ are related by a permutation of the measured modes, they correspond to physically equivalent situations, since such permutations can be implemented by linear optical unitaries acting solely on the measured subsystem. Accordingly, the structure of the reachable states depends on $\bm{h}$ only through its occupation-number equivalence class, defined up to permutations of the heralded modes.

The (unnormalized) state obtained after heralding is given by
\begin{equation}
\ket{\Phi(U',\bm{s}',\bm{h})} = \Pi_{\bm{h}} \, \Gamma_{n'}(U') \, \ket{\bm{s}'} .
\end{equation}
Starting from this expression, the normalized heralded output state, restricted to the data modes only, can be obtained as
\begin{equation}\label{eq:normalized_heralded_output}
\ket{\psi(U',\bm{s'},\bm{h})}
=
\frac{\ket{\phi(U',\bm{s'},\bm{h})}}
{\sqrt{p(U',\bm{s'},\bm{h})}},
\end{equation}
where
\begin{equation}
\ket{\phi(U',\bm{s'},\bm{h})}
=
\bigl( \mathbb{I}_{\mathcal{D}} \otimes \bra{\bm{h}}_{\mathcal{H}}  \bigr)
\Gamma_{n'}(U') \ket{\bm{s}'}
\end{equation}
is the unnormalized heralded state restricted to the data modes, and
\begin{equation}
p(U',\bm{s'},\bm{h}) = \langle \phi(U',\bm{s'},\bm{h}) |
\phi(U',\bm{s'},\bm{h}) \rangle
\end{equation}
is the success probability of the heralding event (which is assumed to be non-zero in Eq. \eqref{eq:normalized_heralded_output}).

The normalized state $\ket{\psi}$ can be identified with a point on the unit sphere
$S^{2D-1} \subset \mathbb{R}^{2D}$, where
\begin{equation}
D=\dim\mathcal{F}_n^m = \binom{m+n-1}{n}
\end{equation}
is the dimension of the $n$-photon Fock space of the $m$ output modes, with $n=n'-r$. Since the map $U' \mapsto \ket{\psi(U',\bm{s'},\bm{h})}$ is complex-analytic in the entries of $U'$, its real and imaginary parts define a real-analytic map (see Appendix \ref{app: rank} for a formal definition of real-analyticity)
\begin{equation}
\tilde{\psi} :
\mathcal{X} \longrightarrow S^{2D-1} \subset \mathbb{R}^{2D},
\end{equation}
with domain
\begin{equation}
\mathcal{X} = \{ U' \in \mathcal{U}(m') :
p(U',\bm{s'},\bm{h}) > 0 \}.
\end{equation}

To generalize the Jacobian formalism to the heralded setting,
we consider infinitesimal variations of the normalized output state generated by
$G \in \mathfrak{u}(m')$:
\begin{equation}
\delta_G \ket{\psi(U',\bm{s'},\bm{h})} = \left.\frac{d}{dt}
\ket{\psi(e^{tG}U',\bm{s'},\bm{h})}\right|_{t=0}.
\end{equation}
This defines the real Jacobian matrix
\begin{equation}
J(U',\bm{s'},\bm{h})
\in \mathbb{R}^{2D \times M},
\qquad
M=\dim(\mathfrak{u}(m'))=(m')^2 ,
\end{equation}
whose rank quantifies the dimension of the tangent space to $S^{2D-1}$ spanned by infinitesimal unitary variations around $U'$.
By construction,
\begin{equation}
\mathrm{rank}\, J(U',\bm{s'},\bm{h}) \le 2D-1,
\end{equation}
where the subtraction of one accounts for normalization of the output state. 

Unlike the passive linear optical case, where the output states (under the unitary group action) form a manifold and the Jacobian rank is constant, the presence of the non-unitary projection associated with the heralding outcome $\bm{h}$ breaks this group structure. As a consequence, the Jacobian rank may depend on the specific unitary $U'$. For instance, if the unitary $U'$ does not couple the measured modes $\mathcal{H}$ with the remaining modes $\mathcal{D}$, the heralding operation does not affect the evolution of the output subsystem, and the rank reduces to the passive linear optical value. Conversely, for generic unitaries that couple the measured and unmeasured modes, the rank typically attains a maximal value determined solely by the input and heralding configuration $(\bm{s'},\bm{h})$.
Indeed, as shown in Appendix~\ref{app: rank}, since the map $\tilde{\psi}$ is real-analytic, its Jacobian rank is constant and maximal almost everywhere:
\begin{equation}
\mathrm{rank}\, J(U',\bm{s'},\bm{h})
=
R_{\rm HLO}(\bm{s'},\bm{h})
\quad \text{for almost all } U' \in \mathcal{X},
\end{equation}
where we define the \emph{generic HLO rank} as
\begin{equation}
R_{\rm HLO}(\bm{s'},\bm{h})
=
\max_{U' \in \mathcal{X}}
\mathrm{rank}\, J(U',\bm{s'},\bm{h}) .
\end{equation}

\section{Results}\label{sec: results} 

In this section, we present our main results on the structure and controllability of the output state space accessible via heralded linear optics (HLO). Our approach is based on an explicit characterization of the Jacobian of the HLO map and on the analysis of its generic rank. We begin by deriving in Section~\ref{sec: analytic} a closed-form expression for the heralded output state and for its derivatives with respect to the interferometer parameters, which provides a complete description of the Jacobian. Building on this formulation, we establish in Section~\ref{sec:upper bound} a general upper bound on the generic HLO rank as a function of $(m',n',k,r,k_{\rm occ})$, with $k_{\rm occ}$ denoting the number of non-empty modes in the heralding pattern, obtained from geometric constraints on submatrices of unitary matrices. In Section~\ref{sec:lower bounds}, we then complement this result with analytic lower bounds in specific regimes. In particular, for $n'\le k-k_{\rm occ}$, we provide constructive arguments establishing explicit lower bounds on the Jacobian rank both in the regime $\{r=0, m\geq n\}$ and in the regime $\{r>0,\, m>n',\, n>2\}$. These results naturally lead to Conjecture \ref{conj:HLO_rank}, predicting the generic value of the HLO rank: we conjecture that whenever either $r=0$ or $n>2$, and for output states where $m \geq n$, the upper bound is generically saturated. This conjecture, discussed in Section~\ref{sec:conjecture}, is supported by extensive numerical evidence. 
In Section~\ref{sec:two_photon_bound} we further analyze the special case of two photons ($n=2$), where the general conjecture does not apply. In this regime, building on \cite{deGliniasty2024} and \cite{Aralov2026}, we identify configurations with $m'=m+1$ modes and $r=n'-2$ photons in a single heralding mode that are sufficient to achieve the maximal allowed local rank, confirming from a geometric perspective that additional auxiliary modes are not generically required.

\subsection{Analytic structure of the HLO map}\label{sec: analytic}

To clarify how the Jacobian rank of the heralded state depends on the interferometer parameters, we provide in this section an explicit analytic expression for the heralded state and its derivatives. 

Consider an input state $\ket{\bm{s}'} = \prod_{i \in \mathcal{I}} \hat a_i^\dagger \ket{0}$, where $\mathcal{I}$ is the set of $n'$ occupied input modes. From now on, we restrict to collision-free input states, where each input mode contains at most one photon ($s'_i \in \{0,1\}$), which are both experimentally relevant and simplify the analytic treatment. 
Up to permutation, we can assume that $\mathcal{I}=\{1,\dots,n'\}$.
Under the transformation $U' \in \mathcal{U}(m')$, 
according to \eqref{eq:Bogoliubov},
the state before measurement evolves into:
\begin{equation}\label{eq:expanded_state}
\ket{\Psi} = \sum_{j_1,\dots,j_{n'}=1}^{m'} \bigg( \prod_{i\in\mathcal{I}} U'_{i,j_i} \bigg) \hat a_{j_1}^\dagger\dots \hat a_{j_{n'}}^\dagger \ket{0},
\end{equation}
where $j_i$ identifies the output mode of the photon originating from input $i$.

Heralding is implemented by projecting the $k$ modes in $\mathcal{H}$ onto a fixed pattern $\ket{\bm{h}}=\ket{r_1,...,r_k}$, with $\sum_{l=1}^kr_l = r$. Let $\mathcal{J} \subset \{1, \dots, m'\}$ denote the set of allowed output modes, consisting of all unmeasured modes $\mathcal{D}$ plus any measured mode $l \in \mathcal{H}$ that is occupied in the heralding pattern ($r_l > 0$). The unnormalized heralded state $\ket{\Phi}$ can be written by restricting the sum in Eq.~\eqref{eq:expanded_state} to configurations that satisfy the heralding constraint:
\begin{equation}
\label{eq: heralded_state}
\ket{\Phi} = \sum_{\substack{j_1,\dots,j_{n'} \in \mathcal{J} \\ \sum_{i \in \mathcal{I}} \delta_{j_i,l} = r_l \ \forall l\in \mathcal{H}}} \bigg( \prod_{i \in \mathcal{I}} U'_{i,j_i} \bigg) \hat a_{j_1}^\dagger \dots \hat a_{j_{n'}}^\dagger \ket{0}.
\end{equation}
This form highlights that the state depends on $U'$ only through the submatrix $V = U'[\mathcal{I}, \mathcal{J}] \in \mathbb{C}^{p \times q}$, where $p=n'$ and $q = m+k_{\rm occ}$. Specifically, for a given output Fock state $\ket{\bm{n}} \in \mathcal{F}_n^m$, the amplitude $c_{\bm{n}} = \langle \bm{n}, \bm{h} | \Phi \rangle$ is given by:
\begin{equation}
c_{\bm{n}} = \frac{\text{Perm}(U'_{\bm{s}, (\bm{n},\bm{h})})}{\sqrt{
\prod_j n_j! \prod_l r_l!}},
\end{equation}
where 
$\text{Perm}(A)$ denotes the permanent of a square matrix $A$ and $U'_{\bm{s}, (\bm{n},\bm{h})}$ is an $n' \times n'$ matrix constructed from $U'$ by selecting and repeating rows and columns according to the input occupation $\bm{s}'$ and the concatenated output and heralding patterns $(\bm{n}, \bm{h})$, respectively. Since these rows and columns belong to the sets $\mathcal{I}$ and $\mathcal{J}$, all matrix elements used to compute the amplitudes $c_{\bm{n}}$ are contained within the submatrix $V$.

We now derive an explicit expression for the variation of the heralded state under an infinitesimal transformation of the interferometer matrix $U'$. Let $G \in \mathfrak{u}(m')$ and consider the one-parameter family of unitaries $U'(t) = e^{tG} U'$. The corresponding variation of the (unnormalized) heralded state is defined as 
\begin{equation} 
\delta_G \ket{\Phi} = \left.\frac{d}{dt}\ket{\Phi(U'(t))}\right|_{t=0}. \end{equation} 
Using the explicit expansion of Eq.~\eqref{eq: heralded_state}, the dependence of $\ket{\Phi}$ on $U'$ appears only through products of matrix elements $\prod_{i \in \mathcal{I}} U'_{i,j_i}$. Differentiating such products with respect to $t$ yields
\begin{equation}
\frac{d}{dt} \prod_{i\in\mathcal{I}} U'_{i,j_i}(t) = \sum_{k\in\mathcal{I}} \bigg( \frac{d}{dt}U'_{k,j_k}(t) \prod_{i\neq k} U'_{i,j_i}(t) \bigg). \end{equation}
The derivative of each matrix element is
\begin{equation} \left. \frac{d}{dt} U'_{ij}(t) \right|_{t=0} = \left. \frac{d}{dt}(e^{tG}U')_{ij}\right|_{t=0} = (GU')_{ij}. 
\end{equation} 
Substituting this into the expansion of the heralded state yields \begin{equation}\label{eq: heralded derivative} 
\begin{aligned} 
&\delta_G \ket{\Phi} =\\
&= \sum_{\substack{j_1,\dots,j_{n'} \in \mathcal{J} \\ \sum_{i \in \mathcal{I}} \delta_{j_i,l} = r_l \ \forall l\in \mathcal{H}}} \sum_{k\in\mathcal{I}} \bigg( (GU')_{k,j_k} \prod_{i\neq k} U'_{i,j_i} \bigg) \hat a_{j_1}^\dagger \cdots \hat a_{j_{n'}}^\dagger \ket{0}. \end{aligned} 
\end{equation} 
This expression shows that the derivative of the heralded state can be obtained from Eq.~\eqref{eq: heralded_state} by replacing one factor $U'_{k,j_k}$ of the product $\prod_{i\in\mathcal{I}} U'_{i,j_i}$ with $(GU')_{k,j_k}$ while leaving all other factors unchanged (and summing over $k$). In particular, the derivative of the amplitude $c_{\bm{n}}$ with respect to a generator $G$ follows from the linearity of the permanent with respect to its rows:
\begin{equation}
\delta_G c_{\bm{n}} = \frac{1}{\mathcal{N}} \sum_{k=1}^{n'} \text{Perm}({U'}_k^{(G)})
\end{equation}
where ${U'}_k^{(G)}$ is the matrix $U'_{\bm{s}, (\bm{n},\bm{h})}$ where the $k$-th row has been replaced by the $k$-th row of the matrix $(GU')_{\bm{s}, (\bm{n},\bm{h})}$ restricted to the same row and column indices, and $\mathcal{N}$ is the normalization factor {\small$ \sqrt{\prod_j n_j! \prod_l r_l!}$}.

Since the heralding pattern $\ket{\bm{h}}$ is fixed, the post-measurement state can be written as $\ket{\Phi} = \ket{\phi}_{\mathcal{D}} \otimes \ket{\bm{h}}_{\mathcal{H}}$, where $\ket{\phi}$ denotes the state of the output subsystem $\mathcal{D}$. Fixing the Fock basis of $\mathcal{F}_n^m$, we identify $\ket{\phi}$ with a vector in $\mathbb{C}^D$, so that the map $U' \mapsto \ket{\phi}$ can be regarded as a complex-analytic map into $\mathbb{C}^D$. For the Jacobian rank, we are interested in the normalized state $\ket{\psi} = \ket{\phi}/\|\phi\|$. The variation $\delta_G \ket{\psi}$ is obtained by projecting the unnormalized variation onto the tangent space of the unit sphere:
\begin{equation}\label{eq: derivative normalization}
\delta_G \ket{\psi} = \frac{1}{\|\phi\|} \bigg( \delta_G \ket{\phi} - \mathrm{Re} \left[ \langle \psi | \delta_G \phi \rangle \right] \ket{\psi} \bigg).
\end{equation}

The real Jacobian $J(U', \bm{s}', \bm{h})$ is then constructed by expressing each variation $\delta_G \ket{\psi}$ in the Fock basis, and collecting the real and imaginary parts of the resulting coefficient vectors.

\subsection{Upper bound on the generic HLO rank}\label{sec:upper bound}

We now present a series of results describing how different input states and heralding patterns affect the generic Jacobian rank in heralded linear optics (HLO).

We begin with an upper bound on the
value of the HLO rank as a function of the input state and heralding configuration.

\newtheorem{theorem}{Theorem}
\begin{theorem}[Upper bound on the generic HLO rank]\label{thm:upper_bound}

Let $\ket{\bm{s'}} = |1^{\otimes n'} 0^{\otimes (m'-n')}\rangle$ be the input state,
and let the heralding projector be specified by a pattern $\ket{\bm{h}}$
with $r$ photons in $k$ modes.

Denote by $k_{\rm occ}$ the number of occupied modes in $\ket{\bm{h}}$, and
define
\begin{equation}
m = m' - k, \qquad n = n' - r,
\end{equation}
together with
\begin{equation}
p = n', \qquad q = m + k_{\rm occ}, \qquad P = n' + k_{\rm occ},
\end{equation}
and
\begin{equation}
\Omega = p-(m'-q) =  n' - (k - k_{\rm occ}).
\end{equation}

Then, the generic HLO  rank satisfies
\begin{equation}
R_{\rm HLO}(\bm{s'},\bm{h})
\;\le\;
\min\!\big(2D-1,\; f(m',n',k,k_{\rm occ})\big),
\end{equation}
where $D = \binom{m+n-1}{n}$ and
\begin{equation}
f(m',n',k,k_{\rm occ}) =
\begin{cases}
R_{\rm free}, & \Omega \le 0, \\[2mm]
\min\!\big(R_{\rm free}, R_{\rm sat}\big), & \Omega > 0,
\end{cases}
\end{equation}
with
\begin{equation}
\begin{aligned}
R_{\rm free} &= 2pq - 2P + 1, \\
R_{\rm sat}  &= 2pq - \Omega^2 -P + 1.
\end{aligned}
\end{equation}

\end{theorem}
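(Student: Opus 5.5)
The plan is to factor the HLO map through the submatrix $V=U'[\mathcal{I},\mathcal{J}]\in\mathbb{C}^{p\times q}$, using the analytic structure of Section~\ref{sec: analytic}. By Eq.~\eqref{eq: heralded_state}, $\tilde\psi=\Psi\circ\pi$, where $\pi:U'\mapsto V$ and $\Psi:V\mapsto \phi(V)/\|\phi(V)\|$. By the chain rule,
\begin{equation}
\mathrm{rank}\,J(U')\le \dim\big(d\Psi_V(T_V\mathcal{M})\big),
\end{equation}
where $\mathcal{M}=\pi(\mathcal{U}(m'))$ is the set of attainable submatrices and $T_V\mathcal{M}$ is the image of $d\pi$. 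So it suffices to bound $\dim T_V\mathcal{M}$ and then subtract the dimension of a subspace of $T_V\mathcal{M}$ that lies in $\ker d\Psi_V$. The bound $2D-1$ is immediate, since $\tilde\psi$ takes values in $S^{2D-1}$.

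\textbf{Kernel directions from a torus action.} Let the torus $T=(\mathbb{C}^*)^{P}$ act on $V$ by rescaling its $p=n'$ rows by $\lambda_i$ and its $k_{\rm occ}$ heralded columns by $\mu_l$. Each amplitude $c_{\bm{n}}$ is a permanent with every input row used exactly once and heralded column $l$ used exactly $r_l$ times. Hence
\begin{equation}
\phi(V)\mapsto \Big(\prod_i\lambda_i\prod_l\mu_l^{r_l}\Big)\phi(V),
\end{equation}
a common complex factor on every amplitude. After normalization, $T$-orbits are therefore mapped onto curves that move only by a global phase, so the image of each orbit has real dimension at most $1$. For generic $V$ (no zero rows or heralded columns) the action is free, so the orbit has real dimension $2P$. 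I then distinguish two cases.

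\textbf{Case $\Omega\le0$ (and in general).} Use the trivial inclusion $T_V\mathcal{M}\subseteq\mathbb{C}^{p\times q}$, which has real dimension $2pq$. The $2P$-dimensional tangent to the $T$-orbit contributes at most one dimension to the image, so at least $2P-1$ of its directions lie in $\ker d\Psi_V$ (restricted to the ambient space). This gives
\begin{equation}
\mathrm{rank}\le 2pq-(2P-1)=R_{\rm free}.
\end{equation}
This bound holds for every $\Omega$, which justifies taking the minimum in the case $\Omega>0$.

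\textbf{Case $\Omega>0$.} Unitarity of $U'$ gives $VV^\dagger+WW^\dagger=\mathbb{I}_p$, where $W$ is the $p\times(m'-q)$ block of $U'$ on the remaining columns. Thus $\mathrm{rank}(\mathbb{I}-VV^\dagger)\le m'-q$, so $VV^\dagger$ has eigenvalue $1$ with multiplicity at least $\Omega$. I will show that the set of such contractions is a real-analytic set of dimension $2pq-\Omega^2$. The argument is a standard count: fix an $\Omega$-dimensional subspace $E$ of $\mathbb{C}^p$ on which $VV^\dagger$ acts as the identity. The condition $(VV^\dagger)|_E=\mathbb{I}_E$ is a Hermitian $\Omega\times\Omega$ condition, costing $\Omega^2$ real constraints. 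Since $V$ is a contraction, this also forces $VV^\dagger E\subseteq E$, so letting $E$ vary in the Grassmannian adds nothing beyond what those constraints already absorb. The step I would verify carefully is a parametrization $V=A\,\mathrm{diag}(\mathbb{I}_\Omega,\Sigma)\,B^\dagger$ that confirms this count, for instance by computing the codimension of the locus of Hermitian matrices with a top eigenvalue of multiplicity $\Omega$ pinned at $1$.

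Inside this constrained set, only the compact subtorus $U(1)^P$ of phases preserves the constraint; real rescalings would move the unit singular values off $1$. These $P$ phase directions are tangent to $\mathcal{M}$ and map to the single global-phase direction, so $P-1$ of them lie in the kernel. This gives
\begin{equation}
\mathrm{rank}\le 2pq-\Omega^2-(P-1)=R_{\rm sat}.
\end{equation}

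\textbf{Main obstacle.} The hardest step is making the dimension statements rigorous. I need (i) that $\dim T_V\mathcal{M}\le 2pq-\Omega^2$ at generic points, which I would establish by showing that $\mathcal{M}$ is contained in the above constrained set and that this set is smooth of that dimension at generic $V$; and (ii) that the $P$ phase directions are linearly independent in $T_V\mathcal{M}$ for generic $V$. For (ii) it suffices to exhibit one $V$ with nonzero rows and heralded columns, then invoke the real-analyticity argument of Appendix~\ref{app: rank}.
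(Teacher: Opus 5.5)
Your proposal is correct. It shares the paper's skeleton: factor through $V=U'[\mathcal{I},\mathcal{J}]$, let the $P$ row and heralded-column rescalings collapse onto the global-phase direction, and use $\mathrm{rank}(\mathbb{I}-VV^\dagger)\le m'-q$ to confine the attainable submatrices to a set of dimension $2pq-\Omega^2$. The difference is in how you treat the saturated regime, where your argument is lighter.

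The paper's Lemma~\ref{lem:submatrix} computes the dimension of the quotient $\mathcal{M}(p,q,N)/\!\sim$ directly. It uses analytic eigenvalue perturbation theory (Kato) to show that the amplitude rescalings tangent to $\mathcal{M}$ are cut out by $\Omega^2$ real equations. It then counts $P+\max(0,P-\Omega^2)$ gauge directions tangent to $\mathcal{M}$ and subtracts them from $2pq-\Omega^2$, so the minimum emerges from a single count.

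You instead prove two separate bounds:
\begin{itemize}
\item $R_{\rm free}$ from the unconstrained ambient space $\mathbb{C}^{p\times q}$. This holds for every $\Omega$ and uses no unitarity, since $\mathrm{rank}(d\Psi_V\circ d\pi)\le\mathrm{rank}\,d\Psi_V$.
\item $R_{\rm sat}$ from the constrained set using only the phase torus. Its directions lie in $\mathrm{im}\,d\pi_{U'}$ because they lift to left and right multiplication of $U'$ by diagonal unitaries.
\end{itemize}
The minimum then comes for free. Because you only need a lower bound on $\dim(\mathrm{im}\,d\pi_{U'}\cap\ker d\Psi_V)$, your route avoids the perturbation-theory step and any genericity claim about independence of the $\Omega^2$ equations. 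For the $\Omega^2$ count you also need only $\dim\mathrm{im}\,d\pi_{U'}\le\dim S$, where $S$ is the set of contractions with at least $\Omega$ unit singular values. This follows from the constant-rank theorem on the open maximal-rank set plus semialgebraic dimension theory, so you do not need smoothness of $\mathcal{M}$. In exchange, the paper's route yields a standalone statement on the effective dimension of the gauge-reduced submatrix space, which is the natural candidate for the saturation asserted in Conjecture~\ref{conj:HLO_rank}.

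One claim in your sketch is inaccurate and should be removed: that only $U(1)^P$ preserves the constraint, with real rescalings necessarily moving the unit singular values off $1$. Let $U_\Omega$ and $W_\Omega$ denote the left and right singular vectors of $V$ for singular value $1$. First-order preservation of these singular values under real diagonal rescalings $\Delta^{\rm row},\Delta^{\rm col}$ is the vanishing of the Hermitian $\Omega\times\Omega$ matrix $U_\Omega^\dagger\Delta^{\rm row}U_\Omega+W_\Omega^\dagger\Delta^{\rm col}W_\Omega$. That is $\Omega^2$ real linear equations in $P$ real unknowns, so for $P>\Omega^2$ at least a $(P-\Omega^2)$-dimensional family of amplitude rescalings is tangent to $S$; this is the paper's underdetermined case. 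You never use the claim: discarding kernel directions only weakens a bound, and in that regime $R_{\rm free}<R_{\rm sat}$ anyway, so your proof is unaffected.
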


The proof is given in Appendix~\ref{app:submatrix}. We summarize here the key ideas underlying the bound.

First, the heralded state $|\Phi\rangle$ depends on the interferometer only through the $p \times q$ submatrix $V$ connecting the occupied input modes to the output modes and the occupied modes in the heralding pattern. The complex entries of $V$ are subject to rescaling symmetries: phase shifts applied to input modes (rows) or to occupied heralding modes (columns) result only in an overall complex rescaling of $|\Phi\rangle$, which reduces to a global phase shift after normalization. As a consequence, $2P$ real parameters, with $P = n' + k_{\rm occ}$, effectively collapse to a single direction, leading to a reduction of $2P - 1$ in the Jacobian rank.

Second, the submatrix $V$ is constrained by its embedding into a larger unitary matrix $U'$. When $V$ is large enough to satisfy $\Omega > 0$, the orthogonality constraints of the global unitary become active, further restricting the number of independent parameters in the submatrix. This gives rise to the saturation regime $R_{\rm sat}$, where the rank is limited by unitarity constraints.

Finally, the rank is bounded above by $2D - 1$, the real dimension of the unit sphere in the $n$-photon, $m$-mode Hilbert space, corresponding to the case where the HLO configuration provides enough degrees of freedom to generically explore the entire state space locally.

\subsection{Lower bounds on the generic HLO rank in $\Omega \leq 0$ regimes}\label{sec:lower bounds}

We now derive lower bounds on the generic HLO Jacobian rank in the regime $\Omega \le 0$, where the submatrix $V$ admits a fully contractive embedding and its local coordinates are unconstrained. The bounds are derived for input states of the form $\ket{s'} = \ket{\bm{s}}_{\mathcal{D}}\otimes \ket{\bm{h}}_{\mathcal{H}}$, where $\ket{\bm{s}} = |1^{\otimes n} 0^{\otimes (m-n)}\rangle$ and $\ket{\bm{h}} = |1^{\otimes r} 0^{\otimes (k-r)}\rangle$.

\begin{theorem}[Lower bounds for $\Omega \le 0$]\label{thm:lower bounds}

For heralding patterns $\ket{\bm{h}} = \ket{\bm{0}} \equiv |0^{\otimes k}\rangle$ with no photons ($r=0$), $\Omega = n-k \leq 0$, and $n \leq m$, the generic HLO rank satisfies
\begin{equation}
R_{\rm{HLO}}(\bm{s'}, \bm{0}) \ge 
\min(2n(m-1) + 1,2D-1),
\end{equation}
where $D = \binom{m+n-1}{n}$.
For heralding patterns of the form $\ket{\bm{h}} = |1^{\otimes r} 0^{\otimes (k-r)}\rangle$ with $r>0$, $\Omega = n-k +2 r\leq0$, $n>2$ and $n+r \le m$, the HLO Jacobian rank instead satisfies
\begin{equation}
R_{\rm{HLO}}(\bm{s'}, \bm{h}) \ge 
\min(2m(n+r) - 1,2D-1).
\end{equation}
\end{theorem}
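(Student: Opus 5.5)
Since $R_{\rm HLO}$ is the maximum of the Jacobian rank over $\mathcal{X}$, and by real-analyticity this maximum is attained almost everywhere (Appendix~\ref{app: rank}), it suffices to exhibit, in each regime, \emph{one} admissible $U'$ at which the rank is at least the claimed value. The natural choice is a point where the tangent vectors can be computed in closed form. In the regime $\Omega\le 0$ the submatrix $V=U'[\mathcal{I},\mathcal{J}]$ is free: every contraction $V$, and in particular every small perturbation of a given $V$, admits a unitary completion. We therefore use the derivatives with respect to the $2pq$ real entries of $V$ in place of the generators $G\in\mathfrak{u}(m')$. Concretely, we lift a curve $V(t)$ to a smooth curve $U'(t)$, which is possible because the dilation $V\mapsto\begin{pmatrix}V & (1-VV^\dagger)^{1/2}\\ \cdot & \cdot\end{pmatrix}$ is smooth on strict contractions. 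Since the tangent vectors of $U'\mapsto\psi$ and of $V\mapsto\psi$ span the same space, the rank equals the rank of the map $V\mapsto\psi(V)$. The whole problem then reduces to the rank of the polynomial map $V\mapsto\phi(V)$, which is given by permanents, composed with normalization. We evaluate this rank at a strict contraction of the special form $V=\epsilon V_0$ with $V_0$ partial-identity-like. Scaling only rescales $\phi$ by $\epsilon^{n'}$ and does not change the rank of the normalized map, so we may study $V_0$ directly, provided $p(V_0)>0$.

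\emph{Case $r=0$.} Here $\mathcal{J}=\mathcal{D}$ and $V$ is $n\times m$. Writing $v_i$ for the rows of $V$, we have $\phi=\prod_{i=1}^n (v_i\cdot a^\dagger)\ket{0}$, a product of $n$ linear forms, i.e.\ a symmetric tensor of "Chow" type. At $V_0=[\mathbb{I}_n\,|\,0]$ we get $\phi=\ket{1^n0^{m-n}}$. The variation along entry $(i,j)$ with $j\neq i$ is $a_j^\dagger a_i\ket{1^n0^{m-n}}$, and these are $n(m-1)$ pairwise distinct Fock states, all orthogonal to $\phi$. Each yields two real directions (from $\delta V_{ij}$ and $i\,\delta V_{ij}$). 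The diagonal entries all give multiples of $\phi$ itself, which contribute only the global phase direction after the projection in Eq.~\eqref{eq: derivative normalization}. The rank at $V_0$ is thus $2n(m-1)+1$. The Fock states $a_j^\dagger a_i\ket{1^n}$ are distinct as long as they lie in $\mathcal{F}_n^m$, so no cap is needed beyond $2D-1$. The bound $\min(\cdot,2D-1)$ follows, since the rank is also trivially at most $2D-1$. The condition $n\le k$ is used only to guarantee the contractive completion: $k\ge n$ zero-heralded modes absorb the defect $1-VV^\dagger$.

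\emph{Case $r>0$.} Now $V$ is $(n+r)\times(m+r)$, with columns in $\mathcal{D}$ and in the $r$ singly-occupied herald modes. The amplitudes are $c_{\bm n}\propto \mathrm{Perm}(V[\cdot,(\bm n,1^r)])$, so $\phi$ is the "contraction" of $\prod_i(v_i\cdot a^\dagger)$ against $a_{h_1}\cdots a_{h_r}$. We pick a base point $V_0$ such that $\phi(V_0)$ is a single Fock state and the first-order variations spread over distinct Fock states, as in the $r=0$ case. The natural choice is $V_0=\mathbb{I}_{n+r}$ padded by zeros, together with a small generic perturbation if needed. The hope is that the $2(n+r)m$ real variations of the entries in the data columns produce $(n+r)m$ complex directions. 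We then subtract the redundancies, namely the row/column rescalings of Theorem~\ref{thm:upper_bound} restricted to data columns, and this should leave $2m(n+r)-1$.

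\emph{Main obstacle.} The difficulty is that at the diagonal base point, rows $i>n$ (herald-feeding) enter only at second order, so the first-order count is degenerate. We therefore plan to use a base point in which each herald column is fed by a superposition of two input rows. Alternatively, we will expand to the order where all rows contribute and then argue by lower semicontinuity of rank along a one-parameter family $V_0+\epsilon W$. The hypothesis $n>2$ presumably rules out coincidences among the permanent variations, such as those caused by the two-photon symmetry discussed later. Failing an explicit base point, we will verify linear independence of the $(n+r)m$ derivative vectors by exhibiting a nonvanishing minor whose structure is triangular in a suitable ordering of Fock states.
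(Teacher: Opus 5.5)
You handled the reduction and the $r=0$ case correctly: passing from generators $G\in\mathfrak{u}(m')$ to the entries of $V$ via a smooth dilation on strict contractions (valid since $\Omega\le 0$), the scaling remark, and the count $2n(m-1)+1$ all match the paper. The paper evaluates at $V^*=(\alpha\,\mathbb{I}\,|\,0)$ and gets the same number.

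\textbf{The $r>0$ case is not proved.} You name a base point and several fallback strategies but carry none of them out, and the base point you call natural is degenerate. Suppose rows $n+1,\dots,n+r$ of $\mathbb{I}_{n+r}$ feed the herald columns. Then the data and herald sectors decouple.
\begin{itemize}
\item Every $\partial\phi/\partial V_{k,l}$ with $k>n$ and $l$ a data column vanishes identically, because the herald mode of row $k$ can no longer be filled when all other photons are routed deterministically.
\item The herald-column derivatives are either zero or proportional to $\phi$.
\item The rank at that point is therefore exactly the PLO value $2n(m-1)+1$.
\end{itemize}
If instead the identity sits in data columns, the herald columns are zero, $\phi(V_0)=0$, and $V_0\notin\mathcal{X}$.

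Lower semicontinuity does not rescue this. It only says the rank near $V_0$ is at least this PLO value; getting more would need a genuine higher-order computation, which you do not sketch. The root cause is your requirement that $\phi(V_0)$ be a single Fock state: it pushes you towards deterministic routing, which is precisely the decoupled configuration.

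Your bookkeeping is also off. The target $2m(n+r)-1$ requires all $(n+r)m$ data-column derivatives to be complex-linearly independent. After that, normalization removes only the radial direction, and there are no rescaling redundancies to subtract. The gauge directions of Theorem~\ref{thm:upper_bound} involve full rows and herald columns, not data columns. Finally, the role of $n>2$ is left as a guess.

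\textbf{The missing idea is the paper's base point.} Using $n+r\le m$ and the invariance of the generic rank under permuting input modes, put all $n+r$ input photons in distinct data modes. Take $V^*=(\alpha\,\mathbb{I}_{n+r}\,|\,H)$ with every entry of the herald block $H$ equal to a small $\beta>0$. Now each photon has its own data mode and any photon can be the heralded one, so $\phi(V^*)$ is a uniform superposition over which $r$ photons are absorbed, not a single Fock state.

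The derivatives are analysed through removal patterns.
\begin{itemize}
\item \emph{Diagonal entries.} A combination $\sum_k\lambda_k\,\partial\phi/\partial V_{k,k}$ has coefficient proportional to $\mathrm{Perm}(H_{\bm p})\sum_{k\notin\bm p}\lambda_k$ on the state labelled by an $r$-subset $\bm p$.
\item \emph{Off-diagonal entries.} The excess-occupation components of $\partial\phi/\partial V_{k,l}$ lie in mutually orthogonal subspaces $\mathcal{F}_l$ (states with excess occupation in mode $l$), where diagonal derivatives have no support. There, for fixed $l$, a combination $\sum_k\lambda_{kl}\,\partial\phi/\partial V_{k,l}$ has coefficient proportional to $\sum_{k\in\bm P}\lambda_{kl}$ on the state labelled by $\bm P\subset\{1,\dots,n+r\}\setminus\{l\}$ with $|\bm P|=r+1$.
\end{itemize}
Projecting onto each $\mathcal{F}_l$ first kills all off-diagonal coefficients; the diagonal ones are handled next. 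Both steps reduce to one fact: the indicator vectors of all $s$-subsets of an $N$-set span $\mathbb{R}^N$ when $1\le s<N$. For the diagonal block $s=n$ and $N=n+r$. For $\mathcal{F}_l$ one has $s=r+1$, with $N=n+r-1$ if $l\le n+r$ and $N=n+r$ otherwise. The case $l\le n+r$ needs $r+1<n+r-1$, i.e.\ $n>2$, which is exactly where the hypothesis enters.
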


The bounds are obtained by identifying independent variations of the heralded state generated by variations of the submatrix $V$, evaluated at suitably chosen reference points $V^*$ where the data and heralding sectors decouple in a controlled way. The proof relies on a combinatorial analysis of the supports of the derivatives of the output state in the Fock basis, and on the absence of non-trivial linear dependencies among these variations in the considered settings. The detailed proof, including the explicit structure of derivatives and the combinatorial independence arguments, is given in Appendix~\ref{app:lower bounds}.

Notably, in specific regimes, the lower bounds established by Theorem \ref{thm:lower bounds} coincide with the upper bounds predicted by Theorem \ref{thm:upper_bound}, leading to the following exact characterization of the generic HLO rank:

\newtheorem{corollary}{Corollary}
\begin{corollary}[Exact HLO rank in $\Omega \leq 0$ regimes]
Under the same assumptions as in Theorem \ref{thm:lower bounds}, the generic HLO rank is exactly determined in the following configurations: 
For heralding patterns $\ket{\bm{h}} = \ket{\bm{0}}$ with no photons ($r=0$), $\Omega = n-k \leq 0$, and $n \leq m$:
\begin{equation}
R_{\rm{HLO}}(\bm{s'}, \bm{0}) = \min(2n(m-1) + 1, 2D-1).
\end{equation}
For heralding patterns with one heralded photon ($r=1$), $\Omega = n-k+2 \leq0$, $n>2$ and $n+1 \le m$:
\begin{equation}
R_{\rm{HLO}}(\bm{s'}, \bm{h}) = \min(2m(n+1) - 1,  2D-1).
\end{equation}
\end{corollary}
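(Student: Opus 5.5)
The corollary follows by sandwiching: I will show that, under the hypotheses of Theorem~\ref{thm:lower bounds}, the upper bound of Theorem~\ref{thm:upper_bound} takes exactly the same value as the lower bound. I first check that the two theorems apply to the same configuration. The input $\ket{\bm{s}}_{\mathcal{D}}\otimes\ket{\bm{h}}_{\mathcal{H}}$ of Theorem~\ref{thm:lower bounds} is collision-free with $n'=n+r$ photons. Up to a permutation of modes it has the form $|1^{\otimes n'}0^{\otimes(m'-n')}\rangle$ assumed in Theorem~\ref{thm:upper_bound}. A mode permutation is a linear-optical unitary, so it does not change the generic rank. The heralding pattern $|1^{\otimes r}0^{\otimes(k-r)}\rangle$ has $k_{\rm occ}=r$. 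Then $\Omega=n'-(k-k_{\rm occ})=n+r-k+r=n-k+2r$, which is the quantity appearing in Theorem~\ref{thm:lower bounds}. Hence the hypothesis $\Omega\le 0$ places us in the first branch of $f$, so that $R_{\rm HLO}\le\min(2D-1,R_{\rm free})$.

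Next I evaluate $R_{\rm free}=2pq-2P+1$ in the two cases, using $p=n+r$, $q=m+r$ and $P=n+2r$.

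For $r=0$ we have $p=n$, $q=m$ and $P=n$, so
\begin{equation}
R_{\rm free}=2nm-2n+1=2n(m-1)+1.
\end{equation}

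For $r=1$ we have $p=n+1$, $q=m+1$ and $P=n+2$, so
\begin{equation}
R_{\rm free}=2(n+1)(m+1)-2(n+2)+1=2m(n+1)-1.
\end{equation}

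In both cases the upper bound $\min(2D-1,R_{\rm free})$ coincides with the lower bound of Theorem~\ref{thm:lower bounds}. Since the lower bound holds under the stated assumptions ($n\le m$ for $r=0$; $n>2$ and $n+1\le m$ for $r=1$), equality follows.

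I expect no real obstacle here, since all the work is contained in the two theorems. The only delicate point is bookkeeping: the conventions for $\Omega$, $k_{\rm occ}$ and the input-mode labelling must match between the two statements. It is also worth noting why the exact result stops at $r=1$. For $r\ge2$ one finds $R_{\rm free}=2(n+r)(m+r)-2(n+2r)+1$, which exceeds $2m(n+r)-1$ by $2r^2-2r+2>0$. The bounds therefore separate, and Theorem~\ref{thm:lower bounds} alone no longer pins down the rank.
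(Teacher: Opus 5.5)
Your argument is correct and matches the paper's: with $k_{\rm occ}=r$ and $\Omega\le 0$, the quantity $R_{\rm free}$ of Theorem~\ref{thm:upper_bound} reduces to $2n(m-1)+1$ for $r=0$ and to $2m(n+1)-1$ for $r=1$, which coincide with the lower bounds of Theorem~\ref{thm:lower bounds}. One slip in your closing remark: the general gap is $R_{\rm free}-\bigl(2m(n+r)-1\bigr)=2(r-1)(n+r-1)$, not $2r^2-2r+2$ (your expression does not vanish at $r=1$, contradicting your own computation), although your conclusion that the two bounds separate for $r\ge 2$ still holds.
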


\newtheorem{conjecture}{Conjecture}
\subsection{Generic HLO rank conjecture}\label{sec:conjecture}

We now formulate a conjecture describing how different input states and heralding patterns affect the generic Jacobian rank in heralded linear optics (HLO), based on extensive numerical evidence.

\begin{conjecture}[Generic HLO rank for $n>2$]\label{conj:HLO_rank}

Under the same assumptions and notation as in Theorem~\ref{thm:upper_bound}, assume either $n = n'-r > 2$ (with arbitrary heralding pattern) or $r=0$ (with arbitrary $n$ and $k$).
Assume furthermore that the output state satisfies $n \le m$, where $m=m'-k$.

Then the upper bound given in Theorem~\ref{thm:upper_bound} is saturated, i.e.,
\begin{equation}
R_{\rm HLO}(\bm{s'}, \bm{h}) =
\min\!\big[2D-1,\,
f(m',n',k,k_{\rm occ})\big].
\end{equation}
\end{conjecture}

The conjecture is analytically proven in the regimes where the lower bounds derived in Section~\ref{sec:lower bounds} coincide with the upper bound of Theorem~\ref{thm:upper_bound}, and has been numerically tested for all configurations up to $n'=7$ and $m'=12$. In particular, to validate the conjecture, a numerical evaluation of the Jacobian rank was performed by computing the state derivatives from Eqs.~\eqref{eq: heralded derivative} and \eqref{eq: derivative normalization} at Haar-random points $U'$. The procedure was repeated for different input states of the form $\ket{\bm{s'}} = | 1^{\otimes n'} 0^{\otimes (m'-n')}\rangle$, corresponding to $n'$ photons distributed one per mode over $m'$ modes, and for different heralding patterns. The numerical results consistently support the conjecture in the regime $n \le m$. 

In the complementary regime $n > m$, which lies outside the scope of Conjecture~\ref{conj:HLO_rank}, the upper bound is still observed to be saturated in the majority of the tested configurations, with isolated exceptions. Within the explored parameter range, the only observed exception occurs for output states with $n=4$ photons in $m=3$ modes, obtained from an input state with $n'=5$ and $m' \geq 7$, and a heralding pattern with $r=1$ and $k \geq 4$. In this case, the upper bound predicts a rank 
\begin{equation}
    R_{\rm free} = 2pq - 2P + 1 = 29.
\end{equation}
However, numerical evaluation consistently yields a rank of $27$, revealing a systematic deficit of two real degrees of freedom. This discrepancy suggests the emergence in these configurations of non-trivial algebraic constraints that go beyond the rescaling symmetries and unitarity arguments considered here.

An immediate consequence of Conjecture~\ref{conj:HLO_rank} is a sufficient condition for achieving full controllability in configurations with $n>2$ and $n \leq m$. 

\begin{corollary}[Sufficient condition for maximal HLO rank]\label{cor:1}
 
Assuming that Conjecture~\ref{conj:HLO_rank} holds true, 
let $\ket{\bm{s'}} = \ket{1^{\otimes n'} 0^{\otimes (m'-n')}}$ be the input state and let the heralding projector be specified by the state $\ket{\bm{h}}$ with $r$ photons in $k$ modes, with $k_{\rm occ}$ occupied modes, yielding an output state with $n=n'-r>2$ photons in $m = m'-k \geq n$ modes. A sufficient condition for achieving the maximal HLO rank $R_{\rm HLO}(\bm{s'},\bm{h}) = 2D-1$, with $D=\binom{m+n-1}{n}$, is that the input state and the heralding pattern satisfy
\begin{equation}\label{eq:suff_condition_new}
f(m',n',k,k_{\rm occ}) \ \ge \ 2D-1.
\end{equation}
\end{corollary}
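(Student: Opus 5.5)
This corollary follows directly from Conjecture~\ref{conj:HLO_rank}, together with the a priori bound $\mathrm{rank}\, J \le 2D-1$. The plan is as follows.

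First, we check that the hypotheses of the corollary fall inside the scope of the conjecture. The input is collision-free of the form $\ket{1^{\otimes n'}0^{\otimes(m'-n')}}$, exactly as in Theorem~\ref{thm:upper_bound}. The output has $n=n'-r>2$, which is one of the two admissible alternatives. Finally, $m=m'-k\ge n$. Hence, assuming the conjecture,
\begin{equation}
R_{\rm HLO}(\bm{s'},\bm{h}) = \min\!\big[2D-1,\, f(m',n',k,k_{\rm occ})\big].
\end{equation}

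Second, we substitute the assumed inequality \eqref{eq:suff_condition_new}. If $f(m',n',k,k_{\rm occ})\ge 2D-1$, the minimum equals $2D-1$, so $R_{\rm HLO}(\bm{s'},\bm{h})=2D-1$.

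Third, we argue that this value is genuinely maximal. From the construction of the real Jacobian, every variation $\delta_G\ket{\psi}$ lies in the tangent space of $S^{2D-1}$ at $\ket{\psi}$, by Eq.~\eqref{eq: derivative normalization}. That tangent space has real dimension $2D-1$, so no configuration can exceed this rank. We should also note, using the real-analyticity result of Appendix~\ref{app: rank}, that the generic rank is attained for almost all $U'\in\mathcal{X}$. Full local controllability therefore holds at generic points, not just at isolated ones, provided $\mathcal{X}$ is nonempty.

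There is no substantive obstacle, since the statement is a direct specialization of the conjecture. The only point requiring care is confirming that $\mathcal{X}\neq\emptyset$, so that the maximum defining $R_{\rm HLO}$ is taken over a nonempty set. This holds because $m\ge n$ allows a unitary routing $r$ input photons onto the heralding pattern and the remaining $n$ photons into distinct data modes, giving $p>0$.
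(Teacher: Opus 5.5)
Your argument is correct and matches the paper's, which presents the corollary as an immediate consequence of Conjecture~\ref{conj:HLO_rank}: under its hypotheses $R_{\rm HLO}=\min[2D-1,f]$, and the assumed inequality $f\ge 2D-1$ makes the minimum equal to $2D-1$. Your side remark that $\mathcal{X}\neq\emptyset$ also holds, with one caveat: when $\ket{\bm{h}}$ has a mode with $r_l\ge 2$, the photons cannot be routed there by a permutation, so you need a mixing unitary with nonzero bunching amplitude (e.g.\ a permanent with repeated columns and positive entries) rather than deterministic routing.
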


This corollary provides a practical guideline to identify the minimal input state required to saturate local controllability for a given target output space. To test this prescription, we examined a range of data outputs with $n>2$ photons, $n \leq m$, and corresponding Fock space dimensions up to $D = 70$. For each output, input configurations of the form $\ket{\bm{s'}} = \ket{1^{\otimes n'} 0^{\otimes (m'-n')}}$ and heralding configurations of the form $\ket{\bm{h}} = \ket{1^{\otimes r} 0^{\otimes (k-r)}}$, with $r = n'-n$ and $k = m'-m$, were chosen to satisfy the sufficient condition of Eq.~\eqref{eq:suff_condition_new}. We focus on configurations where $k_{\rm occ} = r$ (each auxiliary photon occupies a distinct heralding mode), as the rank is observed to increase with the number of occupied heralding modes. The corresponding HLO generic rank was then evaluated numerically at Haar-random points $U'$. To provide a systematic comparison across different inputs, the selection of minimal resources was performed using a score function that penalizes first the total number of photons $n'$ and then the total number of modes $m'$. This ensures that the reported configurations correspond to the smallest resources sufficient to saturate the rank. The results are summarized in Table~\ref{tab: min-aux-results}, which lists the target data outputs, the chosen input $(m',n')$, and the resulting HLO rank. In all cases, the HLO rank achieves the maximal value $2D-1$, providing a numerical indication of the validity of the corollary.

In summary, the corollary implies the following: when the input state is sufficiently large and appropriately populated, generic HLO dynamics attain the full tangent space dimension $2D-1$. This guarantees \emph{local controllability}, i.e., the ability to explore all infinitesimal directions around a given output state. However, to establish that this local property implies the ability to prepare (even approximately) \emph{any} target state in the output Hilbert space, one must consider the global structure of the reachable manifold. This is discussed in Section~\ref{sec:global}.

\begin{table}[t]
\centering

\begin{tabular}{c c c c}
\hline\hline
$(m,n)$ & $(m',n')$ & $D$ & $R_{\rm HLO}(\bm{s',\bm{h}})$ \\
\hline
$(3,3)$ & $(5,4)$ & $10$ & $19$ \\
$(4,3)$ & $(8,5)$ & $20$ & $39$ \\
$(4,4)$ & $(10,7)$ & $35$ & $69$ \\
$(5,3)$ & $(10,6)$ & $35$ & $69$ \\
$(6,3)$ & $(13,7)$  & $56$ &  $111$ \\
$(5,4)$ & $(14,9)$ & $70$ &  $139$ \\
\hline\hline
\end{tabular}
\caption{Minimal input state resources required to satisfy the condition of Corollary~\ref{cor:1} (Eq.~\ref{eq:suff_condition_new}) assuming Conjecture~\ref{conj:HLO_rank}, and corresponding numerically evaluated HLO ranks for selected input states with varying $n$ and $n\leq m$. Notably, in each case the HLO rank saturates the maximum value $2D-1$.}
\label{tab: min-aux-results}
\end{table}

\subsection{The two-photon case ($n=2$)}\label{sec:two_photon_bound}

We stress that Conjecture~\ref{conj:HLO_rank} explicitly excludes the case $n=2$ with arbitrary heralding patterns. In this regime, the structure of the accessible states is fundamentally constrained by the algebraic properties of two-photon states, which are in one-to-one correspondence with complex symmetric matrices. Indeed, any heralded two-photon output state can be written as
\begin{equation}\label{eq:two_photon}
\ket{\psi} = \sum_{i,j=1}^m S_{ij} \hat a_i^\dagger \hat a_j^\dagger \ket{0},
\end{equation}
where $S$ is an $m \times m$ complex symmetric matrix. As shown in~\cite{deGliniasty2024}, such states necessarily satisfy the constraint $\mathrm{rank}(S) \le n'$, where $n'$ is the number of input photons.

It is important to distinguish between global reachability and the local perspective adopted in this work. In~\cite{deGliniasty2024}, it is shown that the condition $n' \ge m$ is sufficient for global reachability of arbitrary two-photon states, provided that an unrestricted number of auxiliary vacuum modes is available. Moreover, in \cite{Aralov2026}, an analytical construction for exact state preparation with minimal resources is provided, showing that any two-photon state in $m$ modes can be generated using $n'=m$ input photons and a single auxiliary mode ($m'=m+1$), by conditioning on the detection of $m-2$ photons in the auxiliary port.

Our Jacobian-rank analysis provides a local geometric characterization consistent with these results. Since $S$ is a $m\times m$ matrix, the bound $\mathrm{rank}(S)\le n'$ imposes no algebraic restriction for $n' \ge m$. Since, moreover, the number of independent complex parameters of a full-rank $m \times m$ symmetric matrix is $m(m+1)/2$, which coincides with the complex dimension $D$ of the output Hilbert space in the case $n=2$, the representation of Eq.\ \eqref{eq:two_photon} is compatible with the fact that two-photon output states can cover a set of full dimension in the output manifold $S^{2D-1}$. For $n' < m$, the input photon number imposes a rank constraint on the state's amplitude matrix, leading to the bound
\begin{equation}\label{eq:two_photon_bound}
R_{\rm HLO} \le 2D-1 - (m-n')(m-n'+1).
\end{equation}
In this case, the bound corresponds to the number of real degrees of freedom of a symmetric complex matrix with rank at most $n'$, reduced by one to account for the overall normalization of the state.

Our numerical results show that these bounds are already saturated within a minimal configuration in terms of auxiliary modes, consistent with the resource-efficient construction proposed in \cite{Aralov2026}. Specifically, for any $m$, and any number of input photons $n' \geq 2$, we find that a configuration with a single auxiliary mode ($m' = m+1$) and a heralding pattern $\ket{\bm{h}} = \ket{n'-2}$ (where all $r = n'-2$ auxiliary photons are collected in the single auxiliary mode) is sufficient to saturate the available degrees of freedom provided by $n'$.  In this configuration, which is minimal with respect to the number of modes, a transition in the role of the input photons $n'$ emerges:
\begin{itemize}[leftmargin=*]
\item If $n' \ge m$, the Jacobian rank attains the maximal value $2D-1$.
\item If $n' < m$, the Jacobian rank saturates the upper bound in Eq.~\eqref{eq:two_photon_bound}.
\end{itemize}
Increasing the number of auxiliary vacuum modes beyond $m'=m+1$ does not lead to any further increase in the Jacobian rank. In particular, the single additional empty auxiliary mode is needed in the $m=2$ case to overcome the geometric constraints of linear optics (as already observed in~\cite{Aniello2006}).  

To summarize, the inapplicability of Conjecture~\ref{conj:HLO_rank} for $n=2$ reflects a genuine structural transition. While for $n>2$ heralding generically unlocks all available degrees of freedom up to the bound predicted by Theorem \ref{thm:upper_bound}, the two-photon case is uniquely restricted by the rank of the underlying symmetric matrix $S$.

\section{Global Reachability}\label{sec:global}

The focus of this section is global reachability for the HLO map, i.e., the ability to approximate any target pure state arbitrarily well with some HLO circuit. To this end, we consider the real normalized HLO map $\tilde{\psi}: \mathcal{X} \to S^{2D-1}$, which assigns to each unitary $U'\in\mathcal{U}(m')$ the real and imaginary parts of the corresponding normalized data state:
\begin{equation}
    \tilde{\psi}(U') = \bigl( \mathrm{Re}\ket{\psi(U',\bm{s'},\bm{h})}, \   \mathrm{Im}\ket{\psi(U',\bm{s'},\bm{h})} \bigr).
\end{equation}
In the following, we restrict our attention to configurations where the HLO map satisfies local full controllability, meaning that its generic Jacobian rank saturates the dimension of the output sphere, $R_{\mathrm{HLO}}(\bm{s'},\bm{h}) = 2D-1$. Under this condition, the Jacobian rank of the map is generically constant and equal to $2D-1$ almost everywhere on the open domain $\mathcal{X}$ where the success probability is nonzero.

If $\tilde{\psi}$ were a smooth submersion on the entire compact manifold $\mathcal{U}(m')$ (i.e., if it had no singular points and maximal rank everywhere), its image $\mathcal{Y} = \tilde{\psi}(\mathcal{U}(m'))$ would coincide with the whole sphere $S^{2D-1}$, which would imply full global reachability (see Appendix \ref{app: global}). In practice, however, there may be a nonempty singular locus,
\begin{equation}
    \mathcal{Z}=\mathcal{U}(m')\setminus\mathcal{X} =\{U'\in\mathcal{U}(m'): p(U',\bm{s'},\bm{h})=0\},
\end{equation} 
corresponding to unitaries for which the heralding probability vanishes.
Moreover, the critical locus
\begin{equation} \mathcal{C}=\{U'\in\mathcal{X}:\mathrm{rank}\,J(U')<R_{\mathrm{HLO}}(\bm{s'},\bm{h})\} \end{equation}
identifying points where the Jacobian rank drops below its generic value, is known to be nonempty for the HLO map. A notable example is the case of unitaries that do not couple the heralded and data modes, where, as previously noticed, the rank reduces to the lower value characteristic of passive linear optics. As discussed in Appendix \ref{app: global}, these loci can in general obstruct global surjectivity creating boundaries in the image of the map. Thus, although $\tilde{\psi}$ is a submersion almost everywhere, a rigorous proof that maximal rank implies global reachability remains an open question. We therefore formulate the following conjecture:

\begin{conjecture}[Global reachability for maximal rank HLO map]\label{conj: 3}
Suppose the HLO map $\tilde{\psi}: \mathcal{X}\to S^{2D-1}$ attains maximal rank almost everywhere on $\mathcal{X}$, i.e.:
\begin{equation}
R_{\rm HLO}(\bm{s'},\bm{h}) = 2D-1.
\end{equation}
Then its image $\mathcal{Y} = \tilde{\psi}(\mathcal{X})$ is dense in $S^{2D-1}$.
\end{conjecture}

Since normalized pure states in the $D$-dimensional complex Fock space correspond to points on the real sphere $S^{2D-1}$, the validity of this conjecture implies that any target state can be approximated arbitrarily well by some HLO circuit, provided that the resource thresholds identified by Corollary \ref{cor:1} (assuming Conjecture \ref{conj:HLO_rank}) are met. A stronger version of this conjecture would be to assume that the image coincides with the whole sphere $S^{2D-1}$, meaning that any target state can be prepared exactly by some HLO circuit. We note that, while in many cases these two statements coincide \cite{Boscain-Approximate-2015}, this is not guaranteed for the HLO map.

The problem of exact state preparation has been recently addressed in \cite{Aralov2026}, where it is shown that any target state with a fixed number of photons can be synthesized given a sufficient number of auxiliary photons and projective measurements. However, identifying exactly the minimal resource requirement remains an open question. Our Jacobian-rank analysis provides a complementary, geometric perspective by identifying the threshold at which the HLO map becomes locally fully controllable, i.e., when the Jacobian rank saturates to $2D-1$. Notably, this threshold can be reached with fewer photons than those required by the currently known constructive schemes. For instance, while the exact preparation of a general 3-photon, 3-mode state is guaranteed in \cite{Aralov2026} using $15$ photons in total, we observe that the Jacobian rank already saturates for the same output configuration with as little as $n'=4$ photons. This suggests that, if Conjecture~\ref{conj: 3} holds, the threshold for global reachability may be attained at significantly lower resource levels than those required by current constructive protocols, leaving room for substantial improvements in existing resource bounds.

Moreover, while Conjecture \ref{conj: 3} remains an open question, the Jacobian-rank criterion still provides a powerful target-independent no-go result. When the generic rank is not maximal ($R_{\rm HLO} < 2D-1$), the accessible set $\mathcal{A}$ is a subset of the image of a smooth map with rank bounded by $R_{\rm HLO}$. By Sard’s theorem \cite{Sard1942}, such an image has measure zero in $S^{2D-1}$, which means that in that case exact state preparation is impossible for almost all target states.
This yields a necessary condition for generic state preparation that is stricter than the fundamental lower bound $n' \ge \max(n, m)$ established in \cite{Aralov2026}. Indeed, as shown in Table \ref{tab: min-aux-results}, the number of photons $n'$ required for the Jacobian rank to reach its maximal value is consistently higher than this lower bound, reflecting additional geometric constraints imposed by the structure of the HLO map.

It is important to note, however, that while this geometric approach establishes the existence of a full-measure set of unreachable states when rank is not maximized, it does not directly identify which specific targets are excluded. The rank condition identifies an obstruction based on the global counting of degrees of freedom, but does not yield state-specific no-go results beyond those already known for restricted settings such as passive linear optics \cite{Migdal2014, Escartin2019, Parellada2024}. Moreover, a measure-zero image does not, in principle, exclude density.

To gain further insight into the accessibility of the state space, we investigate the reachability of Haar-random target states through numerical optimization. As detailed in Appendix \ref{app: global numerical}, we study the minimum infidelity achieved for a large set of random targets as a function of the available HLO resources. Our numerical experiments demonstrate that when $R_{\rm HLO} < 2D-1$, the optimization consistently plateaus at a finite distance from the target, suggesting that the accessible set is not only of measure zero but also fails to be dense. In contrast, when the rank is maximal, the infidelity exhibits a sustained exponential decay with the number of optimization steps, providing numerical support for Conjecture~\ref{conj: 3}.

Finally, in Appendix \ref{app: frame potential}, we further explore numerically the global properties of the HLO map by computing the frame potentials of the generated state ensembles \cite{Ambainis2007,Mele2024}. This analysis confirms that the enhanced local controllability granted by HLO translates into increasing expressivity. Specifically, we observe that the output distribution of HLO circuits can approximate high-order $t$-designs more closely than PLO circuits, and the approximation improves with increased ancillary and heralding resources.

\section{Conclusion}

In this work, we extended the Jacobian-rank approach proposed in \cite{Mamon2025} to characterize local controllability in heralded linear optical networks. By analyzing the tangent space generated by infinitesimal variations of the underlying unitary circuit, we characterized how the introduction of enlarged input states together with projective measurements expands the locally accessible state space. In particular, we formulated upper and lower bounds on the Jacobian rank attainable in heralded linear optics, addressing the general case with a conjecture based on analytic results in specific regimes and extensive numerical simulations. These results provide a new, target-independent characterization of local controllability in heralded schemes, and yield explicit bounds on the resources necessary to achieve maximal local controllability. In particular, our results cover configurations with \(n>2\) photons, for which limited results were previously available.

It is instructive to contrast the present Jacobian-rank approach with algebraic feasibility methods such as \textit{NulLA} \cite{Singh2026}. While algebraic methods provide powerful state-specific no-go certificates, they do not yield information on the dimension or geometry of the reachable state space, nor do they provide target-independent statements on controllability. By contrast, the Jacobian-rank approach directly characterizes the dimension of the locally accessible manifold and provides a scalable, target-independent criterion to assess controllability. Moreover, a rank strictly smaller than the maximal value immediately implies an obstruction to generic state preparation, which is not directly accessible via such infeasibility methods. It is also natural to compare our findings with recently proposed algebraic synthesis protocols \cite{Kopylov2025, Aralov2026}. While these works provide sufficient conditions for global reachability, demonstrating that any state can be prepared exactly given enough auxiliary photons, their resource requirements may not be optimal in general. In contrast, our geometric analysis identifies a regime of full local controllability for the HLO map, characterized by the saturation of the Jacobian rank. Notably, we find that this regime is reached at resource levels lower than those required by existing constructive protocols for exact state preparation. Furthermore, while algebraic protocols such as the one proposed in \cite{Aralov2026} rely on the sequential and approximate application of many photon-addition modules, our framework leverages the full expressive power of a single, global unitary. The fact that the Jacobian rank saturates with fewer resources suggests that state preparation may be more efficiently addressed through global optimization of the interferometer than via modular algebraic synthesis, thus reducing the experimental overhead.

Throughout this work, a state is considered reachable whenever it can be generated with nonzero heralding probability, irrespective of the magnitude of that probability. A quantitative study of the scalings of these probabilities therefore remains an interesting follow-up direction. Another promising direction concerns the direct characterization of controllability in photonic protocols designed to go beyond passive linear optics. Indeed, as discussed in Appendix~\ref{app:other schemes}, linear optical architectures involving intermediate measurements can be recast within the heralded linear optics framework at the level of single measurement branches. This shows that the present approach can be directly applied to assess local controllability in a broad class of photonic protocols, including adaptive and measurement-based schemes. 

Overall, this work establishes a framework to assess controllability and resource requirements in high-dimensional photonic state engineering, and provides a quantitative tool to characterize and compare photonic architectures incorporating intermediate measurements or heralding. We expect these results to offer useful guidance in the design of scalable photonic protocols and to stimulate further investigation of the interplay between measurement, resources, and controllability in quantum optical systems.

We provide the Python source code for the numerical evaluation of the HLO Jacobian rank on GitHub \cite{HloRepo}.

\section*{Acknowledgements}

This work is supported by the ERC Advanced Grant QUBOSS (QUantum advantage via non-linear BOSon Sampling, Grant Agreement No. 884676), the ICSC--Centro Nazionale di Ricerca in High Performance Computing, Big Data and Quantum Computing, funded by European Union--NextGenerationEU, and the European Union’s Horizon Europe research and innovation program under EPIQUE Project (Grant Agreement No. 101135288). EZM is supported by the grant ANR-22-PNCQ-0002.

%\bibliography{main}

\begin{thebibliography}{57}%
\makeatletter
\providecommand \@ifxundefined [1]{%
 \@ifx{#1\undefined}
}%
\providecommand \@ifnum [1]{%
 \ifnum #1\expandafter \@firstoftwo
 \else \expandafter \@secondoftwo
 \fi
}%
\providecommand \@ifx [1]{%
 \ifx #1\expandafter \@firstoftwo
 \else \expandafter \@secondoftwo
 \fi
}%
\providecommand \natexlab [1]{#1}%
\providecommand \enquote  [1]{``#1''}%
\providecommand \bibnamefont  [1]{#1}%
\providecommand \bibfnamefont [1]{#1}%
\providecommand \citenamefont [1]{#1}%
\providecommand \href@noop [0]{\@secondoftwo}%
\providecommand \href [0]{\begingroup \@sanitize@url \@href}%
\providecommand \@href[1]{\@@startlink{#1}\@@href}%
\providecommand \@@href[1]{\endgroup#1\@@endlink}%
\providecommand \@sanitize@url [0]{\catcode `\\12\catcode `\$12\catcode
  `\&12\catcode `\#12\catcode `\^12\catcode `\_12\catcode `\%12\relax}%
\providecommand \@@startlink[1]{}%
\providecommand \@@endlink[0]{}%
\providecommand \url  [0]{\begingroup\@sanitize@url \@url }%
\providecommand \@url [1]{\endgroup\@href {#1}{\urlprefix }}%
\providecommand \urlprefix  [0]{URL }%
\providecommand \Eprint [0]{\href }%
\providecommand \doibase [0]{https://doi.org/}%
\providecommand \selectlanguage [0]{\@gobble}%
\providecommand \bibinfo  [0]{\@secondoftwo}%
\providecommand \bibfield  [0]{\@secondoftwo}%
\providecommand \translation [1]{[#1]}%
\providecommand \BibitemOpen [0]{}%
\providecommand \bibitemStop [0]{}%
\providecommand \bibitemNoStop [0]{.\EOS\space}%
\providecommand \EOS [0]{\spacefactor3000\relax}%
\providecommand \BibitemShut  [1]{\csname bibitem#1\endcsname}%
\let\auto@bib@innerbib\@empty
%</preamble>
\bibitem [{\citenamefont {Kok}\ \emph {et~al.}(2007)\citenamefont {Kok},
  \citenamefont {Munro}, \citenamefont {Nemoto}, \citenamefont {Ralph},
  \citenamefont {Dowling},\ and\ \citenamefont {Milburn}}]{Kok2007}%
  \BibitemOpen
  \bibfield  {author} {\bibinfo {author} {\bibfnamefont {P.}~\bibnamefont
  {Kok}}, \bibinfo {author} {\bibfnamefont {W.~J.}\ \bibnamefont {Munro}},
  \bibinfo {author} {\bibfnamefont {K.}~\bibnamefont {Nemoto}}, \bibinfo
  {author} {\bibfnamefont {T.~C.}\ \bibnamefont {Ralph}}, \bibinfo {author}
  {\bibfnamefont {J.~P.}\ \bibnamefont {Dowling}},\ and\ \bibinfo {author}
  {\bibfnamefont {G.~J.}\ \bibnamefont {Milburn}},\ }\bibfield  {title}
  {\bibinfo {title} {Linear optical quantum computing with photonic qubits},\
  }\href {https://doi.org/10.1103/revmodphys.79.135} {\bibfield  {journal}
  {\bibinfo  {journal} {Reviews of Modern Physics}\ }\textbf {\bibinfo {volume}
  {79}},\ \bibinfo {pages} {135–174} (\bibinfo {year} {2007})}\BibitemShut
  {NoStop}%
\bibitem [{\citenamefont {Knill}\ \emph {et~al.}(2001)\citenamefont {Knill},
  \citenamefont {Laflamme},\ and\ \citenamefont {Milburn}}]{Knill2001}%
  \BibitemOpen
  \bibfield  {author} {\bibinfo {author} {\bibfnamefont {E.}~\bibnamefont
  {Knill}}, \bibinfo {author} {\bibfnamefont {R.}~\bibnamefont {Laflamme}},\
  and\ \bibinfo {author} {\bibfnamefont {G.~J.}\ \bibnamefont {Milburn}},\
  }\bibfield  {title} {\bibinfo {title} {A scheme for efficient quantum
  computation with linear optics},\ }\href {https://doi.org/10.1038/35051009}
  {\bibfield  {journal} {\bibinfo  {journal} {Nature}\ }\textbf {\bibinfo
  {volume} {409}},\ \bibinfo {pages} {46–52} (\bibinfo {year}
  {2001})}\BibitemShut {NoStop}%
\bibitem [{\citenamefont {Lapaire}\ \emph {et~al.}(2003)\citenamefont
  {Lapaire}, \citenamefont {Kok}, \citenamefont {Dowling},\ and\ \citenamefont
  {Sipe}}]{Lapaire2003}%
  \BibitemOpen
  \bibfield  {author} {\bibinfo {author} {\bibfnamefont {G.~G.}\ \bibnamefont
  {Lapaire}}, \bibinfo {author} {\bibfnamefont {P.}~\bibnamefont {Kok}},
  \bibinfo {author} {\bibfnamefont {J.~P.}\ \bibnamefont {Dowling}},\ and\
  \bibinfo {author} {\bibfnamefont {J.~E.}\ \bibnamefont {Sipe}},\ }\bibfield
  {title} {\bibinfo {title} {Conditional linear-optical measurement schemes
  generate effective photon nonlinearities},\ }\href
  {https://doi.org/10.1103/physreva.68.042314} {\bibfield  {journal} {\bibinfo
  {journal} {Physical Review A}\ }\textbf {\bibinfo {volume} {68}},\ \bibinfo
  {pages} {042314} (\bibinfo {year} {2003})}\BibitemShut {NoStop}%
\bibitem [{\citenamefont {Scheel}\ \emph {et~al.}(2003)\citenamefont {Scheel},
  \citenamefont {Nemoto}, \citenamefont {Munro},\ and\ \citenamefont
  {Knight}}]{Scheel2003}%
  \BibitemOpen
  \bibfield  {author} {\bibinfo {author} {\bibfnamefont {S.}~\bibnamefont
  {Scheel}}, \bibinfo {author} {\bibfnamefont {K.}~\bibnamefont {Nemoto}},
  \bibinfo {author} {\bibfnamefont {W.~J.}\ \bibnamefont {Munro}},\ and\
  \bibinfo {author} {\bibfnamefont {P.~L.}\ \bibnamefont {Knight}},\ }\bibfield
   {title} {\bibinfo {title} {Measurement-induced nonlinearity in linear
  optics},\ }\href {https://doi.org/10.1103/physreva.68.032310} {\bibfield
  {journal} {\bibinfo  {journal} {Physical Review A}\ }\textbf {\bibinfo
  {volume} {68}},\ \bibinfo {pages} {032310} (\bibinfo {year}
  {2003})}\BibitemShut {NoStop}%
\bibitem [{\citenamefont {Aaronson}\ and\ \citenamefont
  {Arkhipov}(2011)}]{Aaronson2011}%
  \BibitemOpen
  \bibfield  {author} {\bibinfo {author} {\bibfnamefont {S.}~\bibnamefont
  {Aaronson}}\ and\ \bibinfo {author} {\bibfnamefont {A.}~\bibnamefont
  {Arkhipov}},\ }\bibfield  {title} {\bibinfo {title} {The computational
  complexity of linear optics},\ }in\ \href
  {https://doi.org/10.1145/1993636.1993682} {\emph {\bibinfo {booktitle}
  {Proceedings of the forty-third annual ACM symposium on Theory of
  computing}}},\ \bibinfo {series and number} {STOC’11}\ (\bibinfo
  {publisher} {ACM},\ \bibinfo {year} {2011})\ p.\ \bibinfo {pages}
  {333–342}\BibitemShut {NoStop}%
\bibitem [{\citenamefont {Brod}\ \emph {et~al.}(2019)\citenamefont {Brod},
  \citenamefont {Galv{\~a}o}, \citenamefont {Crespi}, \citenamefont {Osellame},
  \citenamefont {Spagnolo},\ and\ \citenamefont {Sciarrino}}]{Brod2019}%
  \BibitemOpen
  \bibfield  {author} {\bibinfo {author} {\bibfnamefont {D.~J.}\ \bibnamefont
  {Brod}}, \bibinfo {author} {\bibfnamefont {E.~F.}\ \bibnamefont
  {Galv{\~a}o}}, \bibinfo {author} {\bibfnamefont {A.}~\bibnamefont {Crespi}},
  \bibinfo {author} {\bibfnamefont {R.}~\bibnamefont {Osellame}}, \bibinfo
  {author} {\bibfnamefont {N.}~\bibnamefont {Spagnolo}},\ and\ \bibinfo
  {author} {\bibfnamefont {F.}~\bibnamefont {Sciarrino}},\ }\bibfield  {title}
  {\bibinfo {title} {{Photonic implementation of boson sampling: a review}},\
  }\href {https://doi.org/10.1117/1.AP.1.3.034001} {\bibfield  {journal}
  {\bibinfo  {journal} {Advanced Photonics}\ }\textbf {\bibinfo {volume} {1}},\
  \bibinfo {pages} {034001} (\bibinfo {year} {2019})}\BibitemShut {NoStop}%
\bibitem [{\citenamefont {Chabaud}\ \emph {et~al.}(2021)\citenamefont
  {Chabaud}, \citenamefont {Markham},\ and\ \citenamefont
  {Sohbi}}]{Chabaud2021}%
  \BibitemOpen
  \bibfield  {author} {\bibinfo {author} {\bibfnamefont {U.}~\bibnamefont
  {Chabaud}}, \bibinfo {author} {\bibfnamefont {D.}~\bibnamefont {Markham}},\
  and\ \bibinfo {author} {\bibfnamefont {A.}~\bibnamefont {Sohbi}},\ }\bibfield
   {title} {\bibinfo {title} {Quantum machine learning with adaptive linear
  optics},\ }\href {https://doi.org/10.22331/q-2021-07-05-496} {\bibfield
  {journal} {\bibinfo  {journal} {Quantum}\ }\textbf {\bibinfo {volume} {5}},\
  \bibinfo {pages} {496} (\bibinfo {year} {2021})}\BibitemShut {NoStop}%
\bibitem [{\citenamefont {Spagnolo}\ \emph {et~al.}(2023)\citenamefont
  {Spagnolo}, \citenamefont {Brod}, \citenamefont {Galvão},\ and\
  \citenamefont {Sciarrino}}]{Spagnolo2023}%
  \BibitemOpen
  \bibfield  {author} {\bibinfo {author} {\bibfnamefont {N.}~\bibnamefont
  {Spagnolo}}, \bibinfo {author} {\bibfnamefont {D.~J.}\ \bibnamefont {Brod}},
  \bibinfo {author} {\bibfnamefont {E.~F.}\ \bibnamefont {Galvão}},\ and\
  \bibinfo {author} {\bibfnamefont {F.}~\bibnamefont {Sciarrino}},\ }\bibfield
  {title} {\bibinfo {title} {Non-linear boson sampling},\ }\href
  {https://doi.org/10.1038/s41534-023-00676-x} {\bibfield  {journal} {\bibinfo
  {journal} {npj Quantum Information}\ }\textbf {\bibinfo {volume} {9}},\
  \bibinfo {pages} {3} (\bibinfo {year} {2023})}\BibitemShut {NoStop}%
\bibitem [{\citenamefont {Hoch}\ \emph {et~al.}(2025)\citenamefont {Hoch},
  \citenamefont {Caruccio}, \citenamefont {Rodari}, \citenamefont
  {Francalanci}, \citenamefont {Suprano}, \citenamefont {Giordani},
  \citenamefont {Carvacho}, \citenamefont {Spagnolo}, \citenamefont {Koudia},
  \citenamefont {Proietti}, \citenamefont {Liorni}, \citenamefont {Cerocchi},
  \citenamefont {Albiero}, \citenamefont {Di~Giano}, \citenamefont {Gardina},
  \citenamefont {Ceccarelli}, \citenamefont {Corrielli}, \citenamefont
  {Chabaud}, \citenamefont {Osellame}, \citenamefont {Dispenza},\ and\
  \citenamefont {Sciarrino}}]{Hoch2025}%
  \BibitemOpen
  \bibfield  {author} {\bibinfo {author} {\bibfnamefont {F.}~\bibnamefont
  {Hoch}}, \bibinfo {author} {\bibfnamefont {E.}~\bibnamefont {Caruccio}},
  \bibinfo {author} {\bibfnamefont {G.}~\bibnamefont {Rodari}}, \bibinfo
  {author} {\bibfnamefont {T.}~\bibnamefont {Francalanci}}, \bibinfo {author}
  {\bibfnamefont {A.}~\bibnamefont {Suprano}}, \bibinfo {author} {\bibfnamefont
  {T.}~\bibnamefont {Giordani}}, \bibinfo {author} {\bibfnamefont
  {G.}~\bibnamefont {Carvacho}}, \bibinfo {author} {\bibfnamefont
  {N.}~\bibnamefont {Spagnolo}}, \bibinfo {author} {\bibfnamefont
  {S.}~\bibnamefont {Koudia}}, \bibinfo {author} {\bibfnamefont
  {M.}~\bibnamefont {Proietti}}, \bibinfo {author} {\bibfnamefont
  {C.}~\bibnamefont {Liorni}}, \bibinfo {author} {\bibfnamefont
  {F.}~\bibnamefont {Cerocchi}}, \bibinfo {author} {\bibfnamefont
  {R.}~\bibnamefont {Albiero}}, \bibinfo {author} {\bibfnamefont
  {N.}~\bibnamefont {Di~Giano}}, \bibinfo {author} {\bibfnamefont
  {M.}~\bibnamefont {Gardina}}, \bibinfo {author} {\bibfnamefont
  {F.}~\bibnamefont {Ceccarelli}}, \bibinfo {author} {\bibfnamefont
  {G.}~\bibnamefont {Corrielli}}, \bibinfo {author} {\bibfnamefont
  {U.}~\bibnamefont {Chabaud}}, \bibinfo {author} {\bibfnamefont
  {R.}~\bibnamefont {Osellame}}, \bibinfo {author} {\bibfnamefont
  {M.}~\bibnamefont {Dispenza}},\ and\ \bibinfo {author} {\bibfnamefont
  {F.}~\bibnamefont {Sciarrino}},\ }\bibfield  {title} {\bibinfo {title}
  {Quantum machine learning with adaptive boson sampling via post-selection},\
  }\href {https://doi.org/10.1038/s41467-025-55877-z} {\bibfield  {journal}
  {\bibinfo  {journal} {Nature Communications}\ }\textbf {\bibinfo {volume}
  {16}},\ \bibinfo {pages} {902} (\bibinfo {year} {2025})}\BibitemShut
  {NoStop}%
\bibitem [{\citenamefont {Monbroussou}\ \emph
  {et~al.}(2025{\natexlab{a}})\citenamefont {Monbroussou}, \citenamefont
  {Mamon}, \citenamefont {Thomas}, \citenamefont {Yacoub}, \citenamefont
  {Chabaud},\ and\ \citenamefont {Kashefi}}]{Monbroussou2025}%
  \BibitemOpen
  \bibfield  {author} {\bibinfo {author} {\bibfnamefont {L.}~\bibnamefont
  {Monbroussou}}, \bibinfo {author} {\bibfnamefont {E.~Z.}\ \bibnamefont
  {Mamon}}, \bibinfo {author} {\bibfnamefont {H.}~\bibnamefont {Thomas}},
  \bibinfo {author} {\bibfnamefont {V.}~\bibnamefont {Yacoub}}, \bibinfo
  {author} {\bibfnamefont {U.}~\bibnamefont {Chabaud}},\ and\ \bibinfo {author}
  {\bibfnamefont {E.}~\bibnamefont {Kashefi}},\ }\bibfield  {title} {\bibinfo
  {title} {Toward quantum advantage with photonic state injection},\ }\href
  {https://doi.org/10.1103/physrevresearch.7.033051} {\bibfield  {journal}
  {\bibinfo  {journal} {Physical Review Research}\ }\textbf {\bibinfo {volume}
  {7}},\ \bibinfo {pages} {033051} (\bibinfo {year}
  {2025}{\natexlab{a}})}\BibitemShut {NoStop}%
\bibitem [{\citenamefont {Monbroussou}\ \emph
  {et~al.}(2025{\natexlab{b}})\citenamefont {Monbroussou}, \citenamefont
  {Polacchi}, \citenamefont {Yacoub}, \citenamefont {Caruccio}, \citenamefont
  {Rodari}, \citenamefont {Hoch}, \citenamefont {Carvacho}, \citenamefont
  {Spagnolo}, \citenamefont {Giordani}, \citenamefont {Bossi}, \citenamefont
  {Rajan}, \citenamefont {Di~Giano}, \citenamefont {Albiero}, \citenamefont
  {Ceccarelli}, \citenamefont {Osellame}, \citenamefont {Kashefi},\ and\
  \citenamefont {Sciarrino}}]{Monbroussou2025B}%
  \BibitemOpen
  \bibfield  {author} {\bibinfo {author} {\bibfnamefont {L.}~\bibnamefont
  {Monbroussou}}, \bibinfo {author} {\bibfnamefont {B.}~\bibnamefont
  {Polacchi}}, \bibinfo {author} {\bibfnamefont {V.}~\bibnamefont {Yacoub}},
  \bibinfo {author} {\bibfnamefont {E.}~\bibnamefont {Caruccio}}, \bibinfo
  {author} {\bibfnamefont {G.}~\bibnamefont {Rodari}}, \bibinfo {author}
  {\bibfnamefont {F.}~\bibnamefont {Hoch}}, \bibinfo {author} {\bibfnamefont
  {G.}~\bibnamefont {Carvacho}}, \bibinfo {author} {\bibfnamefont
  {N.}~\bibnamefont {Spagnolo}}, \bibinfo {author} {\bibfnamefont
  {T.}~\bibnamefont {Giordani}}, \bibinfo {author} {\bibfnamefont
  {M.}~\bibnamefont {Bossi}}, \bibinfo {author} {\bibfnamefont
  {A.}~\bibnamefont {Rajan}}, \bibinfo {author} {\bibfnamefont
  {N.}~\bibnamefont {Di~Giano}}, \bibinfo {author} {\bibfnamefont
  {R.}~\bibnamefont {Albiero}}, \bibinfo {author} {\bibfnamefont
  {F.}~\bibnamefont {Ceccarelli}}, \bibinfo {author} {\bibfnamefont
  {R.}~\bibnamefont {Osellame}}, \bibinfo {author} {\bibfnamefont
  {E.}~\bibnamefont {Kashefi}},\ and\ \bibinfo {author} {\bibfnamefont
  {F.}~\bibnamefont {Sciarrino}},\ }\bibfield  {title} {\bibinfo {title}
  {Photonic quantum convolutional neural networks with adaptive state
  injection},\ }\href {https://doi.org/10.1117/1.ap.7.6.066012} {\bibfield
  {journal} {\bibinfo  {journal} {Advanced Photonics}\ }\textbf {\bibinfo
  {volume} {7}},\ \bibinfo {pages} {066012} (\bibinfo {year}
  {2025}{\natexlab{b}})}\BibitemShut {NoStop}%
\bibitem [{\citenamefont {Knill}(2002)}]{Knill2002}%
  \BibitemOpen
  \bibfield  {author} {\bibinfo {author} {\bibfnamefont {E.}~\bibnamefont
  {Knill}},\ }\bibfield  {title} {\bibinfo {title} {Quantum gates using linear
  optics and postselection},\ }\href
  {https://doi.org/10.1103/physreva.66.052306} {\bibfield  {journal} {\bibinfo
  {journal} {Physical Review A}\ }\textbf {\bibinfo {volume} {66}},\ \bibinfo
  {pages} {052306} (\bibinfo {year} {2002})}\BibitemShut {NoStop}%
\bibitem [{\citenamefont {VanMeter}\ \emph {et~al.}(2007)\citenamefont
  {VanMeter}, \citenamefont {Lougovski}, \citenamefont {Uskov}, \citenamefont
  {Kieling}, \citenamefont {Eisert},\ and\ \citenamefont
  {Dowling}}]{VanMeter2007}%
  \BibitemOpen
  \bibfield  {author} {\bibinfo {author} {\bibfnamefont {N.~M.}\ \bibnamefont
  {VanMeter}}, \bibinfo {author} {\bibfnamefont {P.}~\bibnamefont {Lougovski}},
  \bibinfo {author} {\bibfnamefont {D.~B.}\ \bibnamefont {Uskov}}, \bibinfo
  {author} {\bibfnamefont {K.}~\bibnamefont {Kieling}}, \bibinfo {author}
  {\bibfnamefont {J.}~\bibnamefont {Eisert}},\ and\ \bibinfo {author}
  {\bibfnamefont {J.~P.}\ \bibnamefont {Dowling}},\ }\bibfield  {title}
  {\bibinfo {title} {General linear-optical quantum state generation scheme:
  Applications to maximally path-entangled states},\ }\href
  {https://doi.org/10.1103/physreva.76.063808} {\bibfield  {journal} {\bibinfo
  {journal} {Physical Review A}\ }\textbf {\bibinfo {volume} {76}},\ \bibinfo
  {pages} {063808} (\bibinfo {year} {2007})}\BibitemShut {NoStop}%
\bibitem [{\citenamefont {Singh}\ \emph {et~al.}(2026)\citenamefont {Singh},
  \citenamefont {Marshman}, \citenamefont {Villegas-Aguilar}, \citenamefont
  {Eisert},\ and\ \citenamefont {Tischler}}]{Singh2026}%
  \BibitemOpen
  \bibfield  {author} {\bibinfo {author} {\bibfnamefont {D.}~\bibnamefont
  {Singh}}, \bibinfo {author} {\bibfnamefont {R.~J.}\ \bibnamefont {Marshman}},
  \bibinfo {author} {\bibfnamefont {L.}~\bibnamefont {Villegas-Aguilar}},
  \bibinfo {author} {\bibfnamefont {J.}~\bibnamefont {Eisert}},\ and\ \bibinfo
  {author} {\bibfnamefont {N.}~\bibnamefont {Tischler}},\ }\href
  {https://doi.org/10.48550/arXiv.2602.09495} {\bibinfo {title} {Rigorous no-go
  theorems for heralded linear-optical state generation tasks}} (\bibinfo
  {year} {2026}),\ \Eprint {https://arxiv.org/abs/2602.09495} {arXiv:2602.09495
  [quant-ph]} \BibitemShut {NoStop}%
\bibitem [{\citenamefont {Aniello}\ \emph
  {et~al.}(2006{\natexlab{a}})\citenamefont {Aniello}, \citenamefont {Lupo},
  \citenamefont {Napolitano},\ and\ \citenamefont {Paris}}]{Aniello2006}%
  \BibitemOpen
  \bibfield  {author} {\bibinfo {author} {\bibfnamefont {P.}~\bibnamefont
  {Aniello}}, \bibinfo {author} {\bibfnamefont {C.}~\bibnamefont {Lupo}},
  \bibinfo {author} {\bibfnamefont {M.}~\bibnamefont {Napolitano}},\ and\
  \bibinfo {author} {\bibfnamefont {M.~G.}\ \bibnamefont {Paris}},\ }\bibfield
  {title} {\bibinfo {title} {Engineering multiphoton states for linear optics
  computation},\ }\href {https://doi.org/10.1140/epjd/e2006-00259-y} {\bibfield
   {journal} {\bibinfo  {journal} {The European Physical Journal D}\ }\textbf
  {\bibinfo {volume} {41}},\ \bibinfo {pages} {579–587} (\bibinfo {year}
  {2006}{\natexlab{a}})}\BibitemShut {NoStop}%
\bibitem [{\citenamefont {de~Gliniasty}\ \emph {et~al.}(2024)\citenamefont
  {de~Gliniasty}, \citenamefont {Bagourd}, \citenamefont {Draux},\ and\
  \citenamefont {Bourdoncle}}]{deGliniasty2024}%
  \BibitemOpen
  \bibfield  {author} {\bibinfo {author} {\bibfnamefont {G.}~\bibnamefont
  {de~Gliniasty}}, \bibinfo {author} {\bibfnamefont {P.}~\bibnamefont
  {Bagourd}}, \bibinfo {author} {\bibfnamefont {S.}~\bibnamefont {Draux}},\
  and\ \bibinfo {author} {\bibfnamefont {B.}~\bibnamefont {Bourdoncle}},\
  }\href {https://doi.org/10.48550/arXiv.2405.01395} {\bibinfo {title} {Simple
  rules for two-photon state preparation with linear optics}} (\bibinfo {year}
  {2024}),\ \Eprint {https://arxiv.org/abs/2405.01395} {arXiv:2405.01395
  [quant-ph]} \BibitemShut {NoStop}%
\bibitem [{\citenamefont {Kopylov}\ \emph {et~al.}(2025)\citenamefont
  {Kopylov}, \citenamefont {Offen}, \citenamefont {Ares}, \citenamefont
  {Wembe}, \citenamefont {Ober-Bl\"{o}baum}, \citenamefont {Meier},
  \citenamefont {Sharapova},\ and\ \citenamefont {Sperling}}]{Kopylov2025}%
  \BibitemOpen
  \bibfield  {author} {\bibinfo {author} {\bibfnamefont {D.~A.}\ \bibnamefont
  {Kopylov}}, \bibinfo {author} {\bibfnamefont {C.}~\bibnamefont {Offen}},
  \bibinfo {author} {\bibfnamefont {L.}~\bibnamefont {Ares}}, \bibinfo {author}
  {\bibfnamefont {B.}~\bibnamefont {Wembe}}, \bibinfo {author} {\bibfnamefont
  {S.}~\bibnamefont {Ober-Bl\"{o}baum}}, \bibinfo {author} {\bibfnamefont
  {T.}~\bibnamefont {Meier}}, \bibinfo {author} {\bibfnamefont {P.~R.}\
  \bibnamefont {Sharapova}},\ and\ \bibinfo {author} {\bibfnamefont
  {J.}~\bibnamefont {Sperling}},\ }\bibfield  {title} {\bibinfo {title}
  {Multiphoton, multimode state classification for nonlinear optical
  circuits},\ }\href {https://doi.org/10.1103/sv6z-v1gk} {\bibfield  {journal}
  {\bibinfo  {journal} {Physical Review Research}\ }\textbf {\bibinfo {volume}
  {7}},\ \bibinfo {pages} {033062} (\bibinfo {year} {2025})}\BibitemShut
  {NoStop}%
\bibitem [{\citenamefont {Aralov}\ \emph {et~al.}(2026)\citenamefont {Aralov},
  \citenamefont {Gillet}, \citenamefont {Nguyen}, \citenamefont {Cosentino},
  \citenamefont {Walschaers},\ and\ \citenamefont {Frigerio}}]{Aralov2026}%
  \BibitemOpen
  \bibfield  {author} {\bibinfo {author} {\bibfnamefont {A.}~\bibnamefont
  {Aralov}}, \bibinfo {author} {\bibfnamefont {E.}~\bibnamefont {Gillet}},
  \bibinfo {author} {\bibfnamefont {V.}~\bibnamefont {Nguyen}}, \bibinfo
  {author} {\bibfnamefont {A.}~\bibnamefont {Cosentino}}, \bibinfo {author}
  {\bibfnamefont {M.}~\bibnamefont {Walschaers}},\ and\ \bibinfo {author}
  {\bibfnamefont {M.}~\bibnamefont {Frigerio}},\ }\bibfield  {title} {\bibinfo
  {title} {Photon catalysis for general multimode multi-photon quantum state
  preparation},\ }\href {https://doi.org/10.1103/ktc9-9rjb} {\bibfield
  {journal} {\bibinfo  {journal} {PRX Quantum}\ }\textbf {\bibinfo {volume}
  {7}},\ \bibinfo {pages} {020323} (\bibinfo {year} {2026})}\BibitemShut
  {NoStop}%
\bibitem [{\citenamefont {Migdał}\ \emph {et~al.}(2014)\citenamefont
  {Migdał}, \citenamefont {Rodríguez-Laguna}, \citenamefont {Oszmaniec},\
  and\ \citenamefont {Lewenstein}}]{Migdal2014}%
  \BibitemOpen
  \bibfield  {author} {\bibinfo {author} {\bibfnamefont {P.}~\bibnamefont
  {Migdał}}, \bibinfo {author} {\bibfnamefont {J.}~\bibnamefont
  {Rodríguez-Laguna}}, \bibinfo {author} {\bibfnamefont {M.}~\bibnamefont
  {Oszmaniec}},\ and\ \bibinfo {author} {\bibfnamefont {M.}~\bibnamefont
  {Lewenstein}},\ }\bibfield  {title} {\bibinfo {title} {Multiphoton states
  related via linear optics},\ }\href
  {https://doi.org/10.1103/physreva.89.062329} {\bibfield  {journal} {\bibinfo
  {journal} {Physical Review A}\ }\textbf {\bibinfo {volume} {89}},\ \bibinfo
  {pages} {062329} (\bibinfo {year} {2014})}\BibitemShut {NoStop}%
\bibitem [{\citenamefont {Garcia-Escartin}\ \emph {et~al.}(2019)\citenamefont
  {Garcia-Escartin}, \citenamefont {Gimeno},\ and\ \citenamefont
  {Moyano-Fernández}}]{Escartin2019}%
  \BibitemOpen
  \bibfield  {author} {\bibinfo {author} {\bibfnamefont {J.~C.}\ \bibnamefont
  {Garcia-Escartin}}, \bibinfo {author} {\bibfnamefont {V.}~\bibnamefont
  {Gimeno}},\ and\ \bibinfo {author} {\bibfnamefont {J.~J.}\ \bibnamefont
  {Moyano-Fernández}},\ }\bibfield  {title} {\bibinfo {title} {Method to
  determine which quantum operations can be realized with linear optics with a
  constructive implementation recipe},\ }\href
  {https://doi.org/10.1103/physreva.100.022301} {\bibfield  {journal} {\bibinfo
   {journal} {Physical Review A}\ }\textbf {\bibinfo {volume} {100}},\ \bibinfo
  {pages} {022301} (\bibinfo {year} {2019})}\BibitemShut {NoStop}%
\bibitem [{\citenamefont {Parellada}\ \emph {et~al.}(2024)\citenamefont
  {Parellada}, \citenamefont {Garcia}, \citenamefont {Moyano-Fernández},\ and\
  \citenamefont {Garcia-Escartin}}]{Parellada2024}%
  \BibitemOpen
  \bibfield  {author} {\bibinfo {author} {\bibfnamefont {P.~V.}\ \bibnamefont
  {Parellada}}, \bibinfo {author} {\bibfnamefont {V.~G.~i.}\ \bibnamefont
  {Garcia}}, \bibinfo {author} {\bibfnamefont {J.~J.}\ \bibnamefont
  {Moyano-Fernández}},\ and\ \bibinfo {author} {\bibfnamefont {J.~C.}\
  \bibnamefont {Garcia-Escartin}},\ }\href
  {https://doi.org/10.48550/arXiv.2409.12223} {\bibinfo {title} {Lie algebraic
  invariants in quantum linear optics}} (\bibinfo {year} {2024}),\ \Eprint
  {https://arxiv.org/abs/2409.12223} {arXiv:2409.12223 [quant-ph]} \BibitemShut
  {NoStop}%
\bibitem [{\citenamefont {Mamon}(2025)}]{Mamon2025}%
  \BibitemOpen
  \bibfield  {author} {\bibinfo {author} {\bibfnamefont {E.~Z.}\ \bibnamefont
  {Mamon}},\ }\href {https://doi.org/10.48550/arXiv.2506.07995} {\bibinfo
  {title} {Orbit dimensions in linear and gaussian quantum optics}} (\bibinfo
  {year} {2025}),\ \Eprint {https://arxiv.org/abs/2506.07995} {arXiv:2506.07995
  [quant-ph]} \BibitemShut {NoStop}%
\bibitem [{\citenamefont {Draux}\ \emph {et~al.}(2025)\citenamefont {Draux},
  \citenamefont {Perdrix}, \citenamefont {Jeandel},\ and\ \citenamefont
  {Mansfield}}]{Draux2025}%
  \BibitemOpen
  \bibfield  {author} {\bibinfo {author} {\bibfnamefont {S.}~\bibnamefont
  {Draux}}, \bibinfo {author} {\bibfnamefont {S.}~\bibnamefont {Perdrix}},
  \bibinfo {author} {\bibfnamefont {E.}~\bibnamefont {Jeandel}},\ and\ \bibinfo
  {author} {\bibfnamefont {S.}~\bibnamefont {Mansfield}},\ }\href
  {https://doi.org/10.48550/arXiv.2509.02211} {\bibinfo {title} {Invariants in
  linear optics}} (\bibinfo {year} {2025}),\ \Eprint
  {https://arxiv.org/abs/2509.02211} {arXiv:2509.02211 [quant-ph]} \BibitemShut
  {NoStop}%
\bibitem [{\citenamefont {Rodari}\ \emph {et~al.}(2025)\citenamefont {Rodari},
  \citenamefont {Francalanci}, \citenamefont {Caruccio}, \citenamefont {Hoch},
  \citenamefont {Carvacho}, \citenamefont {Giordani}, \citenamefont {Spagnolo},
  \citenamefont {Albiero}, \citenamefont {Di~Giano}, \citenamefont
  {Ceccarelli}, \citenamefont {Corrielli}, \citenamefont {Crespi},
  \citenamefont {Osellame}, \citenamefont {Chabaud},\ and\ \citenamefont
  {Sciarrino}}]{Rodari2025}%
  \BibitemOpen
  \bibfield  {author} {\bibinfo {author} {\bibfnamefont {G.}~\bibnamefont
  {Rodari}}, \bibinfo {author} {\bibfnamefont {T.}~\bibnamefont {Francalanci}},
  \bibinfo {author} {\bibfnamefont {E.}~\bibnamefont {Caruccio}}, \bibinfo
  {author} {\bibfnamefont {F.}~\bibnamefont {Hoch}}, \bibinfo {author}
  {\bibfnamefont {G.}~\bibnamefont {Carvacho}}, \bibinfo {author}
  {\bibfnamefont {T.}~\bibnamefont {Giordani}}, \bibinfo {author}
  {\bibfnamefont {N.}~\bibnamefont {Spagnolo}}, \bibinfo {author}
  {\bibfnamefont {R.}~\bibnamefont {Albiero}}, \bibinfo {author} {\bibfnamefont
  {N.}~\bibnamefont {Di~Giano}}, \bibinfo {author} {\bibfnamefont
  {F.}~\bibnamefont {Ceccarelli}}, \bibinfo {author} {\bibfnamefont
  {G.}~\bibnamefont {Corrielli}}, \bibinfo {author} {\bibfnamefont
  {A.}~\bibnamefont {Crespi}}, \bibinfo {author} {\bibfnamefont
  {R.}~\bibnamefont {Osellame}}, \bibinfo {author} {\bibfnamefont
  {U.}~\bibnamefont {Chabaud}},\ and\ \bibinfo {author} {\bibfnamefont
  {F.}~\bibnamefont {Sciarrino}},\ }\bibfield  {title} {\bibinfo {title}
  {Observation of lie algebraic invariants in quantum linear optics},\ }\href
  {https://doi.org/10.1103/7961-hg2q} {\bibfield  {journal} {\bibinfo
  {journal} {Physical Review Research}\ }\textbf {\bibinfo {volume} {7}},\
  \bibinfo {pages} {043325} (\bibinfo {year} {2025})}\BibitemShut {NoStop}%
\bibitem [{\citenamefont {Forbes}\ \emph {et~al.}(2025)\citenamefont {Forbes},
  \citenamefont {Ghafari}, \citenamefont {Deacon}, \citenamefont {Singh},
  \citenamefont {Lavie}, \citenamefont {Yard}, \citenamefont {Shaw},
  \citenamefont {Laing},\ and\ \citenamefont {Tischler}}]{Forbes2025}%
  \BibitemOpen
  \bibfield  {author} {\bibinfo {author} {\bibfnamefont {I.}~\bibnamefont
  {Forbes}}, \bibinfo {author} {\bibfnamefont {F.}~\bibnamefont {Ghafari}},
  \bibinfo {author} {\bibfnamefont {E.~C.~R.}\ \bibnamefont {Deacon}}, \bibinfo
  {author} {\bibfnamefont {S.~P.}\ \bibnamefont {Singh}}, \bibinfo {author}
  {\bibfnamefont {E.}~\bibnamefont {Lavie}}, \bibinfo {author} {\bibfnamefont
  {P.}~\bibnamefont {Yard}}, \bibinfo {author} {\bibfnamefont {R.~D.}\
  \bibnamefont {Shaw}}, \bibinfo {author} {\bibfnamefont {A.}~\bibnamefont
  {Laing}},\ and\ \bibinfo {author} {\bibfnamefont {N.}~\bibnamefont
  {Tischler}},\ }\bibfield  {title} {\bibinfo {title} {Heralded generation of
  entanglement with photons},\ }\href
  {https://doi.org/10.1088/1361-6633/adf85e} {\bibfield  {journal} {\bibinfo
  {journal} {Reports on Progress in Physics}\ }\textbf {\bibinfo {volume}
  {88}},\ \bibinfo {pages} {086002} (\bibinfo {year} {2025})}\BibitemShut
  {NoStop}%
\bibitem [{\citenamefont {Stanisic}\ \emph {et~al.}(2017)\citenamefont
  {Stanisic}, \citenamefont {Linden}, \citenamefont {Montanaro},\ and\
  \citenamefont {Turner}}]{Stanisic2017}%
  \BibitemOpen
  \bibfield  {author} {\bibinfo {author} {\bibfnamefont {S.}~\bibnamefont
  {Stanisic}}, \bibinfo {author} {\bibfnamefont {N.}~\bibnamefont {Linden}},
  \bibinfo {author} {\bibfnamefont {A.}~\bibnamefont {Montanaro}},\ and\
  \bibinfo {author} {\bibfnamefont {P.~S.}\ \bibnamefont {Turner}},\ }\bibfield
   {title} {\bibinfo {title} {Generating entanglement with linear optics},\
  }\href {https://doi.org/10.1103/physreva.96.043861} {\bibfield  {journal}
  {\bibinfo  {journal} {Physical Review A}\ }\textbf {\bibinfo {volume} {96}},\
  \bibinfo {pages} {043861} (\bibinfo {year} {2017})}\BibitemShut {NoStop}%
\bibitem [{\citenamefont {Gubarev}\ \emph {et~al.}(2020)\citenamefont
  {Gubarev}, \citenamefont {Dyakonov}, \citenamefont {Saygin}, \citenamefont
  {Struchalin}, \citenamefont {Straupe},\ and\ \citenamefont
  {Kulik}}]{Gubarev2020}%
  \BibitemOpen
  \bibfield  {author} {\bibinfo {author} {\bibfnamefont {F.~V.}\ \bibnamefont
  {Gubarev}}, \bibinfo {author} {\bibfnamefont {I.~V.}\ \bibnamefont
  {Dyakonov}}, \bibinfo {author} {\bibfnamefont {M.~Y.}\ \bibnamefont
  {Saygin}}, \bibinfo {author} {\bibfnamefont {G.~I.}\ \bibnamefont
  {Struchalin}}, \bibinfo {author} {\bibfnamefont {S.~S.}\ \bibnamefont
  {Straupe}},\ and\ \bibinfo {author} {\bibfnamefont {S.~P.}\ \bibnamefont
  {Kulik}},\ }\bibfield  {title} {\bibinfo {title} {Improved heralded schemes
  to generate entangled states from single photons},\ }\href
  {https://doi.org/10.1103/physreva.102.012604} {\bibfield  {journal} {\bibinfo
   {journal} {Physical Review A}\ }\textbf {\bibinfo {volume} {102}},\ \bibinfo
  {pages} {012604} (\bibinfo {year} {2020})}\BibitemShut {NoStop}%
\bibitem [{\citenamefont {Fldzhyan}\ \emph {et~al.}(2021)\citenamefont
  {Fldzhyan}, \citenamefont {Saygin},\ and\ \citenamefont
  {Kulik}}]{Fldzhyan2021}%
  \BibitemOpen
  \bibfield  {author} {\bibinfo {author} {\bibfnamefont {S.~A.}\ \bibnamefont
  {Fldzhyan}}, \bibinfo {author} {\bibfnamefont {M.~Y.}\ \bibnamefont
  {Saygin}},\ and\ \bibinfo {author} {\bibfnamefont {S.~P.}\ \bibnamefont
  {Kulik}},\ }\bibfield  {title} {\bibinfo {title} {Compact linear optical
  scheme for bell state generation},\ }\href
  {https://doi.org/10.1103/physrevresearch.3.043031} {\bibfield  {journal}
  {\bibinfo  {journal} {Physical Review Research}\ }\textbf {\bibinfo {volume}
  {3}},\ \bibinfo {pages} {043031} (\bibinfo {year} {2021})}\BibitemShut
  {NoStop}%
\bibitem [{\citenamefont {Fldzhyan}\ \emph {et~al.}(2023)\citenamefont
  {Fldzhyan}, \citenamefont {Saygin},\ and\ \citenamefont
  {Kulik}}]{Fldzhyan2023}%
  \BibitemOpen
  \bibfield  {author} {\bibinfo {author} {\bibfnamefont {S.~A.}\ \bibnamefont
  {Fldzhyan}}, \bibinfo {author} {\bibfnamefont {M.~Y.}\ \bibnamefont
  {Saygin}},\ and\ \bibinfo {author} {\bibfnamefont {S.~P.}\ \bibnamefont
  {Kulik}},\ }\bibfield  {title} {\bibinfo {title} {Programmable heralded
  linear optical generation of two-qubit states},\ }\href
  {https://doi.org/10.1103/physrevapplied.20.054030} {\bibfield  {journal}
  {\bibinfo  {journal} {Physical Review Applied}\ }\textbf {\bibinfo {volume}
  {20}},\ \bibinfo {pages} {054030} (\bibinfo {year} {2023})}\BibitemShut
  {NoStop}%
\bibitem [{\citenamefont {Paesani}\ \emph {et~al.}(2021)\citenamefont
  {Paesani}, \citenamefont {Bulmer}, \citenamefont {Jones}, \citenamefont
  {Santagati},\ and\ \citenamefont {Laing}}]{Paesani2021}%
  \BibitemOpen
  \bibfield  {author} {\bibinfo {author} {\bibfnamefont {S.}~\bibnamefont
  {Paesani}}, \bibinfo {author} {\bibfnamefont {J.~F.}\ \bibnamefont {Bulmer}},
  \bibinfo {author} {\bibfnamefont {A.~E.}\ \bibnamefont {Jones}}, \bibinfo
  {author} {\bibfnamefont {R.}~\bibnamefont {Santagati}},\ and\ \bibinfo
  {author} {\bibfnamefont {A.}~\bibnamefont {Laing}},\ }\bibfield  {title}
  {\bibinfo {title} {Scheme for universal high-dimensional quantum computation
  with linear optics},\ }\href {https://doi.org/10.1103/physrevlett.126.230504}
  {\bibfield  {journal} {\bibinfo  {journal} {Physical Review Letters}\
  }\textbf {\bibinfo {volume} {126}},\ \bibinfo {pages} {230504} (\bibinfo
  {year} {2021})}\BibitemShut {NoStop}%
\bibitem [{\citenamefont {Chin}\ \emph {et~al.}(2024)\citenamefont {Chin},
  \citenamefont {Ryu},\ and\ \citenamefont {Kim}}]{Chin2024}%
  \BibitemOpen
  \bibfield  {author} {\bibinfo {author} {\bibfnamefont {S.}~\bibnamefont
  {Chin}}, \bibinfo {author} {\bibfnamefont {J.}~\bibnamefont {Ryu}},\ and\
  \bibinfo {author} {\bibfnamefont {Y.-S.}\ \bibnamefont {Kim}},\ }\bibfield
  {title} {\bibinfo {title} {Exponentially enhanced scheme for the heralded
  qudit greenberger-horne-zeilinger state in linear optics},\ }\href
  {https://doi.org/10.1103/physrevlett.133.253601} {\bibfield  {journal}
  {\bibinfo  {journal} {Physical Review Letters}\ }\textbf {\bibinfo {volume}
  {133}},\ \bibinfo {pages} {253601} (\bibinfo {year} {2024})}\BibitemShut
  {NoStop}%
\bibitem [{\citenamefont {Kang}\ \emph {et~al.}(2025)\citenamefont {Kang},
  \citenamefont {Kim}, \citenamefont {Munro}, \citenamefont {Chin},\ and\
  \citenamefont {Huh}}]{Kang2026}%
  \BibitemOpen
  \bibfield  {author} {\bibinfo {author} {\bibfnamefont {M.}~\bibnamefont
  {Kang}}, \bibinfo {author} {\bibfnamefont {J.}~\bibnamefont {Kim}}, \bibinfo
  {author} {\bibfnamefont {W.~J.}\ \bibnamefont {Munro}}, \bibinfo {author}
  {\bibfnamefont {S.}~\bibnamefont {Chin}},\ and\ \bibinfo {author}
  {\bibfnamefont {J.}~\bibnamefont {Huh}},\ }\href
  {https://doi.org/10.48550/arXiv.2512.20881} {\bibinfo {title} {Heralded
  linear optical generation of dicke states}} (\bibinfo {year} {2025}),\
  \Eprint {https://arxiv.org/abs/2512.20881} {arXiv:2512.20881 [quant-ph]}
  \BibitemShut {NoStop}%
\bibitem [{\citenamefont {Bouwmeester}\ \emph {et~al.}(1999)\citenamefont
  {Bouwmeester}, \citenamefont {Pan}, \citenamefont {Daniell}, \citenamefont
  {Weinfurter},\ and\ \citenamefont {Zeilinger}}]{Bouwmeester1999}%
  \BibitemOpen
  \bibfield  {author} {\bibinfo {author} {\bibfnamefont {D.}~\bibnamefont
  {Bouwmeester}}, \bibinfo {author} {\bibfnamefont {J.-W.}\ \bibnamefont
  {Pan}}, \bibinfo {author} {\bibfnamefont {M.}~\bibnamefont {Daniell}},
  \bibinfo {author} {\bibfnamefont {H.}~\bibnamefont {Weinfurter}},\ and\
  \bibinfo {author} {\bibfnamefont {A.}~\bibnamefont {Zeilinger}},\ }\bibfield
  {title} {\bibinfo {title} {Observation of three-photon
  greenberger-horne-zeilinger entanglement},\ }\href
  {https://doi.org/10.1103/PhysRevLett.82.1345} {\bibfield  {journal} {\bibinfo
   {journal} {Physical Review Letters}\ }\textbf {\bibinfo {volume} {82}},\
  \bibinfo {pages} {1345} (\bibinfo {year} {1999})}\BibitemShut {NoStop}%
\bibitem [{\citenamefont {Pan}\ \emph {et~al.}(2001)\citenamefont {Pan},
  \citenamefont {Daniell}, \citenamefont {Gasparoni}, \citenamefont {Weihs},\
  and\ \citenamefont {Zeilinger}}]{Pan2001}%
  \BibitemOpen
  \bibfield  {author} {\bibinfo {author} {\bibfnamefont {J.-W.}\ \bibnamefont
  {Pan}}, \bibinfo {author} {\bibfnamefont {M.}~\bibnamefont {Daniell}},
  \bibinfo {author} {\bibfnamefont {S.}~\bibnamefont {Gasparoni}}, \bibinfo
  {author} {\bibfnamefont {G.}~\bibnamefont {Weihs}},\ and\ \bibinfo {author}
  {\bibfnamefont {A.}~\bibnamefont {Zeilinger}},\ }\bibfield  {title} {\bibinfo
  {title} {Experimental demonstration of four-photon entanglement and
  high-fidelity teleportation},\ }\href
  {https://doi.org/10.1103/PhysRevLett.86.4435} {\bibfield  {journal} {\bibinfo
   {journal} {Physical Review Letters}\ }\textbf {\bibinfo {volume} {86}},\
  \bibinfo {pages} {4435} (\bibinfo {year} {2001})}\BibitemShut {NoStop}%
\bibitem [{\citenamefont {Lu}\ \emph {et~al.}(2007)\citenamefont {Lu},
  \citenamefont {Zhou}, \citenamefont {Gühne}, \citenamefont {Gao},
  \citenamefont {Zhang}, \citenamefont {Yuan}, \citenamefont {Goebel},
  \citenamefont {Yang},\ and\ \citenamefont {Pan}}]{Lu2007}%
  \BibitemOpen
  \bibfield  {author} {\bibinfo {author} {\bibfnamefont {C.-Y.}\ \bibnamefont
  {Lu}}, \bibinfo {author} {\bibfnamefont {X.-Q.}\ \bibnamefont {Zhou}},
  \bibinfo {author} {\bibfnamefont {O.}~\bibnamefont {Gühne}}, \bibinfo
  {author} {\bibfnamefont {W.-B.~B.}\ \bibnamefont {Gao}}, \bibinfo {author}
  {\bibfnamefont {J.}~\bibnamefont {Zhang}}, \bibinfo {author} {\bibfnamefont
  {Z.-S.}\ \bibnamefont {Yuan}}, \bibinfo {author} {\bibfnamefont
  {A.}~\bibnamefont {Goebel}}, \bibinfo {author} {\bibfnamefont
  {T.}~\bibnamefont {Yang}},\ and\ \bibinfo {author} {\bibfnamefont {J.-W.}\
  \bibnamefont {Pan}},\ }\bibfield  {title} {\bibinfo {title} {Experimental
  entanglement of six photons in graph states},\ }\href
  {https://doi.org/10.1038/nphys507} {\bibfield  {journal} {\bibinfo  {journal}
  {Nature Physics}\ }\textbf {\bibinfo {volume} {3}},\ \bibinfo {pages} {91}
  (\bibinfo {year} {2007})}\BibitemShut {NoStop}%
\bibitem [{\citenamefont {Pont}\ \emph {et~al.}(2024)\citenamefont {Pont},
  \citenamefont {Corrielli}, \citenamefont {Fyrillas}, \citenamefont {Agresti},
  \citenamefont {Carvacho}, \citenamefont {Maring}, \citenamefont {Emeriau},
  \citenamefont {Ceccarelli}, \citenamefont {Albiero}, \citenamefont
  {Dias~Ferreira}, \citenamefont {Somaschi}, \citenamefont {Senellart},
  \citenamefont {Sagnes}, \citenamefont {Morassi}, \citenamefont {Lemaître},
  \citenamefont {Senellart}, \citenamefont {Sciarrino}, \citenamefont
  {Liscidini}, \citenamefont {Belabas},\ and\ \citenamefont
  {Osellame}}]{Pont2024}%
  \BibitemOpen
  \bibfield  {author} {\bibinfo {author} {\bibfnamefont {G.}~\bibnamefont
  {Pont}}, \bibinfo {author} {\bibfnamefont {G.}~\bibnamefont {Corrielli}},
  \bibinfo {author} {\bibfnamefont {A.}~\bibnamefont {Fyrillas}}, \bibinfo
  {author} {\bibfnamefont {I.}~\bibnamefont {Agresti}}, \bibinfo {author}
  {\bibfnamefont {G.}~\bibnamefont {Carvacho}}, \bibinfo {author}
  {\bibfnamefont {N.}~\bibnamefont {Maring}}, \bibinfo {author} {\bibfnamefont
  {P.-E.}\ \bibnamefont {Emeriau}}, \bibinfo {author} {\bibfnamefont
  {F.}~\bibnamefont {Ceccarelli}}, \bibinfo {author} {\bibfnamefont
  {R.}~\bibnamefont {Albiero}}, \bibinfo {author} {\bibfnamefont {P.~H.}\
  \bibnamefont {Dias~Ferreira}}, \bibinfo {author} {\bibfnamefont
  {N.}~\bibnamefont {Somaschi}}, \bibinfo {author} {\bibfnamefont
  {J.}~\bibnamefont {Senellart}}, \bibinfo {author} {\bibfnamefont
  {I.}~\bibnamefont {Sagnes}}, \bibinfo {author} {\bibfnamefont
  {M.}~\bibnamefont {Morassi}}, \bibinfo {author} {\bibfnamefont
  {A.}~\bibnamefont {Lemaître}}, \bibinfo {author} {\bibfnamefont
  {P.}~\bibnamefont {Senellart}}, \bibinfo {author} {\bibfnamefont
  {F.}~\bibnamefont {Sciarrino}}, \bibinfo {author} {\bibfnamefont
  {M.}~\bibnamefont {Liscidini}}, \bibinfo {author} {\bibfnamefont
  {N.}~\bibnamefont {Belabas}},\ and\ \bibinfo {author} {\bibfnamefont
  {R.}~\bibnamefont {Osellame}},\ }\bibfield  {title} {\bibinfo {title}
  {High-fidelity four-photon ghz states on chip},\ }\href
  {https://doi.org/10.1038/s41534-024-00830-z} {\bibfield  {journal} {\bibinfo
  {journal} {npj Quantum Information}\ }\textbf {\bibinfo {volume} {10}},\
  \bibinfo {pages} {40} (\bibinfo {year} {2024})}\BibitemShut {NoStop}%
\bibitem [{\citenamefont {Carolan}\ \emph {et~al.}(2020)\citenamefont
  {Carolan}, \citenamefont {Mohseni}, \citenamefont {Olson}, \citenamefont
  {Prabhu}, \citenamefont {Chen}, \citenamefont {Bunandar}, \citenamefont
  {Niu}, \citenamefont {Harris}, \citenamefont {Wong}, \citenamefont
  {Hochberg}, \citenamefont {Lloyd},\ and\ \citenamefont
  {Englund}}]{Carolan2020}%
  \BibitemOpen
  \bibfield  {author} {\bibinfo {author} {\bibfnamefont {J.}~\bibnamefont
  {Carolan}}, \bibinfo {author} {\bibfnamefont {M.}~\bibnamefont {Mohseni}},
  \bibinfo {author} {\bibfnamefont {J.~P.}\ \bibnamefont {Olson}}, \bibinfo
  {author} {\bibfnamefont {M.}~\bibnamefont {Prabhu}}, \bibinfo {author}
  {\bibfnamefont {C.}~\bibnamefont {Chen}}, \bibinfo {author} {\bibfnamefont
  {D.}~\bibnamefont {Bunandar}}, \bibinfo {author} {\bibfnamefont {M.~Y.}\
  \bibnamefont {Niu}}, \bibinfo {author} {\bibfnamefont {N.~C.}\ \bibnamefont
  {Harris}}, \bibinfo {author} {\bibfnamefont {F.~N.~C.}\ \bibnamefont {Wong}},
  \bibinfo {author} {\bibfnamefont {M.}~\bibnamefont {Hochberg}}, \bibinfo
  {author} {\bibfnamefont {S.}~\bibnamefont {Lloyd}},\ and\ \bibinfo {author}
  {\bibfnamefont {D.}~\bibnamefont {Englund}},\ }\bibfield  {title} {\bibinfo
  {title} {Variational quantum unsampling on a quantum photonic processor},\
  }\href {https://doi.org/10.1038/s41567-019-0747-6} {\bibfield  {journal}
  {\bibinfo  {journal} {Nature Physics}\ }\textbf {\bibinfo {volume} {16}},\
  \bibinfo {pages} {322–327} (\bibinfo {year} {2020})}\BibitemShut {NoStop}%
\bibitem [{\citenamefont {Gan}\ \emph {et~al.}(2022)\citenamefont {Gan},
  \citenamefont {Leykam},\ and\ \citenamefont {Angelakis}}]{Gan2022}%
  \BibitemOpen
  \bibfield  {author} {\bibinfo {author} {\bibfnamefont {B.~Y.}\ \bibnamefont
  {Gan}}, \bibinfo {author} {\bibfnamefont {D.}~\bibnamefont {Leykam}},\ and\
  \bibinfo {author} {\bibfnamefont {D.~G.}\ \bibnamefont {Angelakis}},\
  }\bibfield  {title} {\bibinfo {title} {Fock state-enhanced expressivity of
  quantum machine learning models},\ }\href
  {https://doi.org/10.1140/epjqt/s40507-022-00135-0} {\bibfield  {journal}
  {\bibinfo  {journal} {EPJ Quantum Technology}\ }\textbf {\bibinfo {volume}
  {9}},\  (\bibinfo {year} {2022})}\BibitemShut {NoStop}%
\bibitem [{\citenamefont {Maring}\ \emph {et~al.}(2024)\citenamefont {Maring},
  \citenamefont {Fyrillas}, \citenamefont {Pont}, \citenamefont {Ivanov},
  \citenamefont {Stepanov}, \citenamefont {Margaria}, \citenamefont {Hease},
  \citenamefont {Pishchagin}, \citenamefont {Lemaître}, \citenamefont
  {Sagnes}, \citenamefont {Au}, \citenamefont {Boissier}, \citenamefont
  {Bertasi}, \citenamefont {Baert}, \citenamefont {Valdivia}, \citenamefont
  {Billard}, \citenamefont {Acar}, \citenamefont {Brieussel}, \citenamefont
  {Mezher}, \citenamefont {Wein}, \citenamefont {Salavrakos}, \citenamefont
  {Sinnott}, \citenamefont {Fioretto}, \citenamefont {Emeriau}, \citenamefont
  {Belabas}, \citenamefont {Mansfield}, \citenamefont {Senellart},
  \citenamefont {Senellart},\ and\ \citenamefont {Somaschi}}]{Maring2024}%
  \BibitemOpen
  \bibfield  {author} {\bibinfo {author} {\bibfnamefont {N.}~\bibnamefont
  {Maring}}, \bibinfo {author} {\bibfnamefont {A.}~\bibnamefont {Fyrillas}},
  \bibinfo {author} {\bibfnamefont {M.}~\bibnamefont {Pont}}, \bibinfo {author}
  {\bibfnamefont {E.}~\bibnamefont {Ivanov}}, \bibinfo {author} {\bibfnamefont
  {P.}~\bibnamefont {Stepanov}}, \bibinfo {author} {\bibfnamefont
  {N.}~\bibnamefont {Margaria}}, \bibinfo {author} {\bibfnamefont
  {W.}~\bibnamefont {Hease}}, \bibinfo {author} {\bibfnamefont
  {A.}~\bibnamefont {Pishchagin}}, \bibinfo {author} {\bibfnamefont
  {A.}~\bibnamefont {Lemaître}}, \bibinfo {author} {\bibfnamefont
  {I.}~\bibnamefont {Sagnes}}, \bibinfo {author} {\bibfnamefont {T.~H.}\
  \bibnamefont {Au}}, \bibinfo {author} {\bibfnamefont {S.}~\bibnamefont
  {Boissier}}, \bibinfo {author} {\bibfnamefont {E.}~\bibnamefont {Bertasi}},
  \bibinfo {author} {\bibfnamefont {A.}~\bibnamefont {Baert}}, \bibinfo
  {author} {\bibfnamefont {M.}~\bibnamefont {Valdivia}}, \bibinfo {author}
  {\bibfnamefont {M.}~\bibnamefont {Billard}}, \bibinfo {author} {\bibfnamefont
  {O.}~\bibnamefont {Acar}}, \bibinfo {author} {\bibfnamefont {A.}~\bibnamefont
  {Brieussel}}, \bibinfo {author} {\bibfnamefont {R.}~\bibnamefont {Mezher}},
  \bibinfo {author} {\bibfnamefont {S.~C.}\ \bibnamefont {Wein}}, \bibinfo
  {author} {\bibfnamefont {A.}~\bibnamefont {Salavrakos}}, \bibinfo {author}
  {\bibfnamefont {P.}~\bibnamefont {Sinnott}}, \bibinfo {author} {\bibfnamefont
  {D.~A.}\ \bibnamefont {Fioretto}}, \bibinfo {author} {\bibfnamefont {P.-E.}\
  \bibnamefont {Emeriau}}, \bibinfo {author} {\bibfnamefont {N.}~\bibnamefont
  {Belabas}}, \bibinfo {author} {\bibfnamefont {S.}~\bibnamefont {Mansfield}},
  \bibinfo {author} {\bibfnamefont {P.}~\bibnamefont {Senellart}}, \bibinfo
  {author} {\bibfnamefont {J.}~\bibnamefont {Senellart}},\ and\ \bibinfo
  {author} {\bibfnamefont {N.}~\bibnamefont {Somaschi}},\ }\bibfield  {title}
  {\bibinfo {title} {A versatile single-photon-based quantum computing
  platform},\ }\href {https://doi.org/10.1038/s41566-024-01403-4} {\bibfield
  {journal} {\bibinfo  {journal} {Nature Photonics}\ }\textbf {\bibinfo
  {volume} {18}},\ \bibinfo {pages} {603–609} (\bibinfo {year}
  {2024})}\BibitemShut {NoStop}%
\bibitem [{\citenamefont {Wang}\ \emph {et~al.}(2025)\citenamefont {Wang},
  \citenamefont {Yin}, \citenamefont {Haug}, \citenamefont {Pentangelo},
  \citenamefont {Piacentini}, \citenamefont {Crespi}, \citenamefont
  {Ceccarelli}, \citenamefont {Osellame},\ and\ \citenamefont
  {Walther}}]{Wang2025}%
  \BibitemOpen
  \bibfield  {author} {\bibinfo {author} {\bibfnamefont {Y.}~\bibnamefont
  {Wang}}, \bibinfo {author} {\bibfnamefont {Z.}~\bibnamefont {Yin}}, \bibinfo
  {author} {\bibfnamefont {T.}~\bibnamefont {Haug}}, \bibinfo {author}
  {\bibfnamefont {C.}~\bibnamefont {Pentangelo}}, \bibinfo {author}
  {\bibfnamefont {S.}~\bibnamefont {Piacentini}}, \bibinfo {author}
  {\bibfnamefont {A.}~\bibnamefont {Crespi}}, \bibinfo {author} {\bibfnamefont
  {F.}~\bibnamefont {Ceccarelli}}, \bibinfo {author} {\bibfnamefont
  {R.}~\bibnamefont {Osellame}},\ and\ \bibinfo {author} {\bibfnamefont
  {P.}~\bibnamefont {Walther}},\ }\href
  {https://doi.org/10.48550/arXiv.2511.21951} {\bibinfo {title} {A scalable
  advantage in multi-photon quantum machine learning}} (\bibinfo {year}
  {2025}),\ \Eprint {https://arxiv.org/abs/2511.21951} {arXiv:2511.21951
  [quant-ph]} \BibitemShut {NoStop}%
\bibitem [{\citenamefont {Yalouz}\ \emph {et~al.}(2021)\citenamefont {Yalouz},
  \citenamefont {Senjean}, \citenamefont {Miatto},\ and\ \citenamefont
  {Dunjko}}]{Yalouz2021}%
  \BibitemOpen
  \bibfield  {author} {\bibinfo {author} {\bibfnamefont {S.}~\bibnamefont
  {Yalouz}}, \bibinfo {author} {\bibfnamefont {B.}~\bibnamefont {Senjean}},
  \bibinfo {author} {\bibfnamefont {F.}~\bibnamefont {Miatto}},\ and\ \bibinfo
  {author} {\bibfnamefont {V.}~\bibnamefont {Dunjko}},\ }\bibfield  {title}
  {\bibinfo {title} {Encoding strongly-correlated many-boson wavefunctions on a
  photonic quantum computer: application to the attractive bose-hubbard
  model},\ }\href {https://doi.org/10.22331/q-2021-11-08-572} {\bibfield
  {journal} {\bibinfo  {journal} {Quantum}\ }\textbf {\bibinfo {volume} {5}},\
  \bibinfo {pages} {572} (\bibinfo {year} {2021})}\BibitemShut {NoStop}%
\bibitem [{\citenamefont {Aniello}\ \emph
  {et~al.}(2006{\natexlab{b}})\citenamefont {Aniello}, \citenamefont {Lupo},\
  and\ \citenamefont {Napolitano}}]{Aniello-ExploringRepresentation-2006}%
  \BibitemOpen
  \bibfield  {author} {\bibinfo {author} {\bibfnamefont {P.}~\bibnamefont
  {Aniello}}, \bibinfo {author} {\bibfnamefont {C.}~\bibnamefont {Lupo}},\ and\
  \bibinfo {author} {\bibfnamefont {M.}~\bibnamefont {Napolitano}},\ }\bibfield
   {title} {\bibinfo {title} {Exploring {{Representation Theory}} of {{Unitary
  Groups}} via {{Linear Optical Passive Devices}}},\ }\href
  {https://doi.org/10.1007/s11080-006-9023-1} {\bibfield  {journal} {\bibinfo
  {journal} {Open Systems \& Information Dynamics}\ }\textbf {\bibinfo {volume}
  {13}},\ \bibinfo {pages} {415} (\bibinfo {year}
  {2006}{\natexlab{b}})}\BibitemShut {NoStop}%
\bibitem [{\citenamefont {{Garcia-Escartin}}\ \emph {et~al.}(2019)\citenamefont
  {{Garcia-Escartin}}, \citenamefont {Gimeno},\ and\ \citenamefont
  {{Moyano-Fern{\'a}ndez}}}]{Garcia-Escartin-MultiplePhoton-2019}%
  \BibitemOpen
  \bibfield  {author} {\bibinfo {author} {\bibfnamefont {J.~C.}\ \bibnamefont
  {{Garcia-Escartin}}}, \bibinfo {author} {\bibfnamefont {V.}~\bibnamefont
  {Gimeno}},\ and\ \bibinfo {author} {\bibfnamefont {J.~J.}\ \bibnamefont
  {{Moyano-Fern{\'a}ndez}}},\ }\bibfield  {title} {\bibinfo {title} {Multiple
  photon effective {{Hamiltonians}} in linear quantum optical networks},\
  }\href {https://doi.org/10.1016/j.optcom.2018.08.082} {\bibfield  {journal}
  {\bibinfo  {journal} {Optics Communications}\ }\textbf {\bibinfo {volume}
  {430}},\ \bibinfo {pages} {434} (\bibinfo {year} {2019})}\BibitemShut
  {NoStop}%
\bibitem [{\citenamefont {Hall}(2015)}]{Hall2015}%
  \BibitemOpen
  \bibfield  {author} {\bibinfo {author} {\bibfnamefont {B.~C.}\ \bibnamefont
  {Hall}},\ }\href {https://doi.org/10.1007/978-3-319-13467-3} {\emph {\bibinfo
  {title} {Lie Groups, Lie Algebras, and Representations: An Elementary
  Introduction}}}\ (\bibinfo  {publisher} {Springer International Publishing},\
  \bibinfo {year} {2015})\BibitemShut {NoStop}%
\bibitem [{\citenamefont {{Moyano-Fern{\'a}ndez}}\ and\ \citenamefont
  {{Garcia-Escartin}}(2017)}]{Moyano-Fernandez-LinearOptics-2017}%
  \BibitemOpen
  \bibfield  {author} {\bibinfo {author} {\bibfnamefont {J.~J.}\ \bibnamefont
  {{Moyano-Fern{\'a}ndez}}}\ and\ \bibinfo {author} {\bibfnamefont {J.~C.}\
  \bibnamefont {{Garcia-Escartin}}},\ }\bibfield  {title} {\bibinfo {title}
  {Linear optics only allows every possible quantum operation for one photon or
  one port},\ }\href {https://doi.org/10.1016/j.optcom.2016.07.085} {\bibfield
  {journal} {\bibinfo  {journal} {Optics Communications}\ }\textbf {\bibinfo
  {volume} {382}},\ \bibinfo {pages} {237} (\bibinfo {year}
  {2017})}\BibitemShut {NoStop}%
\bibitem [{\citenamefont {Boscain}\ \emph {et~al.}(2015)\citenamefont
  {Boscain}, \citenamefont {Gauthier}, \citenamefont {Rossi},\ and\
  \citenamefont {Sigalotti}}]{Boscain-Approximate-2015}%
  \BibitemOpen
  \bibfield  {author} {\bibinfo {author} {\bibfnamefont {U.}~\bibnamefont
  {Boscain}}, \bibinfo {author} {\bibfnamefont {J.-P.}\ \bibnamefont
  {Gauthier}}, \bibinfo {author} {\bibfnamefont {F.}~\bibnamefont {Rossi}},\
  and\ \bibinfo {author} {\bibfnamefont {M.}~\bibnamefont {Sigalotti}},\
  }\bibfield  {title} {\bibinfo {title} {Approximate controllability, exact
  controllability, and conical eigenvalue intersections for quantum mechanical
  systems},\ }\href {https://doi.org/10.1007/s00220-014-2195-6} {\bibfield
  {journal} {\bibinfo  {journal} {Comm. Math. Phys.}\ }\textbf {\bibinfo
  {volume} {333}},\ \bibinfo {pages} {1225} (\bibinfo {year}
  {2015})}\BibitemShut {NoStop}%
\bibitem [{\citenamefont {Sard}(1942)}]{Sard1942}%
  \BibitemOpen
  \bibfield  {author} {\bibinfo {author} {\bibfnamefont {A.}~\bibnamefont
  {Sard}},\ }\bibfield  {title} {\bibinfo {title} {The measure of the critical
  values of differentiable maps},\ }\href
  {https://doi.org/10.1090/s0002-9904-1942-07811-6} {\bibfield  {journal}
  {\bibinfo  {journal} {Bulletin of the American Mathematical Society}\
  }\textbf {\bibinfo {volume} {48}},\ \bibinfo {pages} {883–890} (\bibinfo
  {year} {1942})}\BibitemShut {NoStop}%
\bibitem [{\citenamefont {Ambainis}\ and\ \citenamefont
  {Emerson}(2007)}]{Ambainis2007}%
  \BibitemOpen
  \bibfield  {author} {\bibinfo {author} {\bibfnamefont {A.}~\bibnamefont
  {Ambainis}}\ and\ \bibinfo {author} {\bibfnamefont {J.}~\bibnamefont
  {Emerson}},\ }\bibfield  {title} {\bibinfo {title} {Quantum t-designs: t-wise
  independence in the quantum world},\ }in\ \href
  {https://doi.org/10.1109/ccc.2007.26} {\emph {\bibinfo {booktitle}
  {Twenty-Second Annual IEEE Conference on Computational Complexity (CCC07)}}}\
  (\bibinfo  {publisher} {IEEE},\ \bibinfo {year} {2007})\ p.\ \bibinfo {pages}
  {129–140}\BibitemShut {NoStop}%
\bibitem [{\citenamefont {Mele}(2024)}]{Mele2024}%
  \BibitemOpen
  \bibfield  {author} {\bibinfo {author} {\bibfnamefont {A.~A.}\ \bibnamefont
  {Mele}},\ }\bibfield  {title} {\bibinfo {title} {Introduction to haar measure
  tools in quantum information: A beginner's tutorial},\ }\href
  {https://doi.org/10.22331/q-2024-05-08-1340} {\bibfield  {journal} {\bibinfo
  {journal} {Quantum}\ }\textbf {\bibinfo {volume} {8}},\ \bibinfo {pages}
  {1340} (\bibinfo {year} {2024})}\BibitemShut {NoStop}%
\bibitem [{Hlo()}]{HloRepo}%
  \BibitemOpen
  \href@noop {} {}\bibinfo {howpublished}
  {\url{https://github.com/TommasoFrancalanci/hlo_jacobian_rank}}\BibitemShut
  {NoStop}%
\bibitem [{\citenamefont {Krantz}\ and\ \citenamefont
  {Parks}(2002)}]{Krantz2002}%
  \BibitemOpen
  \bibfield  {author} {\bibinfo {author} {\bibfnamefont {S.~G.}\ \bibnamefont
  {Krantz}}\ and\ \bibinfo {author} {\bibfnamefont {H.~R.}\ \bibnamefont
  {Parks}},\ }\href {https://doi.org/10.1007/978-0-8176-8134-0} {\emph
  {\bibinfo {title} {A Primer of Real Analytic Functions}}}\ (\bibinfo
  {publisher} {Birkh\"{a}user Boston},\ \bibinfo {year} {2002})\BibitemShut
  {NoStop}%
\bibitem [{\citenamefont {Bochnak}\ \emph {et~al.}(1998)\citenamefont
  {Bochnak}, \citenamefont {Coste},\ and\ \citenamefont
  {Roy}}]{BochnakCosteRoy1998}%
  \BibitemOpen
  \bibfield  {author} {\bibinfo {author} {\bibfnamefont {J.}~\bibnamefont
  {Bochnak}}, \bibinfo {author} {\bibfnamefont {M.}~\bibnamefont {Coste}},\
  and\ \bibinfo {author} {\bibfnamefont {M.-F.}\ \bibnamefont {Roy}},\ }\href
  {https://doi.org/10.1007/978-3-662-03718-8} {\emph {\bibinfo {title} {{Real
  Algebraic Geometry}}}},\ \bibinfo {series} {Ergebnisse der Mathematik und
  ihrer Grenzgebiete (3)}, Vol.~\bibinfo {volume} {36}\ (\bibinfo  {publisher}
  {Springer-Verlag},\ \bibinfo {address} {Berlin, Heidelberg},\ \bibinfo {year}
  {1998})\ pp.\ \bibinfo {pages} {x+430},\ \bibinfo {note} {translated from the
  1987 French original, revised by the authors}\BibitemShut {NoStop}%
\bibitem [{\citenamefont {Kato}(1995)}]{Kato1995}%
  \BibitemOpen
  \bibfield  {author} {\bibinfo {author} {\bibfnamefont {T.}~\bibnamefont
  {Kato}},\ }\href {https://doi.org/10.1007/978-3-642-66282-9} {\emph {\bibinfo
  {title} {Perturbation Theory for Linear Operators}}},\ \bibinfo {edition}
  {reprint of the 1980 edition}\ ed.,\ Classics in Mathematics\ (\bibinfo
  {publisher} {Springer-Verlag Berlin Heidelberg},\ \bibinfo {year} {1995})\
  pp.\ \bibinfo {pages} {xxii+619}\BibitemShut {NoStop}%
\bibitem [{\citenamefont {Guillemin}\ and\ \citenamefont
  {Pollack}(2010)}]{Guillemin2010}%
  \BibitemOpen
  \bibfield  {author} {\bibinfo {author} {\bibfnamefont {V.}~\bibnamefont
  {Guillemin}}\ and\ \bibinfo {author} {\bibfnamefont {A.}~\bibnamefont
  {Pollack}},\ }\href {https://doi.org/10.1090/chel/370} {\emph {\bibinfo
  {title} {Differential Topology}}}\ (\bibinfo  {publisher} {American
  Mathematical Society},\ \bibinfo {year} {2010})\BibitemShut {NoStop}%
\bibitem [{\citenamefont {Yao}\ \emph {et~al.}(2024)\citenamefont {Yao},
  \citenamefont {Miatto},\ and\ \citenamefont {Quesada}}]{Yao2024}%
  \BibitemOpen
  \bibfield  {author} {\bibinfo {author} {\bibfnamefont {Y.}~\bibnamefont
  {Yao}}, \bibinfo {author} {\bibfnamefont {F.}~\bibnamefont {Miatto}},\ and\
  \bibinfo {author} {\bibfnamefont {N.}~\bibnamefont {Quesada}},\ }\bibfield
  {title} {\bibinfo {title} {Riemannian optimization of photonic quantum
  circuits in phase and fock space},\ }\href
  {https://doi.org/10.21468/scipostphys.17.3.082} {\bibfield  {journal}
  {\bibinfo  {journal} {SciPost Physics}\ }\textbf {\bibinfo {volume} {17}},\
  (\bibinfo {year} {2024})}\BibitemShut {NoStop}%
\bibitem [{\citenamefont {Sim}\ \emph {et~al.}(2019)\citenamefont {Sim},
  \citenamefont {Johnson},\ and\ \citenamefont
  {{Aspuru-Guzik}}}]{Sim-ExpressibilityEntangling-2019}%
  \BibitemOpen
  \bibfield  {author} {\bibinfo {author} {\bibfnamefont {S.}~\bibnamefont
  {Sim}}, \bibinfo {author} {\bibfnamefont {P.~D.}\ \bibnamefont {Johnson}},\
  and\ \bibinfo {author} {\bibfnamefont {A.}~\bibnamefont {{Aspuru-Guzik}}},\
  }\bibfield  {title} {\bibinfo {title} {Expressibility and {{Entangling
  Capability}} of {{Parameterized Quantum Circuits}} for {{Hybrid
  Quantum-Classical Algorithms}}},\ }\href
  {https://doi.org/10.1002/qute.201900070} {\bibfield  {journal} {\bibinfo
  {journal} {Advanced Quantum Technologies}\ }\textbf {\bibinfo {volume} {2}},\
  \bibinfo {pages} {1900070} (\bibinfo {year} {2019})}\BibitemShut {NoStop}%
\bibitem [{\citenamefont {Nakaji}\ and\ \citenamefont
  {Yamamoto}(2021)}]{Nakaji2021expressibilityof}%
  \BibitemOpen
  \bibfield  {author} {\bibinfo {author} {\bibfnamefont {K.}~\bibnamefont
  {Nakaji}}\ and\ \bibinfo {author} {\bibfnamefont {N.}~\bibnamefont
  {Yamamoto}},\ }\bibfield  {title} {\bibinfo {title} {Expressibility of the
  alternating layered ansatz for quantum computation},\ }\href
  {https://doi.org/10.22331/q-2021-04-19-434} {\bibfield  {journal} {\bibinfo
  {journal} {{Quantum}}\ }\textbf {\bibinfo {volume} {5}},\ \bibinfo {pages}
  {434} (\bibinfo {year} {2021})}\BibitemShut {NoStop}%
\end{thebibliography}
%apsrev4-2.bst 2019-01-14 (MD) hand-edited version of apsrev4-1.bst
%Control: key (0)
%Control: author (8) initials jnrlst
%Control: editor formatted (1) identically to author
%Control: production of article title (0) allowed
%Control: page (0) single
%Control: year (1) truncated
%Control: production of eprint (0) enabled
%

\appendix

\section{Mapping other schemes to HLO}\label{app:other schemes}

\begin{figure*}[t]
\centering
\includegraphics[width=0.99\textwidth]{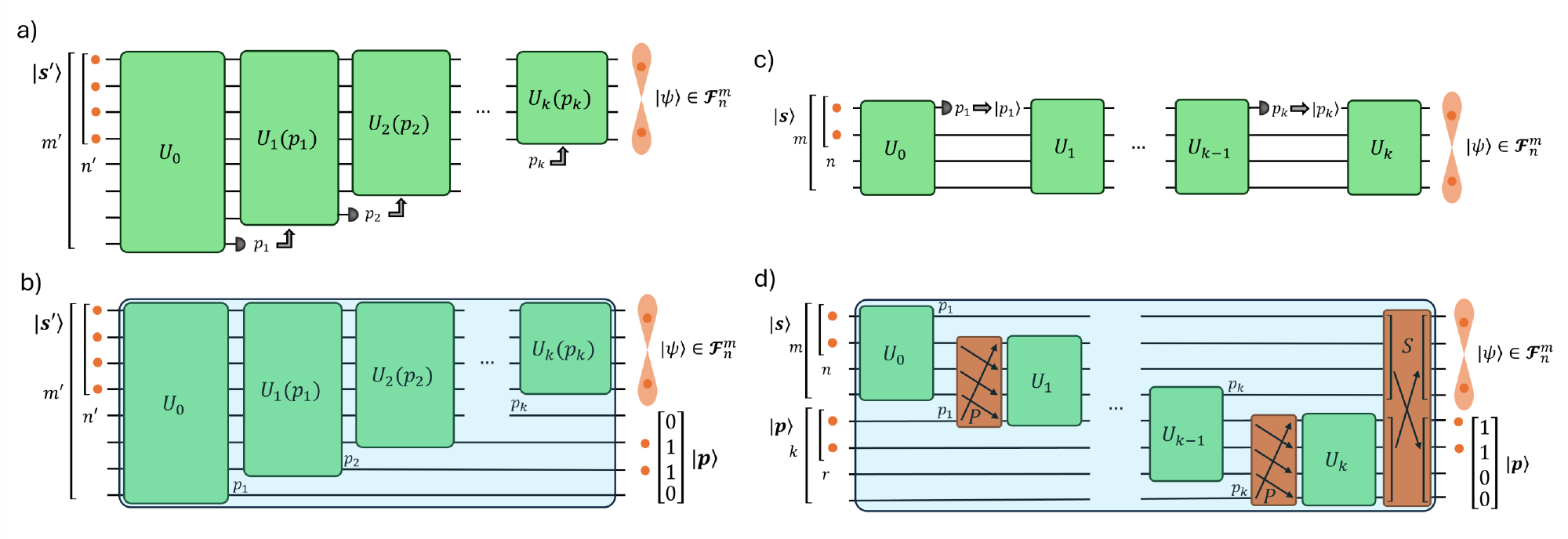}
\caption{Schematics of different settings:  
(a) Adaptive linear optics (ALO), where the evolution consists of a sequence of unitary layers $U_1(p_1), U_2(p_2), \dots, U_k(p_k)$ conditioned on intermediate measurement outcomes.  
(b) Equivalent heralded linear optics (HLO) description of a single adaptive branch, obtained by combining all conditional unitaries into a single global unitary $U' \in \mathcal{U}(m')$ followed by projection onto a fixed heralding outcome.  
(c) Single-mode State Injection (SI) protocol, where photon-counting measurements are interleaved with unitaries and followed by reinjection of the measured photons.  
(d) Equivalent HLO representation of the single-branch SI scheme, obtained by evolving data and auxiliary modes under a composite unitary $\tilde{U}$ and projecting onto the measurement pattern.}
\label{fig: Schemes2}
\end{figure*}

The focus of this work is the generation and controllability of pure photonic states conditioned on a fixed measurement outcome. Within this setting, linear optical architectures involving measurements and feed-forward can be naturally analyzed by restricting attention to a single measurement branch. As we show in this section, such single-branch evolutions can always be recast into an equivalent heralded linear optical (HLO) circuit, consisting of a global unitary followed by a projection onto a fixed heralding outcome. This mapping provides a unified framework to study local controllability across different architectures using the Jacobian of infinitesimal circuit variations.

We first consider adaptive linear optics (ALO)~\cite{Chabaud2021}, where the evolution consists of a sequence of unitaries $U_1(p_1), \dots, U_k(p_k)$, each conditioned on the outcome of an intermediate measurement, as shown in Fig.~\ref{fig: Schemes2} (a). The full measurement pattern is denoted by $\bm{p} = (p_1,\dots,p_k)$. From the heralded perspective, a single adaptive branch corresponding to a fixed pattern $\bm{p}$ can always be rewritten as a single global unitary
\begin{equation}
U' = \big(U_k(p_k)\otimes \mathbb{I}_k\big)\cdots \big(U_1(p_1)\otimes \mathbb{I}_1\big)\, U_0,
\end{equation}
acting on all modes, followed by projection onto the heralding state $\ket{\bm{h}}=\ket{\bm{p}}$, as shown in Fig.~\ref{fig: Schemes2} (b).

While genuine ALO protocols are deterministic in the sense that all measurement outcomes are retained and used for feed-forward control, the heralded representation isolates a single branch whose success probability generally decreases with system size. Nevertheless, at the level of pure-state generation in a fixed branch, adaptive and heralded schemes are operationally equivalent: heralded linear optics (HLO) is the correct framework to study the set of pure states accessible in a single branch of ABS, assuming that each of the adaptive unitaries have free parameters (e.g. programmable phase shifters), such that the total number of degrees of freedom in the adaptive chain matches or exceed those of the equivalent HLO circuit. A difference only arises when considering multiple outcomes, in particular in the mixed-state dynamics one obtains by considering the ensemble average, which is however beyond the scope of the present work.

A similar mapping applies to State Injection (SI) protocols~\cite{Monbroussou2025}.  
In its original formulation, SI is described as a completely positive trace-preserving (CPTP) map, where all measurement outcomes are retained and averaged over, and controllability is analyzed at the level of the resulting mixed output state.  
Here, in order to remain consistent with the heralded setting adopted throughout this work, we instead focus on the evolution conditioned on a fixed measurement pattern, corresponding to a single branch of the full CPTP dynamics. We consider the case of $k$ successive photon-counting measurements of a single mode interleaved with unitaries $U_0,\dots,U_k$ acting on $m$ modes, where the measured photons are re-injected before each subsequent unitary, as illustrated in Fig.~\ref{fig: Schemes2} (c). Focusing on a fixed measurement pattern $\bm{p}=(p_1,\dots,p_k)$, the evolution can again be recast into a heralded circuit by considering an extended input $\ket{\bm{s'}}=\ket{\bm{s}} \otimes\ket{\bm{p}}$ and evolving it under the composite unitary
\begin{equation}
\tilde{U} = S \, \prod_{l=1}^{k} \big( \mathbb{I}_l \otimes U_l P \otimes \mathbb{I}_{k-l} \big) \, (U_0 \otimes \mathbb{I}_k),
\label{eq:Utilde_SI}
\end{equation}
where $\tilde{U}\in\mathcal{U}(m')$, and $P$ and $S$ are fixed permutations. The output state is obtained by projection onto $\ket{\bm{h}}=\ket{\bm{p}}$, as shown in Fig.~\ref{fig: Schemes2} (d). Although $\tilde{U}$ formally belongs to $\mathcal{U}(m')$, it is generated by the $k+1$ local unitaries $\{U_0,\dots,U_k\}$ and therefore explores only a structured submanifold of the full unitary group. This restriction directly affects the dimension of the tangent space of reachable states.

Infinitesimal variations are generated by
\begin{equation}
\delta_{G,l} \ket{\psi} := 
\left. \frac{d}{dt}\, \ket{\psi(U_0,\dots, e^{tG}U_l,\dots,U_k)} \right|_{t=0},
\label{eq:deltaGl}
\end{equation}
with $G\in\mathfrak{u}(m)$ and $l=0,\dots,k$. The associated Jacobian rank $R_{\mathrm{SI}}(\bm{s},\bm{p})$ is therefore upper-bounded by the rank of the corresponding HLO configuration. More precisely, we obtain
\begin{equation}
2m-1 \ \leq\  R_{\mathrm{SI}}(\bm{s},\bm{p}) \ \leq\  R_{\mathrm{HLO}}(\bm{s}+\bm{p},\bm{p}),
\label{eq:rank_bound_SI}
\end{equation}
where the ``$+$" denotes concatenation. The lower bound corresponds to the worst-case scenario in which all photons are measured before the last reinjection, reducing the evolution to passive linear optics.
This shows that, at the level of single-branch pure-state generation, State Injection protocols cannot exceed the local controllability achievable by the corresponding heralded linear optical scheme. In contrast, enhanced controllability may arise when considering the fully averaged CPTP dynamics, as discussed in~\cite{Monbroussou2025}, which lies beyond the scope of the present analysis.

\section{Generic Jacobian rank}\label{app: rank}
To formally establish the existence of a generic rank and discuss the behavior of the Jacobian almost everywhere, we introduce some notions from real-analytic geometry. 

A real-analytic manifold is a topological manifold equipped with an atlas where the transition maps are real-analytic functions (i.e., functions that are locally given by a convergent power series in their arguments \cite{Krantz2002}). Similarly, a map $f: X \to \mathbb{R}^d$ on a real-analytic manifold $X$ is real-analytic if it can be locally represented as convergent power series using analytic coordinates on $X$. Unlike general smooth functions ($C^\infty$), which can be identically zero on a subset without being zero everywhere, a real-analytic function is rigidly constrained by its local behavior: if it vanishes on a 
set of positive measure, it must vanish on the entire connected component of its domain. We thus introduce the following lemma from real-analytic geometry.

\newtheorem{lemma}{Lemma}

\begin{lemma}

Let $X$ be a connected real-analytic manifold of dimension $\dim X$, $d\in \mathbb{N}$, and $f : X \to \mathbb{R}^d$, $g : X \to \mathbb{R}$ be real-analytic maps. Assume that $g$ is not identically zero. 
Denote by
\begin{equation}
J(x) = 
\frac{\partial (f_1/\sqrt{g},\dots,f_{d}/\sqrt{g})(x)}{\partial (x_1,\dots,x_{\dim X})}
\end{equation}
the real Jacobian matrix of $f/\sqrt{g}$ at $x \in \{\xi \in X : g(\xi)\ne 0\}$, and define the rank function
\begin{equation}
R(x) = \mathrm{rank}\,[J(x)].
\end{equation}
Let $R_{\max} = \max\{R(x) : g(x)\ne 0\}$ be the maximal rank attained on $X$.
Then there exists an 
open 
subset $U \subset X$ 
of full measure (and, in particular, dense)
on which $R(x)$ is constant and equal to $R_{\max}$. 

\end{lemma}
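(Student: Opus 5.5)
We reduce the statement to the vanishing locus of a single real-analytic function built from minors of the Jacobian. Let $X_g=\{x\in X: g(x)\neq 0\}$. This set is open in $X$.

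\emph{Step 1: $X_g$ has full measure.} The zero set $Z(g)$ has measure zero, because a real-analytic function on a connected manifold that is not identically zero cannot vanish on a set of positive measure. This is the standard fact recalled above, proved in local analytic charts by induction on dimension using Fubini.

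\emph{Step 2: clear the denominator.} On $X_g$ the map $f/\sqrt{g}$ is only analytic where $g>0$, or after taking $\sqrt{|g|}$. To avoid branch issues, we rescale the Jacobian entrywise. On $X_g$,
\[
\partial_j (f_i/\sqrt{|g|}) = |g|^{-3/2}\bigl(g\,\partial_j f_i - \tfrac12 f_i\,\partial_j g\bigr)\,\mathrm{sgn}(g).
\]
Hence $J(x)$ has the same rank as the matrix $\tilde J(x)$ with entries $\tilde J_{ij}=g\,\partial_j f_i-\tfrac12 f_i\,\partial_j g$. In any analytic chart these entries are real-analytic on all of $X$. Chart dependence only multiplies $\tilde J$ on the right by an invertible analytic matrix, so rank is chart-independent.

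\emph{Step 3: identify the maximal-rank set.} Fix a point $x_0\in X_g$ with $R(x_0)=R_{\max}$, and choose an $R_{\max}\times R_{\max}$ minor of $\tilde J$ that is nonzero at $x_0$. Globalizing this minor is the main obstacle, since minors are only defined chart by chart. To handle it, we use the sum of squares of all $R_{\max}$-minors, i.e.\ $\det$ of the Gram-type quantity.

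We define
\[
h(x)=\sum_{|I|=|K|=R_{\max}}\bigl(\det \tilde J_{I,K}(x)\bigr)^2
\]
in a chart. Changing chart multiplies the column space by an invertible matrix, so $h$ transforms only by a positive analytic factor, and its zero set is well defined. We cover $X$ by connected charts. Since $h\not\equiv0$ near $x_0$ and $X$ is connected, the zero set of $h$ has measure zero in every chart. This follows by propagating along chains of overlapping charts: if $h$ vanished identically on one chart, analyticity would force it to vanish on all overlapping charts, and eventually at $x_0$, a contradiction. We could alternatively use a Riemannian metric and $\det$ of $\tilde J^\top\tilde J$ restricted appropriately, but the minor-sum suffices.

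\emph{Step 4: conclude.} Let $U=X_g\setminus Z(h)$. This set is open, because both zero sets are closed. It has full measure by Steps 1 and 3, and hence is dense, since a nonempty open set has positive measure.

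On $U$ some $R_{\max}$-minor is nonzero, so $R(x)\ge R_{\max}$. By the definition of $R_{\max}$ we also have $R(x)\le R_{\max}$. Therefore $R\equiv R_{\max}$ on $U$.

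The only delicate point remaining is Step 3's identity-theorem propagation across charts. It uses connectedness of $X$ together with the fact that the analytic zero set of a nonzero function contains no open set, so the set where $h$ vanishes on a neighborhood is both open and closed.
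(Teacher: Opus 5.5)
Your proof is correct and follows essentially the same route as the paper: clear the denominator to get the everywhere-analytic matrix with rows $g\nabla f_i-\tfrac12 f_i\nabla g$ (the paper's $\mathcal V(x)$ up to a factor $2$), then use analyticity of its $R_{\max}$-minors on connected charts and propagate along chains of overlapping charts by connectedness, which you make more explicit through the function $h$. One small correction: under a chart change $h$ does not get multiplied by a positive analytic factor, since by Cauchy--Binet the minors transform via a compound matrix of the transition Jacobian; this is harmless, because $Z(h)=\{x:\mathrm{rank}\,\tilde J(x)<R_{\max}\}$ is chart-independent anyway, and that is all your propagation step uses.
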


\begin{proof}

At a point $x$ at which $g(x)\ne 0$, $R(x)$ is equal to the dimension of 
\begin{equation}\label{eq:jacob}
    \mathcal{V}(x)=\mathrm{span}\{2 g(x) \nabla f_i(x)-f_i(x) \nabla g(x) : i=1,\dots,d\}.
    \end{equation}

For every $x\in X$ the dimension of $\mathcal{V}(\xi)$ is lower bounded by $\mathcal{V}(x)$ for $\xi$ in a sufficiently small neighborhood of $x$, as it follows by a simple continuity argument. In particular, since $g$ is analytic and does not vanish identically, 
for every nonempty open set $U$ in $X$ the maximum dimension of $\mathcal{V}(x)$ in $V$ is attained in 
$\{x \in X : g(x)\ne 0\}$.

Consider $x$ at which $R_{\max}$ is attained and a coordinate neighborhood $U_x$ of $x$. The linear space defined in Eq.~\eqref{eq:jacob} has dimension $R_{\max}$ in an open subset of $U_x$ of full measure, by analiticity of $f$ and $g$. 
Repeating the argument for every coordinate neighborhood intersecting $U_x$ and iterating up to covering the entire connected manifold $X$, we obtain the result. 
\end{proof}

In our context, we apply this lemma to the real-analytic map $\tilde{\psi}$.
The map of interest is
\begin{equation}
\tilde{\psi} : \mathcal{X} \subset \mathcal{U}(m') \longrightarrow S^{2D-1} \subset \mathbb{R}^{2D},
\end{equation}
where $\mathcal{X} = \mathcal{U}(m')  \setminus \mathcal{Z}$ is the open domain on which the success probability $p(U',\bm{s'},\bm{h})$ is positive, and
\begin{equation}
\tilde{\psi}(U') = (\mathrm{Re}\ket{\psi(U',\bm{s'},\bm{h})},\, \mathrm{Im}\ket{\psi(U',\bm{s'},\bm{h})}).
\end{equation}
Since $\mathcal{U}(m')$ is a connected real-analytic manifold
and $p(\cdot,\bm{s'},\bm{h})$ is not identically zero, the lemma applies and it follows that
\begin{equation}
R[J(U',\bm{s'},\bm{h})] = R_{\mathrm{HLO}}(\bm{s'},\bm{h})
\quad \text{for almost all } U' \in \mathcal{X},
\end{equation}
with
\begin{equation}
R_{\mathrm{HLO}}(\bm{s'},\bm{h}) := \max_{U'\in \mathcal{X}} R[J(U',\bm{s'},\bm{h})].
\end{equation}

\section{Proof of Theorem ~\ref{thm:upper_bound}}\label{app:submatrix}

In this appendix, we formally derive the upper bound on the HLO Jacobian rank. The proof proceeds in two steps: first, we establish a general lemma concerning the effective degrees of freedom of a submatrix extracted from a larger unitary matrix, subject to rescaling symmetries. Second, we map the physical parameters of the HLO heralded state onto this geometric framework.

Recall that a semialgebraic set is a subset of $\mathbb{R}^d$, $d\in\mathbb{N}$,
that is the finite union
of sets defined by polynomial equalities and inequalities. 
An important property of semi-algebraic sets is that they admit a \emph{stratification}, that is, they can be decomposed as a finite disjoint union of smooth manifolds \cite{BochnakCosteRoy1998}. We can, in particular, speak of the \emph{dimension} of a semialgebraic set (that is, the maximum of the dimensions of the smooth manifolds in which the set can be decomposed). 
\begin{lemma}[Degrees of freedom of constrained submatrices]\label{lem:submatrix}
Denote by $\mathcal{M}(p,q,N) \subset \mathbb{C}^{p \times q}$ the set of complex matrices that can be embedded as the upper-left block of an $N \times N$ unitary matrix.

Setting two subsets of indices $\tilde{\mathcal{I}} \subset \{1, \dots, p\}$ and $\tilde{\mathcal{J}} \subset \{1, \dots, q\}$, with cardinalities $|\tilde{\mathcal{I}}| = P_{\rm row} $ and $|\tilde{\mathcal{J}}| = P_{\rm col}$, let $\mathcal{D}_{\tilde{\mathcal{I}}} \subset \mathbb{C}^{p \times p}$ and $\mathcal{D}_{\tilde{\mathcal{J}}} \subset \mathbb{C}^{q \times q}$ be the sets of diagonal invertible matrices whose diagonal entries are equal to $1$ outside of the index sets $\tilde{\mathcal{I}}$ and $\tilde{\mathcal{J}}$, respectively.

We then define the following equivalence relation $\sim$ on $\mathcal{M}(p,q,N) \subset \mathbb{C}^{p \times q}$: 
$V\sim V'$ if and only if 
there exist $\Lambda_{\rm row} \in \mathcal{D}_{\tilde{\mathcal{I}}}$ and $\Lambda_{\rm col} \in \mathcal{D}_{\tilde{\mathcal{J}}}$ such that
\begin{equation}
V' = \Lambda_{\rm row} \, V \, \Lambda_{\rm col}.
\end{equation}
By construction, the group of rescalings $\mathcal{D}_{\tilde{\mathcal{I}}} \times \mathcal{D}_{\tilde{\mathcal{J}}}$ is parametrized by $P = P_{\rm row} + P_{\rm col}$ independent complex parameters.

Then, setting $\Omega = p - (N-q)$, the set of equivalence classes for the equivalence relation $\sim$ can be identified with a semialgebraic set of real dimension 
\begin{equation}
d_{\rm eff}(p,q,N) =
\begin{cases}
2pq - 2P, & \Omega \le 0, \\[2mm]
\min\!\big(2pq - 2P,\; 2pq - P - \Omega^2\big), & \Omega > 0.
\end{cases}
\end{equation}

\end{lemma}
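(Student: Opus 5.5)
The plan is to compute the dimension of $\mathcal{M}(p,q,N)$ first, and then subtract the dimension of the orbits of the rescaling group. For the first step, recall that $V\in\mathbb{C}^{p\times q}$ embeds as the upper-left block of an $N\times N$ unitary if and only if $V$ is a contraction, $\|V\|\le 1$, and the number of singular values of $V$ equal to $1$ is at most $N-q$ (equivalently at most $N-p$). This follows from the CS decomposition: the complementary blocks $B$ (size $p\times(N-q)$) and $C$ (size $(N-p)\times q$) must satisfy $BB^\dagger=\mathbb{I}-VV^\dagger$ and $C^\dagger C=\mathbb{I}-V^\dagger V$. Hence $\mathrm{rank}(\mathbb{I}-VV^\dagger)\le N-q$, i.e. the unit singular values must number at least $p-(N-q)=\Omega$. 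Conversely, any such $V$ can be completed. So $\mathcal{M}$ is semialgebraic, defined by polynomial (in)equalities in $\mathrm{Re}\,V,\mathrm{Im}\,V$.

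If $\Omega\le 0$, the condition is vacuous beyond $\|V\|\le1$. Then $\mathcal{M}$ contains the open ball of strict contractions and has full dimension $2pq$. If $\Omega>0$, $V$ must have at least $\Omega$ singular values equal to $1$, and I compute the dimension of the stratum with exactly $\Omega$ unit singular values. Write $V=W\Sigma X^\dagger$ with $V^\dagger$ restricted to the unit-singular subspace being a partial isometry. Requiring $\mathbb{I}-VV^\dagger$ to have rank $p-\Omega$ is a condition of corank $\Omega$ on a $p\times p$ Hermitian matrix. Locally, the set of $p\times p$ positive semidefinite matrices of rank $p-\Omega$ has real codimension $\Omega^2$ in the $p^2$-dimensional Hermitian space. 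Because the map $V\mapsto VV^\dagger$ is a submersion onto its image near generic points, the codimension pulls back, giving $\dim\mathcal{M}=2pq-\Omega^2$. I would verify this by a tangent-space computation at $V_0=\begin{pmatrix}\mathbb{I}_\Omega&0\\0&\Sigma'\end{pmatrix}$ with $\Sigma'$ strictly contractive: the first-order constraint that the $\Omega\times\Omega$ block stays unitary-like yields exactly $\Omega^2$ independent real equations on the perturbation of that block (its Hermitian part must vanish), and the remaining entries are free to second order.

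Next I quotient by the rescaling group $\mathcal{D}_{\tilde{\mathcal{I}}}\times\mathcal{D}_{\tilde{\mathcal{J}}}$. Its complex dimension is $P$, but only the action preserving $\mathcal{M}$ matters. Two regimes arise:

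\emph{Unconstrained case ($\Omega\le0$).} Rescalings by arbitrary complex diagonal matrices near the identity keep strict contractions strict. So the $2P$ real parameters act, and for generic $V$ with all entries nonzero the stabilizer is trivial, apart from a possible one-dimensional overlap. Orbits thus have dimension $2P$, and the quotient has dimension $2pq-2P$.

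\emph{Constrained case ($\Omega>0$).} Moduli of rescalings generically move $V$ off $\mathcal{M}$, by changing unit singular values. So within the equivalence relation restricted to $\mathcal{M}$, only a subset of the modulus directions survive, together with the $P$ phase directions. The orbit dimension is then at least $P$ and at most $2P$, which gives the quotient dimension $\le\min(2pq-2P,\,2pq-\Omega^2-P)$. For the upper bound needed in Theorem~\ref{thm:upper_bound}, it suffices to show that the fibres of the quotient map contain the $P$-dimensional phase orbits. These are phase orbits of unitary diagonal matrices, which preserve $\mathcal{M}$ and are free generically. The fibres also contain, when $2pq-2P$ is smaller, the full $2P$-dimensional orbits. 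Formally, I would define the quotient via a semialgebraic slice, for instance normalising selected entries to be real positive, and use that the dimension of a semialgebraic image is at most the domain dimension minus the generic fibre dimension.

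The main obstacle is making the constrained-case count precise. This means showing that the rank-deficiency codimension is exactly $\Omega^2$ at generic points, and that phase orbits are generically free, so their dimension is really $P$ (with the overall identification of one common phase in rows/columns handled consistently). I would first try the explicit local chart around $V_0$ above, where the phase actions on rows in $\tilde{\mathcal{I}}$ and columns in $\tilde{\mathcal{J}}$ can be checked to have independent infinitesimal generators, provided $V_0$ is perturbed to have all entries nonzero.
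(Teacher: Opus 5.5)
Your computation of $\dim\mathcal{M}(p,q,N)=2pq-\Omega^2$ and your treatment of the regime $\Omega\le 0$ match the paper. Your tangent-space check at $V_0=\mathrm{diag}(\mathbb{I}_\Omega,\Sigma')$, where the Hermitian part of the $\Omega\times\Omega$ block of the perturbation must vanish, justifies the codimension $\Omega^2$ more cleanly than the paper's spectral count.

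Two small remarks. First, your opening sentence should say ``at least $\Omega$'' unit singular values, not ``at most $N-q$''; you derive the correct condition immediately afterwards. Second, your caveat about a common scaling is genuine. If $\tilde{\mathcal{I}}$ and $\tilde{\mathcal{J}}$ are both full, then $(c\,\mathbb{I}_p,c^{-1}\mathbb{I}_q)$ acts trivially and the stated formula is off. The paper ignores this, but it never occurs in the HLO application, where $\tilde{\mathcal{J}}$ is only the $k_{\rm occ}$ heralding columns.

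The gap is in the saturated regime $\Omega>0$. The lemma asserts an exact dimension, but you only aim at an upper bound. Knowing that the fibres have dimension between $P$ and $2P$ does not determine $d_{\rm eff}$.

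The missing ingredient is the paper's count of amplitude rescalings that remain tangent to $\mathcal{M}$. Take $V(s)=(\mathbb{I}+s\Delta^{\rm row})V(\mathbb{I}+s\Delta^{\rm col})$. First-order perturbation of the eigenvalue $1$ of $V^\dagger V$ shows that the unit singular values stay at $1$ to first order iff $U_\Omega^\dagger\Delta^{\rm row}U_\Omega+W_\Omega^\dagger\Delta^{\rm col}W_\Omega=0$. This is $\Omega^2$ real linear equations in the $P$ real amplitude parameters. At generic $V$ the system has rank $\min(P,\Omega^2)$, so $\max(0,P-\Omega^2)$ amplitude directions survive; by the implicit function theorem they integrate to genuine rescalings inside $\mathcal{M}$. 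The fibres then have dimension $P+\max(0,P-\Omega^2)$, and $d_{\rm eff}=2pq-\Omega^2-P-\max(0,P-\Omega^2)$, which is the stated minimum.

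Your upper-bound argument is also mis-stated in two places:
\begin{itemize}
\item ``Orbit dimension at most $2P$'' bounds the quotient from \emph{below}, not above.
\item The claim that the fibres contain full $2P$-dimensional orbits contradicts your own correct remark that amplitude rescalings leave $\mathcal{M}$.
\end{itemize}
The bound $2pq-2P$ can be rescued differently. The relation $\sim$ is the restriction of the action of $\mathcal{G}=\mathcal{D}_{\tilde{\mathcal{I}}}\times\mathcal{D}_{\tilde{\mathcal{J}}}$, so $\mathcal{M}/{\sim}$ is in bijection with $(\mathcal{G}\cdot\mathcal{M})/\mathcal{G}$. That set has dimension at most $2pq-2P$ when the action is generically free.

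This repaired upper bound suffices for Theorem~\ref{thm:upper_bound}. The equality claimed in the lemma still needs the generic full-rank (transversality) statement above, because a non-transverse real intersection of an orbit with $\mathcal{M}$ can have smaller than expected dimension.
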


\noindent
The proof, reported below, follows from the analysis of the interplay between orthogonality constraints imposed by the unitary embedding and the gauge redundancies captured by the equivalence relation $\sim$.

\begin{proof}
We first recall how to characterize the dimension of the  set
$\mathcal{M}(p,q,N)$, which is semialgebraic since the projection of any semialgebraic set is again semialgebraic (Tarski–Seidenberg principle \cite{BochnakCosteRoy1998}).

To characterize the dimension of $\mathcal{M}(p,q,N)$, consider the restriction of $U$ to its first $p$ rows:
\begin{equation}
U_{\mathrm{eff}} = \big[V^{p \times q} \ \big|\ A^{p \times (N-q)} \big],
\end{equation}
where $A$ collects the remaining columns. 
A matrix $V$ belongs to $\mathcal{M}(p,q,N)$ if and only if there exists $A$ for which the lines of $U_{\mathrm{eff}}$ are orthonormal, i.e.,
\begin{equation}
\label{eq:unitarity_constraint_app}
V V^\dagger + A A^\dagger = \mathbb{I}_p.
\end{equation}
This implies that the defect matrix 
\begin{equation}
K := \mathbb{I}_p - V V^\dagger,
\end{equation}
must be representable as $AA^{\dagger}$ for some $A \in \mathbb{C}^{p \times (N-q)}$. This imposes two conditions on $K$.

Since $K = AA^\dagger$, it must be Hermitian and positive semidefinite. This requirement is equivalent to $V$ being a contraction, i.e., its singular values must satisfy $0 \le \sigma_i \le 1$.

Moreover, since $A$ has only $N-q$ columns, the rank of $AA^\dagger$ cannot exceed this number. Thus, $K$ must satisfy:
\begin{equation}\label{eq:Vcompatible}
\mathrm{rank}(K) \le N-q.
\end{equation}

Since $K$ is a $p \times p$ matrix, its rank is always bounded by $p$. Thus, when $p \le N-q$, Eq.~\eqref{eq:Vcompatible} is trivially satisfied for any $V$ in the contractive set, $\{V\in \mathbb{C}^{p\times q} : \sigma_{\max}(V)<1\}$, with $\sigma_{\max}(\cdot)$ denoting the spectral norm. The set $\mathcal{M}(p,q,N)$
therefore has full dimension $2pq$.

When $p > N-q$, instead, the matrix $K$ must be a Hermitian matrix with at least $\Omega =  p - (N-q)$ zero eigenvalues (equivalently, $V$ has at least $\Omega$ singular
values equal to one).
This condition removes $\Omega^2$ real degrees of freedom, as can be explicitly shown by considering the spectral decomposition $K = U \Lambda U^\dagger$. Here, $\Lambda$ is a real diagonal matrix and $U$ is a unitary matrix defined modulo a phase rescaling of its columns. Since $\Omega$ eigenvalues are fixed to zero, $\Omega$ real degrees of freedom are removed from the diagonal matrix $\Lambda$. Moreover, the spectral representation is invariant under the transformation $U \to UB$, where $B$ is a block-diagonal unitary matrix acting as $B_{\Omega} \in \mathcal{U}(\Omega)$ on the null space of $K$ and as the identity elsewhere. Since $\Omega$ phase-scaling redundancies are already accounted for, the additional parameters required to specify $B$, acting as redundant degrees of freedom, are $\Omega^2 - \Omega$. Summing these contributions, we find that the total number of removed real degrees of freedom is $\Omega + (\Omega^2 - \Omega) = \Omega^2$.

The real dimension of $\mathcal{M}(p,q,N)$ for $\Omega \geq 0$ is then given by:
\begin{equation}
\dim_{\mathbb{R}} \mathcal{M}(p,q,N) = 2pq - \Omega^2.
\end{equation}
We now incorporate the effect of $P$ independent complex rescalings acting on rows and columns of $V$. Each rescaling contributes two real parameters (amplitude and phase).

\subsubsection*{Free regime ($\Omega \le 0$)}

In this case, taking $V$ in the interior
of $\mathcal{M}(p,q,N)$, all rescalings 
$V\to \Lambda_{\rm row}\, V \, \Lambda_{\rm col}$ with $\Lambda_{\rm col}$ and  $\Lambda_{\rm row}$ close to the identity
preserve the constraint~\eqref{eq:unitarity_constraint_app}. Therefore, all $2P$ real
parameters act as gauge redundancies, yielding
\begin{equation}
d_{\mathrm{eff}} = 2pq - 2P.
\end{equation}

\subsubsection*{Saturated regime ($\Omega > 0$)} 

In this regime, the condition for $V$ to be in $\mathcal{M}(p,q,N)$ is that $\Omega$ singular values are exactly equal to $1$. This is equivalent to requiring that the Gram matrix $G = V^\dagger V$ has an eigenvalue $\lambda = 1$ with multiplicity $\Omega$.

In particular, a phase rescaling $V' = \Lambda_\theta V \Lambda_{\phi}$, with unitary diagonal matrices $\Lambda_\theta, \Lambda_\phi$, induces a unitary transformation $G' = \Lambda_\phi^\dagger G \Lambda_\phi$ of the Gram matrix, preserving the spectrum. Thus, phase rescalings preserve $\mathcal{M}(p,q,N)$.

Consider now an amplitude rescaling coherent with the equivalence relation $\sim$:
\begin{equation}
    V(s) = (\mathbb{I} + s\Delta^{\rm{row}}) V (\mathbb{I} + s\Delta^{\rm{col}})
\end{equation}
depending on a parameter $s$, where $\Delta^{\rm{row}}$ 
and  $\Delta^{\rm{col}}$ are real-valued diagonal matrices. 
The matrices $\Delta^{\rm row}$ and $\Delta^{\rm col}$ are the infinitesimal generators of the real part of  $\mathcal{D}_{\tilde{\mathcal{I}}}$ and $\mathcal{D}_{\tilde{\mathcal{J}}}$, meaning their diagonal entries $\delta_{ii}$ are non-zero only for $i \in \tilde{\mathcal{I}}$ (resp. $j \in \tilde{\mathcal{J}}$).

The eigenpairs $(\lambda_k(s),w_k(s))$ of the Hermitian matrix $V(s)^\dagger V(s)$ can be chosen to depend analytically on $s$. Moreover,  
labeling so that $\lambda_k$ with $k=1,\dots,\Omega$ are the eigenvalues with $\lambda_k(0)=1$, we have  
\begin{equation}
    w_k^\dagger(0) \left.\frac{d}{ds}\right|_{s=0}(V(s)^\dagger V(s))w_j(0)=\delta_{kj}\left.\frac{d}{ds}\right|_{s=0}\lambda_k(s)
\end{equation}
for $k,j=1,\dots,\Omega$ \cite[Chapter~II, \S6]{Kato1995}.
Therefore, the condition that the singular values of $V(s)$ corresponding to $1$ are preserved at first order at $s=0$ can be rewritten as 
\begin{equation}\label{eq:MOmega=0}
M_\Omega=0,
\end{equation}
where $M_\Omega$ is the projection of $\left.\frac{d}{ds}\right|_{s=0}(V(s)^\dagger V(s))$ onto the space spanned by $w_1(0),\dots,w_\Omega(0)$, that is, 
\begin{equation}
   M_\Omega=W_\Omega^\dagger((\Delta^{\rm{row}}V+V\Delta^{\rm{col}})^\dagger V+V^\dagger(\Delta^{\rm{row}}V+V\Delta^{\rm{col}}) )W_\Omega, 
\end{equation}
where $W_\Omega$ is the $q \times \Omega$ matrix having columns $w_1(0),\dots,w_\Omega(0)$.

Defining $U_{\Omega}$ as the $p \times \Omega$ matrix whose columns are the left singular vectors of $V$ corresponding to the singluar values equal to $1$, we have the SVD relations
\begin{equation}\label{eq: SVD-m}
    V W_{\Omega} = U_{\Omega}, \qquad V^\dagger V W_\Omega=W_\Omega,
\end{equation} 
from which we get
\begin{equation}
M_\Omega =2(U_\Omega^\dagger \Delta^{\rm{row}} U_\Omega + W_\Omega^\dagger \Delta^{\rm{col}} W_\Omega).
\end{equation}

Notice that if we consider only a single row rescaling $\Delta^{\rm{row}}_i =  e_i e_i^T$, Eq.~\eqref{eq:MOmega=0} reduces to $\tilde{u}_i \tilde{u}_i^\dagger = 0$, where $\tilde{u}_i^\dagger$ is the $i$-th row of $U_\Omega$. For a generic $V$, this row is non-vanishing, implying that 
$\mathcal{M}(p,q,N)$ is not invariant under
individual rescalings. 

The effective number of amplitude gauge redundancies is determined by the dimension of the solution space of the matrix equation $M_{\Omega}(\Delta^{\rm{row}}, \Delta^{\rm{col}}) = 0$. Since $M_{\Omega}$ is Hermitian, this vanishing condition consists of $\Omega^2$ independent real equations for the $P$ amplitude parameters contained in the diagonal matrices $\Delta^{\rm{row}}$ and $\Delta^{\rm{col}}$. Two distinct regimes then further emerge.

\medskip

Overdetermined case ($\Omega^2 \geq P$): when the number of saturated singular values is large enough, the $\Omega^2$ constraints equal or exceed the available amplitude rescalings. For a generic point $V$, the only solution to the tangency condition is the trivial one ($\Delta^{\rm{row}} = 0, \Delta^{\rm{col}} = 0$). In this regime, any combination of amplitude rescalings necessarily shifts the saturated singular values away from unity, pushing the matrix out of the set $\mathcal{M}(p,q,N)$. 

\medskip

Underdetermined case ($\Omega^2 < P$): if the number of rescalings $P$ is sufficient to satisfy the spectral constraints, the system is underconstrained, and there exists a residual subspace of dimension $P - \Omega^2$ where row and column rescalings can be mutually balanced to keep the saturated singular values identically equal to $1$. In this subspace, the variations preserve $\mathcal{M}(p,q,N)$.

\medskip

Combining these validity constraints on amplitude symmetries with the always-valid $P$ phase symmetries, the total number of gauge directions that remain tangent to the manifold is $P + \max(0, P - \Omega^2)$. Subtracting these from the manifold dimension $2pq - \Omega^2$ yields
\begin{equation}
d_{\mathrm{eff}} = \min\big(2pq - 2P,\; 2pq - \Omega^2 - P\big).
\end{equation}

This establishes the expression for the effective number of independent real parameters of $V$ under unitary constraints and gauge redundancies, thus proving Lemma~\ref{lem:submatrix}.
\end{proof}

We now map this geometric result onto the HLO setting, thereby translating the effective degrees of freedom of the submatrix $V$ into a bound on the HLO Jacobian rank. This provides the proof of Theorem \ref{thm:upper_bound}, which we report below.

\begin{proof}
The heralded output state depends on the interferometer only through a submatrix
\begin{equation}
V = U'[\mathcal{I}, \mathcal{J}] \in \mathbb{C}^{p \times q},
\end{equation}
where $\mathcal{I}$ is the set of $n'$ occupied input modes and $\mathcal{J}$ is the set of allowed output modes, consisting of the $m$ unmeasured modes together with the $k_{\rm occ}$ occupied heralding modes. 
Eq.~\eqref{eq: heralded_state} can thus be rewritten as
\begin{equation}\label{eq:heralded_state_V}
\ket{\Phi} = \sum_{\substack{j_1,\dots,j_{n'} \in \mathcal{J} \\ \sum_{i \in \mathcal{I}} \delta_{j_i,l} = r_l \ \forall l\in \mathcal{H}}} 
\bigg( \prod_{i \in \mathcal{I}} V_{i,j_i} \bigg) 
\hat a_{j_1}^\dagger \dots \hat a_{j_{n'}}^\dagger \ket{0}.
\end{equation}
Therefore, the number of independent directions accessible through variations
of $U'$ is bounded by the number of effective degrees of freedom of $V$.

From Eq.~\eqref{eq:heralded_state_V}, the dependence of $\ket{\Phi}$
on $V$ is entirely through products of matrix elements of the form
$\prod_{i \in \mathcal{I}} V_{i,j_i}$.

Consider now a rescaling of the rows of $V$, $V_{i,j} \;\rightarrow\; \lambda_i \, V_{i,j}$, with $\lambda_i \in \mathbb{C}$. Each term in the sum acquires a factor
$\prod_{i \in \mathcal{I}} \lambda_i$, yielding $\ket{\Phi} \;\rightarrow\;
\left( \prod_{i \in \mathcal{I}} \lambda_i \right)\ket{\Phi}$.

Similarly, denoting with $\mathcal{H}_{\rm occ}$ the set of occupied heralding modes, a rescaling of the columns corresponding to these modes, $V_{i,j} \;\rightarrow\; \mu_j \, V_{i,j}$, with $j \in \mathcal{H}_{\rm occ}$, produces an overall factor $\ket{\Phi} \;\rightarrow\;
\left( \prod_{j \in \mathcal{H}_{\rm occ}} \mu_j^{\,r_j} \right)\ket{\Phi}$, since each heralded mode $j$ appears exactly $r_j$ times in every allowed configuration.

Therefore, all row and column rescalings act on $\ket{\Phi}$ only through an overall complex multiplicative factor. In particular, the $n'$ row rescalings and the $k_{\rm occ}$ column rescalings define a total of
\begin{equation}
P = n' + k_{\rm occ}
\end{equation}
independent complex gauge parameters.

However, these $2P$ real directions do not generate independent variations of the physical state: they collapse to a two-dimensional real subspace corresponding to global amplitude and phase transformations of $\ket{\Phi}$. Upon normalization, the global amplitude is removed, while the global phase remains as the only residual direction.

We can therefore apply Lemma~\ref{lem:submatrix} to the submatrix $V \in \mathbb{C}^{p \times q}$, where in the HLO setting, the embedding dimension is $N = m'$. While normalization removes the global amplitude, the global phase remains a valid direction for the Jacobian of the state vector, thus $R = d_{\rm eff} + 1$. This yields the expressions for $R_{\rm free}$ and $R_{\rm sat}$.

Finally, the rank cannot exceed the total number of independent real parameters of the output state, which lies on the unit sphere of the $n$-photon Hilbert space over $m$ modes and has real dimension $2D-1$. Combining these bounds yields the statement of the theorem.
\end{proof}

\section{Proof of Theorem ~\ref{thm:lower bounds}}\label{app:lower bounds}

We provide here the proof of the lower bounds on the generic HLO Jacobian rank in the regime $\Omega \le 0$. In the $\Omega \leq 0$ regime, there are sufficiently many modes to satisfy the global unitarity constraints without restricting the local degrees of freedom of the submatrix $V \in \mathbb{C}^{n \times m}$. In particular, if the spectral norm satisfies $\sigma_{\max}(V) < 1$, the matrix $V$ can be embedded into a larger unitary $U' \in \mathcal{U}(m')$. The set of such contractive matrices is an open subset of $\mathbb{C}^{n \times m}$, so that the entries of $V$ provide valid local coordinates for the map. The lower bounds for the considered regimes are obtained by considering the map $V \mapsto \ket{\psi}$ and its Jacobian rank at a suitably chosen point $V^*$.

\subsection{Case $r=0$}

We consider first the case $r=0$, and we consider input states of the form $\ket{s'} = \ket{\bm{s}}_{\mathcal{D}}\otimes \ket{\bm{0}}_{\mathcal{H}}$, where $\ket{\bm{s}} = |1^{\otimes n} 0^{\otimes (m-n)}\rangle$, with $n \leq m$. The heralded state factorizes as $\ket{\Phi}=\ket{\phi}_{\mathcal{D}} \otimes \ket{\bm{0}}_{\mathcal{H}}$ and depends only on the data-sector matrix $V \in \mathbb{C}^{n \times m}$.

\begin{proof} 
The unnormalized state is
\begin{equation}
\ket{\phi} =
\sum_{j_1,\dots,j_n=1}^m
\bigg{(} \prod_{i=1}^n V_{i,j_i} \bigg{)}
\hat a_{j_1}^\dagger \cdots \hat a_{j_n}^\dagger \ket{0}.
\end{equation}
The derivative with respect to $V_{k,l}$ reads
\begin{equation}\label{eq: derivative r0}
\frac{\partial \ket{\phi}}{\partial V_{k,l}} =
\sum_{j_1,\dots,j_n=1}^m
\bigg{(} \delta_{j_k,l} \prod_{i \neq k} V_{i,j_i} \bigg{)}
\hat a_{j_1}^\dagger \cdots \hat a_{j_n}^\dagger \ket{0}.
\end{equation}
Each derivative is thus identified with a vector in $\mathbb{C}^D$, corresponding to a column of the complex Jacobian.

To establish the lower bound, we evaluate the Jacobian at the point
$ V^* = (\alpha  \,\mathbb{I} \ |\ 0)$, with $ 0 < \alpha < 1$.
At $V^*$, Eq.~\eqref{eq: derivative r0} reduces to
\begin{align}
    \left. \frac{\partial \ket{\phi}}{\partial V_{k,l}} \right|_{V^*}
&=   \sum_{j_1,\dots,j_n=1}^m
\bigg( \delta_{j_k,l} \prod_{i\neq k} \alpha\delta_{i,j_i}\bigg)
\hat a_{j_1}^\dagger \cdots \hat a_{j_n}^\dagger \ket{0} \nonumber\\&=\alpha^{n-1} \hat{a}_l^\dagger \hat{a}_k \ket{\bm{s}}.
\end{align}
For $l \neq k$, this is proportional to the Fock state, obtained by moving one photon from mode $k$ to $l$. Different pairs $(k,l)$ yield orthogonal Fock states, hence linearly independent directions. This provides $2n(m-1)$ real directions.

For $l = k$, all derivatives are proportional to $\ket{\bm{s}}$, yielding a single complex direction. Thus, the total real rank for the unnormalized state is $2n(m-1)+2$. After normalization, the radial (norm) direction is removed, while the global phase remains. Letting $U'^*\in \mathcal{U}(m')$ be such that $V^*$ is its upper left $n \times m$ block, we find
\begin{equation}
\mathrm{rank}\ J(U'^*,\bm{s'},\bm{0})= 2n(m-1) + 1.
\end{equation}
The generic value $R_{\rm{HLO}}(\bm{s'},\bm{0})$ is necessary greater than or equal to the value $\mathrm{rank}\ J(U'^*,\bm{s'},\bm{0})$ as it is the maximal value, hence the bound is proven.
\end{proof}

\subsection{Case $r>0$, $n>2$, $n+r \le m$}

We now consider heralding configurations with $r>0$, 
considering heralding patterns of the form $\ket{\bm{h}} = |1^{\otimes r} 0^{\otimes (k-r)}\rangle$. In this regime, the scattering submatrix is $V \in \mathbb{C}^{(n+r) \times (m+r)}$, where the first $m$ columns correspond to the data modes and the remaining $r$ columns correspond to the occupied heralding modes. 

\begin{proof}
The unnormalized heralded state is given by
\begin{equation}
\ket{\Phi} =
\sum_{\substack{j_1,\dots,j_{n+r}=1 \\ \text{heralding constraints}}}^{m+r}
\bigg( \prod_{i=1}^{n+r} V_{i,j_i} \bigg)
\hat a_{j_1}^\dagger \cdots \hat a_{j_{n+r}}^\dagger \ket{0}.
\end{equation}
Taking the derivative with respect to a matrix element $V_{k,l}$ fixes the $k$-th photon to the output mode $l$. The remaining $n+r-1$ photons are then distributed such that either $r$ of them satisfy the heralding requirement (if $l \le m$) or $r-1$ satisfy it (if $l > m$). The derivative expression is:
\begin{equation}
\frac{\partial \ket{\Phi}}{\partial V_{k,l}}  = \sum_{\substack{j_1,\dots,j_{n+r}=1 \\ \text{heralding constraints}}}^{m+r} \delta_{j_k,l}
\bigg( \prod_{i\neq k} V_{i,j_i} \bigg) 
\hat a_{j_1}^\dagger \cdots \hat a_{j_{n+r}}^\dagger \ket{0},
\end{equation}
which can be factorized as 
$\partial \ket{\Phi} /\partial V_{k,l} = \partial \ket{\phi} /\partial V_{k,l} \otimes \ket{\bm{h}}_{\mathcal{H}}$. All non-trivial variations of the state are thus entirely captured by the data-sector component $\ket{\phi}$, and in the following we restrict the analysis to this subsystem. Once again the derivative of the data-sector state $\partial \ket{\phi} /\partial V_{k,l}$ can be represented as a vector in $\mathbb{C}^D$, corresponding to a column of the complex Jacobian.

To establish the lower bound we restrict to the configurations where $n>2$ and $n+r\leq m$. Within this setting we can consider input states of the form $\ket{\bm{s}'}=\ket{\bm{\tilde{s}}}_{\mathcal{D}}\otimes\ket{\bm{0}}_{\mathcal{H}}$, where $\ket{\bm{\tilde{s}}}= |1^{\otimes (n+r)} 0^{\otimes (m-(n+r))}\rangle$. We then evaluate the derivatives at a reference point of the form $V^* = (\alpha \,\mathbb{I} \ |\ H)$
where $0 < \alpha < 1$, while the heralding block $H \in \mathbb{C}^{(n+r)\times r}$ is filled with strictly positive real entries chosen so that $\sigma_{\max}(V^*) < 1$.

We first consider derivatives $\partial \ket{\phi}/\partial V_{k,l}$  with respect to matrix elements in the data sector ($l \le m$). Evaluating them at $V^*$, the structure of the diagonal block forces all photons $i \neq k$ that are not assigned to the heralding modes to occupy their corresponding modes $i$. The remaining $r$ photons are distributed among the heralding modes according to the pattern $\bm{h}$. As a result, for $l = k$, the diagonal derivatives take the form: 
\begin{align}
\left. \frac{\partial \ket{\phi}}{\partial V_{k,k}} \right|_{V^*}
&= \alpha^{n-1} \hat{a}_k^\dagger 
\sum_{ \bm{p}_k}   \mathrm{Perm} (H_{ \bm{p}_k} )
\bigg( \prod_{i \notin \bm{p}_k,\, i \neq k} \hat{a}_i^\dagger \bigg) \ket{0}
\nonumber\\&= \alpha^{n-1}
\sum_{ \bm{p}_k}   \mathrm{Perm} (H_{ \bm{p}_k} )
\bigg( \prod_{i \in \bm{p}_k} \hat{a}_i \bigg) \ket{\tilde{\bm{s}}}_{\mathcal{D}}, 
\end{align}
where the sum is over all the subsets $ \bm{p}_k$ of $r$ indices chosen from $\{1,\dots,n+r\}\setminus\{k\}$, and 
$H_{\bm{p}_k}$
is the corresponding $r\times r$ submatrix of $H$.

This sector has support on Fock states obtained by removing exactly $r$ photons from the reference state $\ket{\tilde{\bm{s}}}$. In particular, for each index $k$, the vector $\partial \ket{\phi}/\partial V_{k,k}$ has support on all configurations obtained by removing $r$ photons from $\ket{\tilde{\bm{s}}}$, with the exception of those where the photon in mode $k$ is among those removed. Equivalently, its support is identified by the set of all removal patterns $\bm{p}$ of size $|\bm{p}|=r$ such that $k \notin \bm{p}$. To establish linear independence, consider a linear combination $\sum_{k=1}^{n+r} \lambda_k \, \frac{\partial \ket{\phi}}{\partial V_{k,k}} = 0$. Projecting this equation onto each Fock basis state $\ket{\mathbf{n}}_{\bm{p}}$ labeled by the removal pattern $\bm{p}$ yields:
\begin{equation}
\text{Perm}(H_{\bm{p}}) 
\sum_{k \notin \bm{p}} \lambda_k  
= 0.
\end{equation}
Given that all coefficients $\text{Perm}(H_{\bm{p}})$ are strictly positive, this means that the vector $(\lambda_1,\dots,\lambda_{n+r})$ is orthogonal to each vector of $\mathbb{R}^{n+r}$ with $r$ components equal to zero and the other equal to 1 (each corresponding to a different $\bm{p}$). Since the family of all such vectors spans $\mathbb{R}^{n+r}$,  
this guarantees that the only solution is the trivial one, $\lambda_k = 0$ for all $k$. Consequently, the $n+r$ diagonal derivatives provide $2(n+r)$ independent real directions.

For $l \neq k$, instead, the derivative $\partial \ket{\phi}/\partial V_{k,l}$ takes the form
\begin{align}
   \left. \frac{\partial \ket{\phi}}{\partial V_{k,l}} \right|_{V^*}
&= \alpha^{\,n-1}\, \hat a_l^\dagger 
\sum_{ \bm{p}_k}\mathrm{Perm} (H_{ \bm{p}_k} )\,
\bigg( \prod_{i \notin \bm{p}_k,\, i \neq k} \hat a_i^\dagger \bigg)
\ket{0}\\
&=\alpha^{n-1}\sum_{ \bm{p}_k}
\mathrm{Perm} (H_{ \bm{p}_k} ) \ \hat a_l^\dagger \hat{a}_k \bigg( \prod_{i \in \bm{p}_k}\hat{a}_i \bigg)\ket{\tilde{\bm{s}}}_{\mathcal{D}}, 
\end{align}
where the sum is over all possible subsets $ \bm{p}_k$ of $r$ indices chosen from $\{1, \dots, n+r\} \setminus \{k\}$, and $H_{ \bm{p}_k}$ is the corresponding $r\times r$ submatrix of $H$.

At the reference point $V^*$, the $k$-th photon is uniquely reassigned to the target mode $l$. To establish the independence of the off-diagonal sector, we focus on the Fock subspaces $\{\mathcal{F}_{l}\}$ characterized by an excess occupation in mode $l$ ($n_l = \tilde{\bm{s}}_l+ 1$). These subspaces are mutually orthogonal for different $l$, and the diagonal derivatives have null projection onto them, as they populate exclusively states with no excess occupation. Within each $\mathcal{F}_l$, the derivatives have support on the set of states obtained by the addition of one photon in mode $l$ and the removal of exactly $r+1$ photons from the originally occupied modes, excluding mode $l$ itself. The set of available modes for this removal contains $n+r-1$ indices if $l \le n+r$, and $n+r$ indices if $l > n+r$. Each such state is identified by a removal pattern $\bm{P} \subset \{1, \dots, n+r\}\setminus\{l\}$ of size $|\bm{P}|=r+1$. A derivative $\partial \ket{\phi}/\partial V_{k,l}$ has non-zero support on the state labelled by $\bm{P}$ if and only if $k \in \bm{P}$, as the photon originating from mode $k$ must be among those removed from the data sector. Projecting a linear combination $\sum_k \lambda_k \frac{\partial \ket{\phi}}{\partial V_{k,l}} = 0$ onto each basis state $\ket{\mathbf{n}}_{\bm{P}}$ yields:
\begin{equation}\label{eq:Hpmk}
\sum_{k \in \bm{P}} \lambda_k \, \mathrm{Perm} (H_{\bm{P} \setminus \{k\}} ) = 0.
\end{equation}
Since $H$ has strictly positive entries, all coefficients $\mathrm{Perm} (H_{\bm{P} \setminus \{k\}} )$ are non-vanishing. 
Under the additional assumption that all entries of $H$ have the same value $\beta$, with $0<\beta<1$, 
one has that $\mathrm{Perm} (H_{\bm{P} \setminus \{k\}} )=r!\beta^r$ is independent of $\bm{P}$ and of $k$.
The collection of equalities \eqref{eq:Hpmk} obtained by varying $\bm{P}$ can be seen as the orthogonality of the vector $\lambda=(\lambda_1,\dots,\lambda_{l-1},\lambda_{l+1},\dots,\lambda_{n+r})$ (if $l\le n+r$) or $\lambda=(\lambda_1,\dots,\lambda_{n+r})$ (if $l> n+r$) with a family of vectors $v_{\bm{P}}$ each 
having coefficient $1$ for the coordinates corresponding to $\bm{P}$ and zero otherwise. Hence $\lambda$ must vanish.
 
This confirms the linear independence of the derivatives for each data mode $l$. Notice that the previous argument relies on the assumption $n>2$ since, for $n=2$, the condition for $l \le n+r$ fails as the system collapses to a single equation, consistently with possible rank deficiency expected in the $n=2$ regime. 

Since different output modes $l$ populate disjoint subspaces $\mathcal{F}_l$, the off-diagonal data sector provides $2(m-1)(n+r)$ independent real directions, all of which are linearly independent from the diagonal sector.

Normalization removes at most one real direction. Collecting the contributions, we obtain the lower bound
\begin{align}
\mathrm{rank} \ J(U'^*,\bm{s'},\bm{h}) &\ge 2(m-1)(n+r) + 2(n+r) - 1 \nonumber \\
& = 2m(n+r) - 1,
\end{align}
where we recall that 
$U'^*$ is such that 
$V^*$ is its upper left $(n+r) \times (m+r)$ block and the subtraction of one accounts for the removal of the radial direction upon normalization. The generic value $R_{\rm{HLO}}(\bm{s'},\bm{h})$ is necessary greater than or equal to the value $\mathrm{rank} \ J(U'^*,\bm{s'},\bm{h})$ as it is the maximal value by definition, hence the bound is proven.
\end{proof}

\emph{Remark.} 
It is worth noting that the derivatives with respect to the Heralding sector ($m+1 \leq l \leq m+r$) do not necessarily provide additional independent directions beyond those already identified. Specifically, at the reference point $V^*$, these derivatives take the form:
\begin{align}
    \left. \frac{\partial \ket{\phi}}{\partial V_{k,l}} \right|_{V^*}
&= \alpha^{n-1} \ \sum_{ \bm{q}_k} \mathrm{Perm} (H_{ \bm{q}_k,l} )
\bigg( \prod_{i \notin  \bm{q}_k,\, i \neq k} \hat{a}_i^\dagger \bigg) \ket{0} \nonumber \\&=\alpha^{n-1} \sum_{ \bm{q}_k} \mathrm{Perm} (H_{ \bm{q}_k,l} ) \ \hat{a}_k 
\bigg( \prod_{i \in  \bm{q}_k}\hat{a}_i \bigg) \ket{\tilde{\bm{s}}}_{\mathcal{D}}  ,
\end{align}
where the sum is over all the subsets $ \bm{q}_k$ of $r-1$ indices chosen from $\{1,\dots,n+r\}\setminus\{k\}$, and $H_{ \bm{q}_k,l}$ is the  $(r-1)\times(r-1)$ submatrix of $H$, obtained by using the corresponding rows and excluding column $l$. Since these variations have support on the same set of Fock states as the diagonal sector, they are not guaranteed to increase the lower bound. This observation clarifies why the current proof saturates at $2m(n+r) - 1$ and suggests that further improvement of the bound would require a more refined analysis of the dependencies between the diagonal and heralding sectors, possibly through a specific non-uniform choice of the matrix $H$. 

\section{Details on global reachability}\label{app: global}
\begin{figure*}[t]
    \centering
    \includegraphics[width=0.8\textwidth]{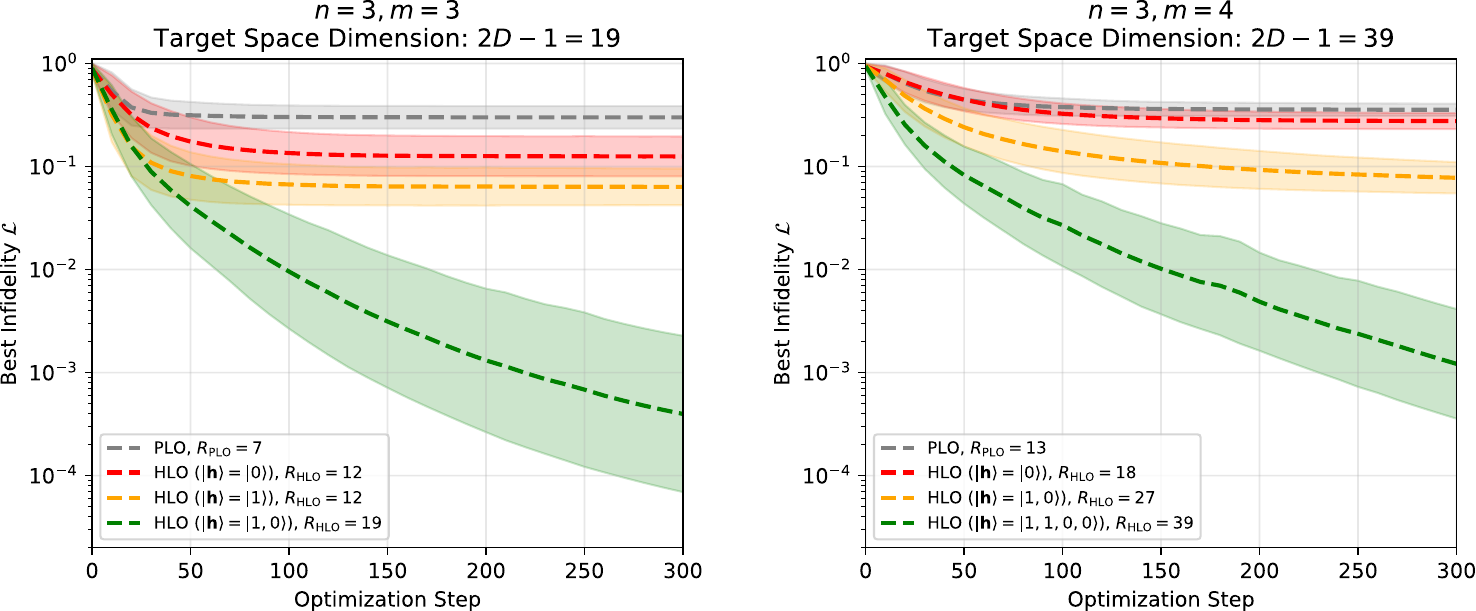}
    \caption{Evolution of the best infidelity $\mathcal{L}$ as a function of the optimization steps for output space with either $n=3$ photons in $m=3$ modes (target space dimension $2D-1 = 19$) or $n=3$ photons in $m=4$ modes (target space dimension $2D-1 = 39$). The curves represent the geometric mean over $M=200$ Haar-random target states, with shaded areas indicating the geometric standard deviation. The gray dashed line corresponds to standard Passive Linear Optics (PLO), which quickly plateaus due to insufficient degrees of freedom. HLO configurations with suboptimal ancillary resources show improved performance but still exhibit a saturation of the loss. In contrast, the HLO configuration with maximal Jacobian rank (green curve) shows an sustained decay, reaching infidelities orders of magnitude lower within $T=300$ steps. Each data point is the result of the best among $10$ independent restarts with different initial unitaries.}
    \label{fig:opt_infidelity}
\end{figure*}
We consider again the normalized HLO map on the data modes,
\begin{equation}
\begin{aligned}
&\tilde{\psi} : \mathcal{X} \longrightarrow S^{2D-1} \subset \mathbb{R}^{2D},\\
&\tilde{\psi}(U') = \bigl( \mathrm{Re}\ket{\psi(U',\bm{s'},\bm{h})}, \mathrm{Im}\ket{\psi(U',\bm{s'},\bm{h})} \bigr),
\end{aligned}
\end{equation}
where $\mathcal{X}\subset \mathcal{U}(m')$ is the open domain where the success probability is strictly positive,
\begin{equation}
\mathcal{X} = \{U'\in\mathcal{U}(m') :  p(U',\bm{s'},\bm{h})>0\}.
\end{equation}
Hence, $\tilde\psi$ is well defined on $\mathcal{X}$, and the map is smooth and real analytic on its domain.

The complement
\begin{equation}
\mathcal{Z}= \mathcal{U}(m')\setminus \mathcal{X}
= \{U'\in\mathcal{U}(m') :  p(U',\bm{s'},\bm{h})=0\}
\end{equation}
is the \emph{singular locus}, consisting of unitaries for which the heralding probability vanishes.
It is a closed subset of $\mathcal{U}(m')$ and has measure zero.

Inside $\mathcal{X}$, the subset
\begin{equation}
\mathcal{C} = \{U'\in\mathcal{X} : \mathrm{rank}[J(U')]<R_{\mathrm{HLO}}(\bm{s'},\bm{h})\},
\end{equation}
is the \emph{critical locus}, where the Jacobian rank drops below its generic value.
Here $\mathcal{C}$ is a real-analytic subset of $\mathcal{X}$ and thus has Lebesgue measure zero.

We assume now that the generic rank attains its maximal possible value,
\begin{equation}
R_{\mathrm{HLO}}(\bm{s'},\bm{h}) = 2D-1,
\end{equation}
so that $\tilde\psi$ is a submersion almost everywhere on $\mathcal{X}$, i.e. on $\mathcal{X}\setminus \mathcal{C}$.

Before addressing the specific HLO case, we recall a general topological result concerning globally surjective submersions.

\begin{theorem}[Submersion surjectivity]\label{thm:submersion_surj}
Let \(X\) be a compact smooth manifold without boundary and \(Y\) a connected smooth manifold.
Let \(f:X\to Y\) be a smooth map that is a submersion at every point of \(X\) (i.e. \(df_x\) is surjective for all \(x\in X)\).
Then \(f\) is surjective.
\end{theorem}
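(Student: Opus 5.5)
The plan is to show that the image $f(X)$ is both open and closed in $Y$, and then conclude $f(X)=Y$ from the connectedness of $Y$, using that $X$ is nonempty. (If $X$ were empty the statement would be vacuous or false, so we implicitly assume $X\neq\emptyset$, as is the case for $X=\mathcal{U}(m')$.)

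\emph{Closedness.} Since $X$ is compact and $f$ is continuous, $f(X)$ is compact. Because $Y$ is a manifold, it is Hausdorff, so the compact set $f(X)$ is closed in $Y$.

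\emph{Openness.} This follows from the local submersion theorem. Take $x\in X$ with $\dim X=a$ and $\dim Y=b$. Since $df_x$ is surjective, we have $a\ge b$. By the implicit function theorem, in suitable local charts around $x$ and $f(x)$ the map $f$ takes the canonical form
\[
(x_1,\dots,x_a)\mapsto(x_1,\dots,x_b).
\]
This projection maps any open coordinate neighbourhood of $x$ onto an open neighbourhood of $f(x)$. Hence $f$ is an open map, and in particular $f(X)$ is open in $Y$.

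\emph{Conclusion.} The set $f(X)$ is nonempty, open and closed in the connected space $Y$, so $f(X)=Y$.

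The only step that needs real care is the openness argument: one has to check that the chart-adapted normal form applies at every point. That is exactly where the hypothesis "submersion at every point" is used. This is also why the argument fails for the HLO map, since it breaks down on the critical locus $\mathcal{C}$ and on the singular locus $\mathcal{Z}$.
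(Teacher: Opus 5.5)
Your proof is correct and follows the paper's argument essentially step for step. Openness of $f(X)$ comes from the local submersion (canonical projection) form, closedness from compactness of $X$ and the Hausdorff property of $Y$, and surjectivity from connectedness of $Y$ together with $X\neq\emptyset$, an assumption the paper also invokes explicitly.
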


\begin{proof}
By the Local Submersion Theorem~\cite[Ch.~1, Sec.~4]{Guillemin2010}, a submersion is locally equivalent to a projection and hence maps small neighborhoods to open sets in \(Y\). Since \(X\) is a manifold without boundary, every point in \(X\) has a neighborhood fully contained in \(X\), so \(X\) can be covered by open sets. It follows that the image \(\mathcal{A} = f(X)\) is a union of open sets in \(Y\), and is therefore open. Since \(X\) is compact and \(f\) continuous, \(\mathcal{A}\) is compact in \(Y\). Because manifolds are Hausdorff, compact subsets are closed, so \(\mathcal{A}\) is closed in \(Y\). Therefore \(\mathcal{A}\) is both open and closed in the connected manifold \(Y\).

As \(f\) is continuous and \(X\) nonempty, \(\mathcal{A}\) is nonempty. The only nonempty subset of a connected space that is both open and closed is the whole space, hence \(\mathcal{A}=Y\).
Thus \(f\) is surjective.
\end{proof}

In the case of the HLO map, if both the singular locus and the critical locus were empty (i.e., \(\mathcal{Z} = \mathcal{C} = \emptyset\)), then \(\tilde{\psi}\) would be a smooth submersion on the entire compact manifold \(\mathcal{U}(m')\). By Theorem~\ref{thm:submersion_surj}, it would follow that
\begin{equation}
\tilde{\psi}(\mathcal{U}(m')) = S^{2D-1},
\end{equation}
and thus full global reachability would hold.

In practice, however, while the singular locus \(\mathcal{Z}\) may or may not be empty depending on the setting, the critical locus \(\mathcal{C}\) is known to be nonempty, as the rank drops below its maximal values for interferometer configurations which do not couple the output space with the heralding register.

The presence of \(\mathcal{Z}\) can prevent the image \(\mathcal{Y} = \tilde{\psi}(\mathcal{X})\) from being closed, as sequences in \(\mathcal{X}\) approaching \(\mathcal{Z}\) may accumulate at the open boundary of \(\mathcal{Y}\).

Conversely, if \(\mathcal{Z} = \emptyset\) but \(\mathcal{C} \neq \emptyset\), the domain remains compact, so \(\mathcal{Y}\) is closed by continuity; however, even though Sard’s theorem~\cite{Sard1942} ensures that the set of critical values \(\tilde{\psi}(\mathcal{C})\) has measure zero in \(S^{2D-1}\), the existence of critical points may obstruct the global openness of the map, producing effective boundaries in \(\mathcal{Y}\).

In summary, although \(\tilde{\psi}\) is a local submersion almost everywhere, the coexistence of singular and critical loci can obstruct global surjectivity. Whether their combined effect can be ruled out in the HLO setting remains an open question and for this reason global reachability of the HLO map should be regarded as a conjecture.

\section{Numerical evidence for global reachability}\label{app: global numerical}

Since an analytical proof of the density of the HLO accessible set $\mathcal{A}$ in $S^{2D-1}$ is lacking for general configurations, we complement our rank-based analysis with numerical experiments. We employ Riemannian optimization \cite{Yao2024} on the unitary manifold $\mathcal{U}(m')$ to test whether the available degrees of freedom allow the system to reach arbitrary target states.

For a given output configuration $(n, m)$, we consider varying input states of the form $\ket{\bm{s'}} = |1^{\otimes n'} 0^{\otimes (m'-n')}\rangle$, with $m'=m+k$ and $n'=n+r$, projecting on heralding patterns of the form $\ket{\bm{h}} = |1^{\otimes r} 0^{\otimes (k-r)}\rangle$. The distance between the heralded state $\ket{\psi(U')}$ and a target $\ket{\psi_{\rm target}}$ in the output space is quantified by the infidelity:
\begin{equation}
\mathcal{L}(U') = 1 - |\langle \psi_{\rm target} \mid \psi(U') \rangle|^2.
\end{equation}
To minimize $\mathcal{L}(U')$, we implement an iterative Riemannian gradient descent algorithm directly on the unitary manifold $\mathcal{U}(m')$. At each step $t$, the unitary $U'_t$ is updated via the exponential map:
\begin{equation}
U'_{t+1} = e^{-\eta_t \Omega_t}U'_t,
\end{equation}
where $\Omega_t$ is an anti-Hermitian operator belonging to the Lie algebra $\mathfrak{u}(m')$ and $\eta_t$ is the learning rate. Specifically, $\Omega_t$ is the analytical Riemannian gradient of the loss function, determined by projecting the Euclidean gradient onto the tangent space of the manifold at the current point $U'_t$. This approach ensures that each update step remains intrinsically within the unitary group.

\begin{figure*}[t]
    \centering
    \includegraphics[width=0.9\textwidth]{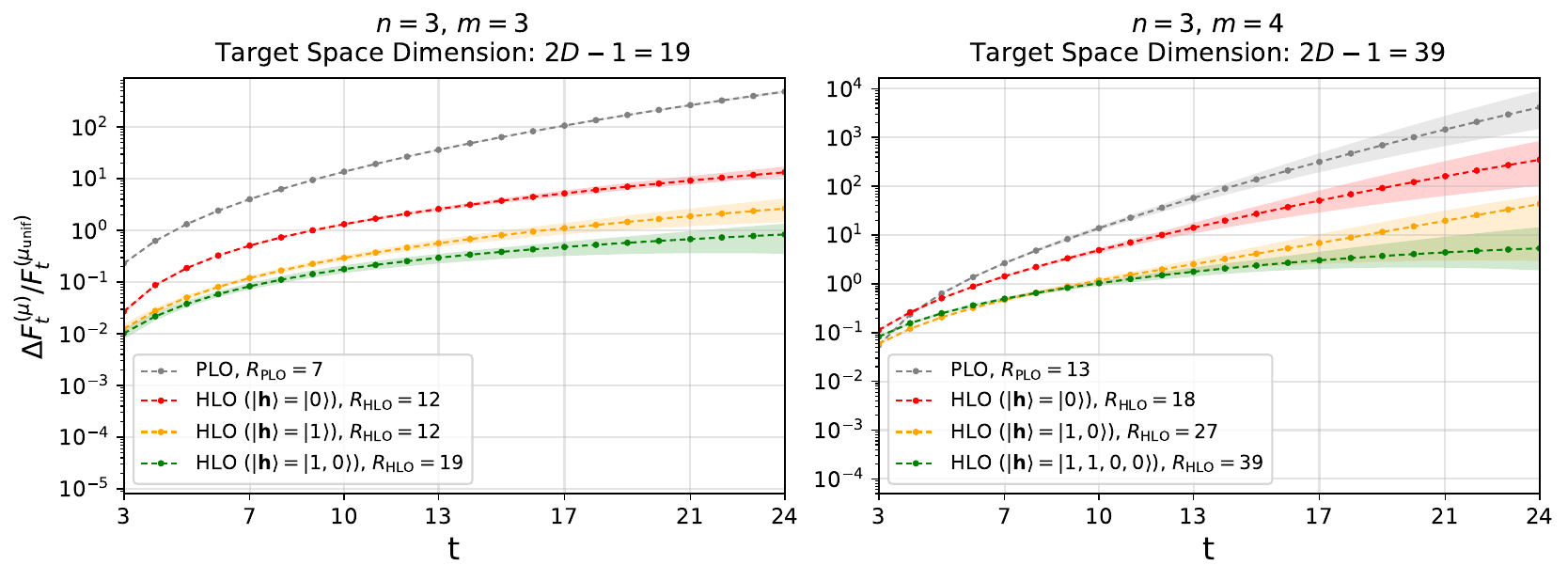}
    \caption{Relative difference between the frame potentials of output states $\ket{\psi(U',\bm{s'},\bm{h})}$ of PLO/HLO circuits, and the ones of the uniform measure on all pure states. The configurations considered are exactly the same as those in Figure~\ref{fig:opt_infidelity}. For each configuration, $N_{\rm samples}=3000$ Haar random matrices  $U' \in \mathcal{U}(m')$ were drawn, and the normalized output states $\ket{\psi}=\ket{\psi(U',\bm{s'},\bm{h})}$ computed. Estimates of the $t$-th frame potential $F_t^{(\mu)}$ (solid circles) are then calculated as the empirical average of $\left| \bra{\psi_1}\ket{\psi_2} \right|^{2t}$ over all distinct pairs $(\ket{\psi_1},\ket{\psi_2})$ of states obtained. The relative differences with the uniform measure's frame potential are then computed using Eq.~\eqref{eq: frame-potential-uniform}. Lastly, this whole process is repeated $N_{\rm repeat}=30$ times, producing $N_{\rm repeat}$ new estimates of each frame potential $F_t^{(\mu)}$, according to which the $5^{\rm{th}}$-$95^{\rm{th}}$ percentile range is evaluated (shaded regions).}
    \label{fig:frame-potentials-relative-difference}
\end{figure*}

To ensure the statistical robustness of our results and to mitigate the impact of local minima, we adopt the following protocol. We draw $M=200$ target states $\ket{\psi_{\rm target}}$ sampled uniformly according to the Haar measure on the data Fock space. For each target state, we perform $10$ independent optimization runs starting from different Haar-random initial unitaries $U'_0$. For each target, only the run achieving the lowest final infidelity is retained. Each run consists of $T=300$ iterations. We use an exponentially decaying learning rate $\eta_t$ to ensure fast initial descent and stable convergence in the late stages of the optimization.

Figure~\ref{fig:opt_infidelity} illustrates the optimization dynamics for different resource configurations in low-dimensional settings, with $n=3$ output photons and $m=\{3,4\}$ output modes. We plot the evolution of the infidelity $\mathcal{L}$ as a function of the optimization steps. The dashed lines represent the geometric mean of the best infidelity across the $100$ target states, while the shaded regions indicate the geometric standard deviation, capturing the performance variability across the Hilbert space.

The numerical results highlight a clear divergence in the convergence regimes. When the auxiliary resources are insufficient to saturate the Jacobian rank (e.g., standard PLO or HLO with minimal auxiliary resources), the infidelity quickly reaches a plateau. This saturation suggests that the target state lies at a finite distance from the restricted accessible manifold $\mathcal{A}$. In contrast, when the auxiliary configuration $\ket{\bm{h}}$ is chosen such that the Jacobian rank attains the maximal value $2D-1$ (highlighted in green), the loss exhibits a sustained descent, reaching values orders of magnitude lower within the same optimization budget. In the cases where the rank is maximized, the infidelity reaches values several orders of magnitude lower than the PLO baseline within the allocated optimization budget. The absence of an early plateau in these configurations provides strong numerical evidence in support of the Global Reachability Conjecture, suggesting that saturating the Jacobian rank may indeed be sufficient to render the accessible set dense in the target Hilbert space.

\section{Numerical exploration of increased expressivity through frame potentials}\label{app: frame potential}

In this appendix, we complement our optimization-based analysis of global reachability by investigating the global expressivity of HLO architectures from an ensemble perspective. Specifically, we evaluate how closely the ensemble of generated heralded states mimics a truly uniform distribution over the state space. 

Consider the ensemble of normalized output states $\ket{\psi(U',\bm{s'},\bm{h})}$ obtained by drawing $U$ from the Haar measure on $\mathcal{U}(m')$ (with fixed input and heralding patterns $\bm{s'},\bm{h}$). Denote by $\mu = \mu_{\bm{s'},\bm{h}}$ the induced probability measure on the complex projective space $\mathbb{CP}^{D-1}$ (the data sphere with points identified if they differ by a complex global phase). As the resources $\bm{s'}$ and $\bm{h}$ scale, one may expect that the measure $\mu_{\bm{s'},\bm{h}}$ gets ``closer" to the uniform measure $\mu_{\rm unif}$ on $\mathbb{CP}^{D-1}$. A common way to quantify how a measure $\mu$ on $\mathbb{CP}^{D-1}$  resembles the uniform measure, is by comparing their so-called \textit{frame potentials}. We now recall the necessary notions regarding frame potentials, and refer to \cite{Ambainis2007,Mele2024} for more background.
A measure $\mu$ on $\mathbb{CP}^{D-1}$ is said to be a (complex projective) \textit{$t$-design} if all of its moments up to order $t$ coincide with those of $\mu_{\rm unif}$.
The \textit{$t$-th frame potential} of a measure $\mu$ on $\mathbb{CP}^{D-1}$ is defined as the quantity
\begin{equation}
F_t^{(\mu)} = \mathbb{E}_{\psi_1,\psi_2 \sim \mu} \left[ \left| \bra{\psi_1}\ket{\psi_2} \right|^{2t} \right].
\end{equation}
It holds that (i) $F_t^{(\mu)} \geq F_t^{(\mu_{\rm unif})}$, (ii) 
there is equality if and only if  $\mu$ is a $t$-design, (iii) and that 
\begin{equation}\label{eq: frame-potential-uniform}
F_t^{(\mu_{\rm unif})} = {D + t -1 \choose t}^{-1}.
\end{equation}
Therefore, the difference
\begin{equation}
\Delta F_t^{(\mu)} = F_t^{(\mu)} - F_t^{(\mu_{\rm unif})}
\end{equation}
serves as a natural quantifier of a measure $\mu$'s closeness to being a $t$-design.
This metric has been explored as a measure of expressivity of variational quantum circuits \cite{Sim-ExpressibilityEntangling-2019,Nakaji2021expressibilityof}.

In Figure~\ref{fig:frame-potentials-relative-difference}, we report numerical estimations of the relative frame potential differences $\Delta F_t^{(\mu)}/F_t^{(\mu_{\rm unif})}$, for the same PLO and HLO configurations as those studied in Figure \ref{fig:opt_infidelity}. Note that, in both PLO settings considered (gray curves), from $t\geq 10$ onwards the frame potentials differ (in relative difference) from the uniform values by more than an order of magnitude.
On the other hand, for both HLO settings considered that fulfill the local controllability criterion (green curves), we observe that the frame potentials can get closer to uniform (in relative difference) by several orders of magnitude; for instance, they remain below an order of magnitude throughout ($t\leq 24$).
The specific amounts, though, appear to depend on the configurations: in the $(n=3, m=3)$ setting (left), the relative difference remains below $1$ throughout ($t\leq 24$), while in the $(n=3, m=4)$ setting (right) this only holds for $t\leq 9$.
It is interesting that for some ``weaker" heralding resources, i.e. settings where our local controllability criterion is not fulfilled (red and yellow curves), the frame potentials still get closer to uniform (compared to without heralding), sometimes by more than an order of magnitude (in relative difference). This suggests that in settings with experimentally limited ancillary/heralding resources, which are insufficient to achieve local controllability of the data sphere, HLO may still offer a considerable increase of expressivity, this time measured by the output state ensemble's closeness to being a $t$-design.

\end{document}